\def \Z {\mathbb Z}
\def \R {\mathbb R}
\def \C {\mathbb C}
\def \N {\mathbb N}
\def \P {\mathbb P}
\def \E {\mathbb E}
\def\dd{\mathrm d}
\def\ee{\mathrm e}
\def\ii{\mathrm i}
\newcommand{\SpN}{\mathrm{Sp}_{\mathrm{N}}(\R)\,}
\newcommand{\spN}{\mathfrak{sp}_{\mathrm{N}}(\R)\,}
\newcommand{\supp}{\mathrm{supp}\,}
\newcommand{\Ho}{H(\omega)}
\newcommand{\Hl}{H_{\ell}(\omega)}
\newcommand{\HL}{H^{(L)}(\omega)}
\newcommand{\HLl}{H^{(L)}_{\ell}(\omega)}
\newcommand{\LpSI}{$p$-contracting and $L_p$-strongly irreducible, for every $p\in \{1,\ldots,N\}$}
\newcommand{\aN}{\{1,\ldots,N\}}
\newcommand{\dlO}{d_{\log\, \mathcal{O}}}
\newcommand{\omO}{\omega^{(0)}}
\newtheorem{thm}{Theorem}
\newtheorem{prop}{Proposition}
\newtheorem{cor}{Corollary}
\newtheorem{defi}{Definition}
\newtheorem{lem}{Lemma}
\begin{document}
%-------------------------------------------------------------%
       
%-------------------------------------------------------------%
\title{Localization for a matrix-valued Anderson model}
\author{Hakim Boumaza} 
\email{boumaza@math.keio.ac.jp}
%\runningtitle{}
%\runningauthor{}
\address{Keio University, Department of Mathematics\\
Hiyoshi 3-14-1, Kohoku-ku 223-8522\\
Yokohama, Japan\\}
\thanks{The author is supported by JSPS Grant P07728}
%\date{\today}
%-------------------------------------------------------------%
\begin{abstract}
We study localization properties for a class of one-dimensional, matrix-valued, continuous, random Schr\"odinger operators, acting on $L^2(\R)\otimes \C^N$, for  arbitrary $N\geq 1$. We prove that, under suitable assumptions on the F\"urstenberg group of these operators, valid on an interval $I\subset \R$, they exhibit localization properties on $I$, both in the spectral and dynamical sense. After looking at the regularity properties of the Lyapunov exponents and of the integrated density of states, we prove a Wegner estimate and apply a multiscale analysis scheme to prove localization for these operators. We also study an example in this class of operators, for which we can prove the required assumptions on the F\"urstenberg group. This group being the one generated by the transfer matrices, we can use, to prove these assumptions, an algebraic result on generating dense Lie subgroups in semisimple real connected Lie groups, due to Breuillard and Gelander. The algebraic methods used here allow us to handle with singular distributions of the random parameters.

\end{abstract}

\keywords{}

%\classification{Mathematics Subject Classification 2000}{47B80, 37H15}
%-------------------------------------------------------------%
\maketitle
%-------------------------------------------------------------%
%:1.Introduction
%-------------------------------------------------------------%
\section{Introduction : models and results}\label{sec_intro}

In this paper, we will discuss localization properties of continuous matrix-valued Anderson models of the form : 
\begin{equation}\label{model_H}
\Ho = -\frac{\dd^2}{\dd x^2}\otimes I_{\mathrm{N}} + \sum_{n\in \Z} V_{\omega}^{(n)} (x-\ell n),
\end{equation}
\noindent acting on $L^2(\R)\otimes \C^N$, where $N\geq 1$ is an integer, $I_{\mathrm{N}}$ is the identity matrix  of order $N$ and $\ell>0$ is a real number. Let $(\Omega,\mathcal{A},\mathsf{P})$ be a complete probability space and let $\omega \in \Omega$. For every $n\in \Z$, the functions $x\mapsto V_{\omega}^{(n)} (x)$ are symmetric matrix-valued functions, supported on $[0,\ell]$ and bounded uniformely on $x$, $n$ and $\omega$. The sequence $(V_{\omega}^{(n)})_{n\in \Z}$ is a sequence of independent and identically distributed (\emph{i.i.d}) random variables on $\Omega$. We also assume that the potential $x\mapsto \sum_{n\in \Z} V_{\omega}^{(n)} (x-\ell n)$ is such that the operator $\Ho$ is $\Z$-ergodic. 

\noindent As a bounded perturbation of $-\frac{\dd^2}{\dd x^2}\otimes I_{\mathrm{N}}$, the operator $\Ho$ is self-adjoint on the Sobolev space  $H^2(\R)\otimes \C^N$ and thus, for every $\omega\in \Omega$, the spectrum of $\Ho$ is included in $\R$. 

\noindent Due to the hypothesis of $\Z$-ergodicity, there exists $\Sigma \subset \R$ such that, for $\mathsf{P}$-almost every $\omega \in \Omega$, $\Sigma=\sigma(\Ho)$. There also exist $\Sigma_{\mathrm{pp}}$, $\Sigma_{\mathrm{ac}}$ and $\Sigma_{\mathrm{sc}}$, subsets of $\R$, such that, for $\mathsf{P}$-almost every $\omega \in \Omega$, $\Sigma_{\mathrm{pp}}=\sigma_{\mathrm{pp}}(\Ho)$, $\Sigma_{\mathrm{ac}}=\sigma_{\mathrm{ac}}(\Ho)$ and $\Sigma_{\mathrm{sc}}=\sigma_{\mathrm{sc}}(\Ho)$.
\vskip 1mm

\noindent We will show that under suitable assumptions on the F\"urstenberg group of $\Ho$ (see Definition \ref{def_furstenberg}), this operator will exhibit localization properties on a certain interval of $\R$. These assumptions are not satisfied for every operators of the form (\ref{model_H}), but we will verify them for the following operator :
\begin{equation}\label{model_Hl}
\Hl = -\frac{\dd^2}{\dd x^2}\otimes I_{\mathrm{N}} + V_0 + \sum_{n\in \Z}  \left(
\begin{smallmatrix}
c_1 \omega_{1}^{(n)} \mathbf{1}_{[0,\ell]}(x-\ell n) & & 0\\ 
 & \ddots &  \\
0 & & c_N \omega_{N}^{(n)} \mathbf{1}_{[0,\ell]}(x-\ell n)\\ 
\end{smallmatrix}\right),
\end{equation}
\noindent acting on $L^2(\R)\otimes \C^N$. The real number $\ell >0$ represents the length of the range of the random interactions. The constants $c_1,\ldots, c_N$ are non-zero real numbers and $V_0$ is the multiplication operator by the tridiagonal matrix $V_0$ having a null diagonal and coefficients on the upper and lower diagonals all equal to $1$.

\noindent For every $i\in \aN$, the $(\omega_i^{(n)})_{n\in \Z}$ are sequences of \emph{i.i.d.} random variables on a complete probability space $(\widetilde{\Omega},\widetilde{\mathcal{A}},\widetilde{\mathsf{P}})$, of common law $\nu$ such that $\{0,1\} \subset \supp \nu$ and $\supp \nu$ is bounded. The random parameter $\omega$ is an element of the product space 
$$(\Omega,\mathcal{A},\mathsf{P})=\left(\otimes_{n\in \Z}\widetilde{\Omega}^{\otimes N},\otimes_{n\in \Z}\widetilde{\mathcal{A}}^{\otimes N}, \otimes_{n\in \Z}\widetilde{\mathsf{P}}^{\otimes N}\right)$$
and we also set, for every $n\in \Z$, $\omega^{(n)}=(\omega_1^{(n)},\ldots,\omega_N^{(n)})$,
of law $\nu^{\otimes N}$. The expectancy against $\mathsf{P}$ will be denoted by $\E(.)$.
\vskip 1mm

\noindent  The model (\ref{model_Hl}) is a particular case of (\ref{model_H}). Indeed, $\Hl$ is $\Z$-ergodic and the potential part of $\Hl$ is uniformly bounded on $x$, $n$ and $\omega$ because of the boundedness of $\supp \nu$.
\vskip 5mm

\noindent Following \cite{K07}, we give the definitions of localization properties for $\Ho$, from the spectral and the dynamical point of views. For $x\in \R$, we denote by $\mathbf{1}_{x}$ the characteristic function of the interval of length $2\ell$ centered at $x$. We also write $<x>=\sqrt{1+|x|^2}$ and we denote by $E_{\omega}(.)$ the spectral projection of $\Ho$. The Hilbert-Schmidt norm is written as $||\ ||_2$ while the $L^2$-norm is written as $||\ ||$.

\begin{defi}\label{def_localization}
Let $I\subset \R$ be an open interval. We say that :
\begin{itemize}
\item[(i)] $\Ho$ exhibits \emph{exponential localization} (EL) in $I$, if it has pure point spectrum in $I$ (\emph{i.e.}, $\Sigma\cap I=\Sigma_{\mathrm{pp}}\cap I$ and $\Sigma_{\mathrm{ac}}\cap I=\Sigma_{\mathrm{sc}}\cap I=\emptyset$) and, for $\mathsf{P}$-almost every $\omega$, the eigenfunctions of $\Ho$ with eigenvalues in $I$ decay exponentially in the $L^2$-sense (\emph{i.e.}, there exist $C$ and $m>0$ such that $||\mathbf{1}_{x} \psi || \leq C\ee^{-m|x|}$ for $\psi$ eigenfunction of $\Ho$) ;
\item[(ii)] $\Ho$ exhibits \emph{strong dynamical localization} (SDL) in $I$, if $\Sigma \cap I \neq \emptyset$ and, for each compact interval $\tilde{I} \subset I$ and $\psi \in L^2(\R)\otimes C^N$ with compact support, we have,
$$\forall n\geq 0,\ \E\left( \sup_{t\in \R} \left|\left| <x>^{\frac{n}{2}} E_{\omega} (\tilde{I}) \ee^{-\ii t \Ho} \psi\right|\right|^2 \right) <\infty \ ;$$
\item[(iii)] $\Ho$ exhibits \emph{strong sub-exponential HS-kernel decay} (SSEHSKD) in $I$ if $\Sigma \cap I \neq \emptyset$ and, for each compact interval $\tilde{I} \subset I$ and $0<\zeta <1$, there is a finite constant $C_{\tilde{I},\zeta}$ such that,
$$\forall x,y\in \Z,\ \E\left( \sup_{||f||_{\infty} \leq 1}\left|\left| \mathbf{1}_{x} E_{\omega}(\tilde{I}) f(\Ho) \mathbf{1}_{y} \right|\right|_2^2 \right) \leq C_{\tilde{I},\zeta} \ee^{-|x-y|^{\zeta}},$$
$f$ being a bounded Borel function on $\R$ and $||f||_{\infty}=\sup_{t\in \R} |f(t)|$.
\end{itemize}
\end{defi}

\noindent We also set $\Sigma_{\mathrm{EL}}$, $\Sigma_{\mathrm{SDL}}$ and $\Sigma_{\mathrm{SSEHSKD}}$ as the sets of $E\in \Sigma$ for which there exists an open interval $I$, $E\in I$, such that $\Ho$ exhibits on $I$, (EL), (SDL) and (SSEHSKD) respectively. We have $\Sigma_{\mathrm{SSEHSKD}} \subset \Sigma_{\mathrm{SDL}}$ and we will actually prove (SSEHSKD) for $\Hl$ on some interval, which will imply (SDL) on the same interval. We quote (SDL) property as it has a more natural interpretation than (SSEHSKD) in terms of control of the moments of the wave packets of $\Ho$.
\vskip 5mm

\noindent We are now ready to give the statement of our main results. For $E\in \R$, let $G(E)$ be the F\"urstenberg group associated to $\Ho$ (see Definition \ref{def_furstenberg}). For the definitions of $p$-contractivity and $L_p$-strong irreducibility, see Definition \ref{def_pcon_LpSI}.

\begin{thm}\label{thm_localization}
Let $I\subset \R$ be a compact interval such that $\Sigma\cap I\neq \emptyset$ and let $\tilde{I}$ be an open interval, $I\subset \tilde{I}$, such that, for every $E\in \tilde{I}$, $G(E)$ is \LpSI. Then, $\Ho$ exhibits (EL), (SDL) and (SSEHSKD) in $I$.
\end{thm}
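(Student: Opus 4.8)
The plan is to follow the standard route for proving localization for one-dimensional random Schr\"odinger operators via the \emph{multiscale analysis} (MSA), adapted to the matrix-valued, continuous setting as in the works of Klein, Stollmann, and Kirsch, but with the F\"urstenberg-group hypotheses playing the role that positivity/separation of Lyapunov exponents plays in the scalar case. The two analytic inputs that feed the MSA machine are: (a) a \emph{Wegner estimate} on $\tilde I$, giving H\"older (or better) continuity of the integrated density of states and an a priori bound on the probability that a finite-volume restriction $\HLl$ has spectrum near a fixed energy; and (b) an \emph{initial length-scale estimate} (the ``ILSE'' or probabilistic initial scale input) on $\tilde I$, asserting that at some large scale $L_0$ the finite-volume resolvent is, with high probability, exponentially small across the box. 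Once (a) and (b) hold on the open interval $\tilde I\supset I$, the abstract MSA theorem (in the multi-dimensional-in-the-internal-space but one-dimensional-in-$x$ form, e.g. the bootstrap MSA of Germinet--Klein) produces, for the compact subinterval $I$, the conclusions (EL), (SDL) and (SSEHSKD) simultaneously; in particular (SSEHSKD) is exactly the output of the bootstrap MSA and it implies (SDL), while spectral exponential localization (EL) follows from the MSA conclusion together with a Shnol-type argument ruling out continuous spectrum.

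First I would set up the \emph{transfer-matrix formalism}: rewrite the eigenvalue equation $\Ho\psi = E\psi$ as a first-order system on $\R^{2N}$, identify the transfer matrices $T^E_n(\omega)$ over one period $\ell$, note that they lie in $\SpN$ because $V_0$ and the $V_\omega^{(n)}$ are symmetric, and recall that the group generated (in the appropriate closure sense) by $\{T^E_n(\omega)\}$ is precisely the F\"urstenberg group $G(E)$ of Definition~\ref{def_furstenberg}. Then I would invoke the hypothesis that $G(E)$ is \LpSI{} for all $E\in\tilde I$ to apply the Furstenberg-type theory for $\SpN$-valued cocycles (contractivity + $L_p$-strong irreducibility): this yields that the Lyapunov exponents $\gamma_1(E)>\gamma_2(E)>\cdots>\gamma_N(E)>0$ are all positive and simple on $\tilde I$, and, crucially, that they are \emph{locally H\"older continuous} in $E$ on $\tilde I$ (via the matrix-valued analogue of the Le Page / Craig--Simon regularity results, using uniform contractivity of the action on the relevant Grassmannians and the $L^2_{\mathrm{loc}}$-boundedness hypothesis on the potentials, which also gives the analyticity/continuity in $E$ of the single-step transfer matrices). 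Positivity of $\gamma_N$ is what makes the one-dimensional ILSE free: combined with the Thouless-type formula relating $\sum_p \gamma_p$ to the IDS and with large-deviation estimates for the norm of products of the random $\SpN$ matrices (again a consequence of $p$-contractivity and $L_p$-strong irreducibility, uniformly for $E$ in a compact subset of $\tilde I$), one obtains the initial length-scale estimate (b) at every energy of $\tilde I$ with a single fixed scale.

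Next I would prove the \emph{Wegner estimate} (a) on $\tilde I$. For the diagonal model $\Hl$ the randomness enters linearly through the i.i.d.\ coefficients $\omega_i^{(n)}$, but the law $\nu$ is only assumed to have $\{0,1\}\subset\supp\nu$ and bounded support --- in particular it may be singular, so one cannot use the classical spectral-averaging (Kotani--Simon) argument that needs an absolutely continuous component. Here the key trick is again \emph{algebraic}: one exploits the F\"urstenberg-group assumption (the Breuillard--Gelander dense-subgroup result) to show that finitely many transfer matrices already generate a group whose action on the Grassmannians has no invariant measure supported on a proper subvariety, which upgrades to a quantitative non-concentration (``no sticky energies'') bound and thereby to a Wegner estimate valid even for singular $\nu$; concretely one shows $\E(\mathrm{tr}\, E_\omega^{(L)}(\tilde I\cap[E-\eta,E+\eta]))\le C\,\eta^\alpha L$ for some $\alpha>0$. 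This Wegner step, and specifically making it work with a singular distribution $\nu$ via the F\"urstenberg-group/algebraic input rather than spectral averaging, is the step I expect to be the main obstacle; everything downstream (feeding (a) and (b) into the bootstrap MSA, then deducing (EL), (SDL), (SSEHSKD) on the compact interval $I$, and checking that the MSA hypotheses on the geometric setup, the generalized eigenfunction expansion, and the Simon--Wolff / Shnol criteria hold in the $L^2(\R)\otimes\C^N$ setting) is by now standard and follows the blueprint of \cite{K07} with only notational changes for the internal $\C^N$ degree of freedom.
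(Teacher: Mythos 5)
Your overall architecture matches the paper's: transfer matrices in \SpN, Fürstenberg theory under the contractivity/strong-irreducibility hypothesis giving simple, positive, Hölder-continuous Lyapunov exponents on $\tilde I$, a Wegner estimate plus an initial length scale estimate fed into the Germinet--Klein bootstrap multiscale analysis, and the passage to (EL), (SDL), (SSEHSKD) via \cite{K07} after checking (IAD), (SLI), (NE), (SGEE), (EDI). The genuine gap is in the step you yourself flag as the main obstacle, the Wegner estimate, where the mechanism you propose would not work. First, the Breuillard--Gelander theorem is not available under the hypotheses of Theorem \ref{thm_localization}: in the paper it is used only to verify, for the concrete model $\Hl$, that $G(E)=\SpN$ (Proposition \ref{prop_GE_SpN}), hence that the hypothesis of the general theorem holds; no density-of-subgroups input enters the proof of the Wegner bound itself, which must be run assuming only that $G(E)$ is \LpSI. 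Second, the concrete bound you aim for, $\E\left(\mathrm{tr}\, E_\omega^{(L)}([E-\eta,E+\eta])\right)\le C\,\eta^{\alpha}\,L$ for all small $\eta$, is the classical averaged Wegner estimate, which is exactly what one cannot reach for a singular law $\nu$ by these means (there is no spectral averaging, and finite-volume boundary effects contribute terms of order one that are not controlled by $\eta$). What the paper proves instead is a Carmona--Klein--Martinelli/Damanik--Sims--Stolz type estimate, Theorem \ref{thm_wegner}: $\mathsf{P}\bigl(d(E,\sigma(\HL))\le \ee^{-\kappa(\ell L)^{\beta}}\bigr)\le \ee^{-\xi(\ell L)^{\beta}}$, i.e. smallness only at the exponentially small scale $\ee^{-\kappa(\ell L)^{\beta}}$ and with exponentially small probability; this weaker form is precisely what the bootstrap MSA of Germinet--Klein (Theorem \ref{thm_MSA_GK}) accepts.

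The chain of ideas missing from your plan, and supplied by the paper, is the following. Hölder continuity of $E\mapsto\gamma_1(E)+\cdots+\gamma_N(E)$ on $\tilde I$ (a Le Page-type result valid under the \LpSI{} hypothesis) is transferred to the integrated density of states through the Thouless formula of \cite{boumazarmp}. Then Proposition \ref{prop_wegner} bounds by $C\,\ell L\,\varepsilon^{\alpha}$ the probability that $\HL$ has an eigenvalue in $(E-\varepsilon,E+\varepsilon)$ whose normalized eigenfunction has derivatives of size at most $\varepsilon$ at $\pm\ell L$; the proof uses a law-of-large-numbers argument over translated boxes together with Temple's inequality and the IDS regularity, and never needs an averaged trace bound. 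Finally, this is converted into (\ref{eq_thm_wegner}) by combining it with negative-moment/large-deviation bounds on products of transfer matrices (Lemma \ref{lem_wegner_U}), the Lipschitz dependence of $T_{\omega^{(n)}}(E)$ on $E$, and Lemma \ref{lem_estim2}: on the complementary event where the transfer-matrix norms grow as they should, any eigenvalue exponentially close to $E$ forces exponentially small boundary derivatives of its eigenfunction, an event of small probability by Proposition \ref{prop_wegner}. Without this (or an equivalent) argument the MSA input (W) is not established, so as written your proof does not close; the rest of your outline (large deviations giving the initial length scale estimate, and the standard verification of the geometric and eigenfunction-expansion hypotheses) is consistent with the paper.
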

\vskip 2mm

\noindent Before applying this theorem to the operator $\Hl$, we need to introduce some notations. Let $\SpN$ denote the group of $2N\times 2N$ real symplectic matrices and let $\mathcal{O}$ be the neighborhood of $I_{2\mathrm{N}}$ in $\SpN$ given by Theorem \ref{thm_breuillard} applied to $G=\SpN$. We set : 
$$\dlO=\max\{ R>0\ |\ B(0,R)\subset \log\,\mathcal{O} \},$$
where $B(0,R)$ is the open ball, centered on $0$ and of radius $R>0$, for the topology induced on the Lie algebra $\spN$ of $\SpN$ by the matrix norm induced by the euclidean norm on $R^{2N}$. 

\noindent For $\omO=(\omega_1^{(0)},\ldots,\omega_N^{(0)})\in \{0,1\}^N$, let 
$$M_{\omO} = V_0 + \mathrm{diag}(c_1 \omega_1^{(0)},\ldots,c_N \omega_N^{(0)}).$$
As $M_{\omO}$ is a real symmetric matrix, it has $\lambda_1^{\omO},\ldots ,\lambda_N^{\omO}$ as real eigenvalues. We set,
\begin{equation}\label{eq_lambda_minmax}
\lambda_{\mathrm{min}}=\min_{\omO\in \{ 0,1\}^N} \min_{1\leq i\leq N} \lambda_i^{\omO},\qquad \lambda_{\mathrm{max}}=\max_{\omO\in \{ 0,1\}^N} \max_{1\leq i\leq N} \lambda_i^{\omO}
\end{equation}
and $\delta_0=\frac{\lambda_{\mathrm{max}}-\lambda_{\mathrm{min}}}{2}$. We also set 
\begin{equation}\label{eq_def_lC}
\ell_C:=\ell_C(N)=\min \left( 1, \frac{\dlO}{\delta_0}\right)
\end{equation}
and, for every $\ell<\ell_C$,
\begin{equation}\label{eq_def_IlN}
I(\ell,N)=\left[ \lambda_{\mathrm{max}} - \frac{\dlO}{\ell},\lambda_{\mathrm{min}} + \frac{\dlO}{\ell} \right].
\end{equation}
\vskip 3mm

\noindent Applying Theorem \ref{thm_localization} to the operator $\Hl$, we obtain the following results.
\vskip 1mm

\begin{thm}\label{thm_localization_Hl}
\ \begin{itemize}
\item[(i)] Assume that $\ell <\ell_C$ and let $I\subset I(\ell,N)$ be an open interval such that $\Sigma \cap I \neq \emptyset$. Then $\Hl$ exhibits (EL), (SDL) and (SSEHSKD) in $I$.
\item[(ii)] Assume that $\ell=1$ and $N=2$ in (\ref{model_Hl}). There exists a discrete set $\mathcal{S} \subset \R$ such that, for every compact interval $I\subset (2,+\infty)\setminus \mathcal{S}$ with $\Sigma \cap I \neq \emptyset$, $H_{1}(\omega)$ exhibits (EL), (SDL) and (SSEHSKD) in $I$. 
\end{itemize}
\end{thm}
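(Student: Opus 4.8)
The plan is to derive both statements from Theorem \ref{thm_localization}, so the whole task reduces to verifying that, for the relevant energies $E$, the Fürstenberg group $G(E)$ of $\Hl$ is \LpSI. First I would compute the transfer matrices of $\Hl$ across one period $[\ell n,\ell(n+1)]$. Since $V_0$ is a fixed tridiagonal matrix and the randomness enters through the diagonal matrix $\mathrm{diag}(c_1\omega_1^{(n)},\ldots,c_N\omega_N^{(n)})$ with $\{0,1\}\subset\supp\nu$, the transfer matrix over a period, at energy $E$, is obtained by exponentiating (up to the usual symplectic first-order-system reformulation) the $2N\times 2N$ Hamiltonian matrix built from $M_{\omega^{(n)}}-E$; because of the free Laplacian part these transfer matrices lie in $\SpN$. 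The key point is that, for each configuration $\omO\in\{0,1\}^N$, the corresponding transfer matrix is $\exp$ of an element of $\spN$ whose size is controlled, for $E\in I(\ell,N)$ and $\ell<\ell_C$, by $\delta_0$ and $\ell$ in such a way that it lands inside the neighbourhood $\mathcal{O}$ of $I_{2\mathrm N}$ — this is exactly why $\dlO$, $\delta_0$, $\ell_C$ and $I(\ell,N)$ were defined as they were. So the transfer matrices associated to the $2^N$ deterministic configurations $\omO\in\{0,1\}^N$ all sit in $\mathcal{O}$.

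Next I would invoke Theorem \ref{thm_breuillard} (Breuillard--Gelander) with $G=\SpN$: it guarantees that a finite subset of $\mathcal{O}$ that generates a dense subgroup of the Lie algebra level (i.e.\ whose logarithms generate $\spN$ as a Lie algebra) actually generates a dense subgroup of $\SpN$. Thus the crux becomes the algebraic verification that the logarithms of the transfer matrices for $\omO$ ranging over $\{0,1\}^N$ generate $\spN$ as a Lie algebra. For this one expands the matrix exponential / uses the $E$-dependence: differences and commutators of these $\spN$-elements, together with the energy parameter, should exhaust $\spN$. Here the tridiagonal coupling $V_0$ is essential — it is what makes the diagonal perturbations ``talk to each other'' under bracketing so that no proper Lie subalgebra is preserved. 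Once $G(E)$ is shown to be dense in $\SpN$ for every $E$ in the relevant interval, $p$-contractivity and $L_p$-strong irreducibility for all $p\in\aN$ follow from the corresponding properties of $\SpN$ itself (density transfers these properties, as recalled in the discussion around Definition \ref{def_pcon_LpSI}). Applying Theorem \ref{thm_localization} on the interval $I$ then yields (EL), (SDL) and (SSEHSKD), proving (i).

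For part (ii), with $\ell=1$ and $N=2$ the size constraint $\ell<\ell_C$ and the restriction to $I(\ell,N)$ are dropped, so one cannot keep the transfer matrices inside $\mathcal{O}$; instead the density of $G(E)$ in $\SpD$ must be established directly for energies $E>2$. I would again reduce to a Lie-algebra generation statement: writing the four transfer matrices $T_{\omO}(E)$, $\omO\in\{0,1\}^2$, the condition that $\langle T_{\omO}(E):\omO\rangle$ fails to be dense in $\SpD$ (equivalently, is contained in a proper closed subgroup) should translate into finitely many analytic equations in $E$. Since these functions of $E$ are real-analytic and not identically zero (one checks non-vanishing at one convenient energy, e.g.\ using the explicit $2N=4$-dimensional description of $\SpD$ and the fact that $E>2$ keeps us away from the spectrum of the sub-Laplacian-type pieces so the hyperbolic/elliptic structure is generic), their common zero set is discrete; this discrete set is $\mathcal S$. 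For $E\in(2,+\infty)\setminus\mathcal S$, $G(E)$ is dense in $\SpD$, hence $2$-contracting and $L_p$-strongly irreducible for $p\in\{1,2\}$, and Theorem \ref{thm_localization} applies.

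The main obstacle is the Lie-algebra generation step in both parts: showing that the logarithms of the transfer matrices (resp.\ the transfer matrices themselves for (ii)) are not all contained in a proper Lie subalgebra (resp.\ proper closed subgroup) of $\spN$ (resp.\ $\SpD$). This requires a genuine computation with the specific structure of $M_{\omO}=V_0+\mathrm{diag}(c_1\omega_1^{(0)},\ldots,c_N\omega_N^{(0)})$ — exploiting the nonzero couplings $c_i$ and the tridiagonal $V_0$ — and, for (ii), an explicit check that the resulting analytic obstruction functions are not identically zero on $(2,+\infty)$, which is delicate precisely because $\SpD\cong\SpD$ has several conjugacy classes of proper subgroups that one must rule out one by one.
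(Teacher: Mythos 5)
Your outline for part $(i)$ follows the paper's own route: write $T_{\omO}(E)=\exp(\ell X_{\omO}(E))$, check that for $\ell<\ell_C$ and $E\in I(\ell,N)$ one has $\ell\,||X_{\omO}(E)||\leq \dlO$ so that all $2^N$ matrices $T_{\omO}(E)$ lie in the Breuillard--Gelander neighborhood $\mathcal{O}$, invoke Theorem \ref{thm_breuillard}, conclude $G(E)=\SpN$ (Proposition \ref{prop_GE_SpN}), hence \LpSI, and then apply Theorem \ref{thm_localization}. However, the decisive step --- that $\mathrm{Lie}\{X_{\omO}(E)\ |\ \omO\in\{0,1\}^N\}=\spN$ --- is exactly what you label ``the main obstacle'' and never carry out. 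In the paper this is Lemma \ref{lem_a_spN}, proved by an explicit bracket computation (producing the $Z_{ii}$ and $Y_{ii}$ from differences and commutators of the $X_{\omO}(E)$, then the matrix $J$, then the $X_{ii}$, then inductively $X_{i,i+1},Y_{i,i+1}$, together with a preliminary reduction showing these near-diagonal elements generate $\spN$). Without this computation the density of $G(E)$ in $\SpN$, and therefore all of part $(i)$, remains unproved; note also that no ``$E$-dependence'' is needed for it --- the generation holds at each fixed $E$.

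For part $(ii)$ you genuinely diverge from the paper, which simply quotes \cite[Proposition 2]{boumazampag}: there, for $N=2$, $\ell=1$, suitable powers $(T_{\omO}(E))^{m_{\omO}(E)}$ are brought into $\mathcal{O}$ by simultaneous diophantine approximation, and the discrete set $\mathcal{S}$ arises from the determination of the logarithm of these powers. Your proposed substitute --- that failure of density of $\langle T_{\omO}(E)\rangle$ in $\SpD$ is encoded by finitely many real-analytic equations in $E$, not identically zero, whence a discrete exceptional set --- is precisely the hard content and is asserted rather than established: you neither produce such analytic obstruction functions (this would require, e.g., the Gol'dsheid--Margulis reduction to Zariski density together with an effective elimination of the proper algebraic subgroups of $\SpD$, or some equivalent mechanism) nor verify non-vanishing at a single energy $E>2$. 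As written, part $(ii)$ is a program, not a proof; one must either reprove or cite the result of \cite{boumazampag} to obtain the discrete set $\mathcal{S}$ and the \LpSI{} property on $(2,+\infty)\setminus\mathcal{S}$ before Theorem \ref{thm_localization} can be applied.
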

\vskip 2mm

\noindent We remark that in point $(i)$ of Theorem \ref{thm_localization_Hl}, as the length of $I(\ell,N)$ tends to $+\infty$ when $\ell$ tends to $0^+$, taking $\ell$ small enough ensure that $\Sigma \cap I(\ell,N) \neq \emptyset$ and, moreover, we can always find a non-trivial open interval $I\subset I(\ell,N)$ such that $\Sigma \cap I \neq \emptyset$.
\vskip 2mm

\noindent To prove localization results as Theorem \ref{thm_localization} and Theorem \ref{thm_localization_Hl} for one-dimensional operators such as (\ref{model_H}) and (\ref{model_Hl}), we can follow this plan : 

\begin{enumerate}[1.]
\item We prove that the Lyapunov exponents of $\Ho$ are all distinct and positive.
\item We prove the H\"older regularity of these exponents.
\item We deduce the same H\"older regularity for the integrated density of states of $\Ho$.
\item With this regularity of the integrated density of states, we prove a Wegner estimate.
\item We apply a multiscale analysis scheme.
\end{enumerate}
\vskip 2mm

\noindent According to this plan, our first result for $\Hl$ is the following. For the definitions of $\mu_E$ and $L_p$, see Section \ref{sec_lyap_prop}.

\begin{thm}\label{thm_lyap_N}
Assume that $\ell <\ell_C$. Then,
\begin{itemize}
\item[(i)] the $N$ positive Lyapunov exponents of $\Hl$, $\gamma_1(E),\ldots,\gamma_N(E)$, verify
\begin{equation}\label{eq_lyap_pos}
\forall E\in I(\ell,N),\ \gamma_1(E)> \cdots > \gamma_N(E)>0.
\end{equation}
Therefore, $\Hl$ has no absolutely continuous spectrum in $I(\ell,N)$, \emph{i.e.}, $\Sigma_{\mathrm{ac}} \cap I(\ell,N) =\emptyset$.
\item[(ii)] For every $p\in \aN$, there exists a unique $\mu_E$-invariant measure $\nu_{p,E}$ on $\P(L_p)=\{\bar{x} \in \P(\wedge^p \R^{2N})\ |\ x\in L_p \}$ such that, for every $E\in I(\ell,N)$,
\begin{equation}\label{eq_lyap_rep_int}
\sum_{i=1}^{p} \gamma_i(E)= \int_{\SpN \times \P(L_p)} \log \frac{||(\wedge^p M)x||}{||x||} \dd \mu_E(M) \dd \nu_{p,E}(\bar{x}).
\end{equation}
\item[(iii)] For every $i\in \aN$, $E\mapsto \gamma_i(E)$ is H\"older continuous on $I(\ell,N)$, \emph{i.e.}, there exist $C>0$ and $\alpha >0$ such that,
\begin{equation}\label{eq_lyap_reg}
\forall E,E'\in I(\ell,N),\ |\gamma_i(E)-\gamma_i(E')|\leq C |E-E'|^{\alpha}.
\end{equation}
\end{itemize}
\end{thm}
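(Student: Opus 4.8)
The plan is to reduce Theorem~\ref{thm_lyap_N} to one algebraic statement about the F\"urstenberg group of $\Hl$: \emph{if $\ell<\ell_C$, then $G(E)=\SpN$ for every $E\in I(\ell,N)$ --- in particular $G(E)$ is \LpSI{} --- and this holds locally uniformly in $E$ on $I(\ell,N)$.} Granting this, parts (i) and (ii) are the standard output of the theory of products of i.i.d.\ symplectic matrices: for each $p\in\aN$, $p$-contractivity together with $L_p$-strong irreducibility yields the existence of $\nu_{p,E}$ by Krylov--Bogolyubov on the compact set $\P(L_p)$, its uniqueness, and the F\"urstenberg integral representation~(\ref{eq_lyap_rep_int}) for $\gamma_1(E)+\cdots+\gamma_p(E)$; applying this for $p=1,\ldots,N$ gives, by the Gol'dsheid--Margulis simplicity criterion (in the symplectic form of Bougerol--Lacroix), the strict chain~(\ref{eq_lyap_pos}). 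The statement $\Sigma_{\mathrm{ac}}\cap I(\ell,N)=\emptyset$ then follows from~(\ref{eq_lyap_pos}) by Kotani theory for one-dimensional matrix-valued Schr\"odinger operators: positivity of all $N$ Lyapunov exponents on the whole interval $I(\ell,N)$ precludes absolutely continuous spectrum there.

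The heart of the proof is the verification of that algebraic statement, where Theorem~\ref{thm_breuillard} (Breuillard--Gelander) and the explicit form of $\Hl$ enter. On each cell $[\ell n,\ell(n+1)]$ the potential of $\Hl$ equals the constant symmetric matrix $M_{\omO}=V_0+\mathrm{diag}(c_1\omega_1^{(0)},\ldots,c_N\omega_N^{(0)})$ attached to the realization $\omO$ there, so the transfer matrix at energy $E$ over one period is $T^E_{\omO}=\exp(\ell\,\mathcal{A}^E_{\omO})$ with $\mathcal{A}^E_{\omO}=\left(\begin{smallmatrix}0 & I_N\\ M_{\omO}-EI_N & 0\end{smallmatrix}\right)\in\spN$; in particular $E\mapsto T^E_{\omO}$ is real-analytic. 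Since $\{0,1\}\subset\supp\nu$, the matrices $T^E_{\omO}$ with $\omO\in\{0,1\}^N$ all belong to $G(E)$, and for such $\omO$ the eigenvalues of $M_{\omO}$ lie in $[\lambda_{\mathrm{min}},\lambda_{\mathrm{max}}]$. The definition~(\ref{eq_def_IlN}) of $I(\ell,N)$ together with $\ell<\ell_C\le1$ is tailored exactly so that, for $E\in I(\ell,N)$, each such $\ell\,\mathcal{A}^E_{\omO}$ lies in $\log\mathcal{O}$ --- the only $E$-dependent part of $\mathcal{A}^E_{\omO}$ is the term $-EI_N$ in the lower-left block, and $E\in I(\ell,N)$ keeps $\ell\,\|M_{\omO}-EI_N\|\le\dlO$ --- so that $T^E_{\omO}\in\mathcal{O}$ with $\log T^E_{\omO}=\ell\,\mathcal{A}^E_{\omO}$. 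It remains to check that these logarithms, $\omO\in\{0,1\}^N$ --- which differ from one another only through the term $\mathrm{diag}(c_1\omega_1^{(0)},\ldots,c_N\omega_N^{(0)})$ in their lower-left block --- generate $\spN$ as a Lie algebra; here the non-vanishing of the couplings $c_1,\ldots,c_N$ and the irreducible tridiagonal form of $V_0$ (its non-zero off-diagonal entries couple all the coordinates) are used to show that these matrices and their iterated brackets span all of $\spN$. By Theorem~\ref{thm_breuillard}, the subgroup of $\SpN$ generated by $\{T^E_{\omO}:\omO\in\{0,1\}^N\}$ is then dense; since it lies in $G(E)$, a closed subgroup of $\SpN$, we obtain $G(E)=\SpN$, and a dense subgroup of $\SpN$ is $p$-contracting and $L_p$-strongly irreducible for every $p\in\aN$. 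Tracking the estimates shows the argument is locally uniform in $E$, and the $\dlO/\delta_0$ factor in~(\ref{eq_def_lC}) ensures that $I(\ell,N)$ is a non-degenerate interval.

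For the H\"older regularity (iii) I would invoke Le Page's theorem on the regularity of the Lyapunov exponents of products of i.i.d.\ symplectic matrices. Its hypotheses hold on $I(\ell,N)$: $\supp\nu$ is bounded, so $M_{\omO}$ ranges over a compact set of symmetric matrices and $E$ over the compact interval $I(\ell,N)$, whence $T^E_{\omO}=\exp(\ell\,\mathcal{A}^E_{\omO})$ is uniformly bounded and all exponential-moment conditions are trivially met; $E\mapsto T^E_{\omO}$ is real-analytic, hence Lipschitz in $E$ uniformly in $\omO$; and, by the previous paragraph, $G(E)$ is \LpSI{} uniformly for $E\in I(\ell,N)$. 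Le Page's theorem then gives, for each $p\in\aN$, local H\"older continuity of $E\mapsto\gamma_1(E)+\cdots+\gamma_p(E)$ on $I(\ell,N)$; covering the compact $I(\ell,N)$ by finitely many such neighborhoods and writing each $\gamma_i$ as the difference of two consecutive partial sums yields~(\ref{eq_lyap_reg}) with a common $C$ and $\alpha$.

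The main obstacle is the algebraic step in the second paragraph: the generators of $G(E)$ one works with must simultaneously be small enough to lie in the Breuillard--Gelander neighborhood $\mathcal{O}$ --- which is exactly what forces $\ell<\ell_C$ and the restriction of $E$ to $I(\ell,N)$ --- and have logarithms that Lie-generate all of $\spN$. The latter is a genuine computation resting on the concrete features of $\Hl$ (the irreducible tridiagonal coupling $V_0$, the non-zero $c_1,\ldots,c_N$, and $\{0,1\}\subset\supp\nu$), and it must be carried out with enough uniformity to hold for every $E\in I(\ell,N)$. Once this input is secured, the remaining ingredients --- the F\"urstenberg/Gol'dsheid--Margulis package for (i) and (ii), Kotani theory for the absence of a.c.\ spectrum, and Le Page's theorem for (iii) --- are classical.
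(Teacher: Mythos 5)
Your route is the same as the paper's: reduce everything to the statement that $G(E)=\SpN$ for $E\in I(\ell,N)$ via Breuillard--Gelander (Theorem \ref{thm_breuillard}), then get (i)--(ii) from the Bougerol--Lacroix/F\"urstenberg--Gol'dsheid--Margulis package, the absence of a.c.\ spectrum from Kotani theory, and (iii) from a Le Page--type regularity theorem (the paper invokes its quantitative version from \cite{boumazarmp}, whose hypotheses are exactly the uniform \LpSI{} property and boundedness/Lipschitz dependence you list). Your reading of the role of $\ell<\ell_C$ and of $I(\ell,N)$ --- shrinking $\ell$ so that $T_{\omO}(E)=\exp(\ell X_{\omO}(E))$ lands in $\mathcal{O}$ with $\log T_{\omO}(E)=\ell X_{\omO}(E)$, using $\|X_{\omO}(E)\|=\max(1,\max_i|\lambda_i^{\omO}-E|)$ --- is also the paper's.

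The one genuine gap is the step you yourself flag as the main obstacle: you assert, but do not prove, that $\mathrm{Lie}\{\ell X_{\omO}(E)\ |\ \omO\in\{0,1\}^N\}=\spN$. This is precisely the new content of the paper (Lemma \ref{lem_a_spN}), and it is not a routine remark about ``non-zero $c_i$ and irreducible tridiagonal $V_0$'': the paper proves it by exhibiting the basis $X_{ij},Y_{ij},Z_{ij}$ of $\spN$, reducing to the generators with $|i-j|\leq 1$ via the commutator relations, and then extracting these generators from the $X_{\omO}(E)$ in six explicit steps (differences of two $X_{\omO}(E)$ give the $Y_{ii}$, their brackets give the $Z_{ii}$, summing $[X_{\omO}(E),Z_{ii}]$ over $i$ and subtracting $X_{\omO}(E)$ produces $J$, and an induction along the off-diagonal yields $Y_{i,i+1},X_{i,i+1}$). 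Note also that in this computation the constants $c_i$ only need to be non-zero to separate the points of $\{0,1\}^N$; the paper's argument would not survive a degenerate $V_0$ without the tridiagonal coupling, so your instinct about where the structure of $\Hl$ enters is right --- but as written your proposal leaves the central lemma of the theorem unestablished. Everything else (Krylov--Bogolyubov existence plus uniqueness of $\nu_{p,E}$, the closedness argument giving $G(E)=\SpN$ from density, the covering argument for a uniform H\"older pair $(C,\alpha)$ on the compact $I(\ell,N)$) is correct and matches the paper's proof.
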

\vskip 3mm

\noindent Points $(i)$ and $(ii)$ will directly follow from the theory of sequences of \emph{i.i.d.} random matrices in $\SpN$ after proving that, for every $E\in I(\ell,N)$, the F\"urstenberg group of $\Hl$ is \LpSI. It is exactly the assumption of Theorem \ref{thm_localization}. Then, applying results of \cite{boumazarmp}, we obtain a regularity result for the integrated density of states $E\mapsto N(E)$.

\begin{thm}\label{thm_ids}
Let $\ell <\ell_C$. Let $I$ be a compact interval included in the interior of $I(\ell,N)$. The integrated density of states of $\Hl$ is H\"older continuous on $I$, \emph{i.e.}, there exist $C>0$ and $\alpha >0$ such that,
\begin{equation}\label{eq_ids_reg}
\forall E,E'\in I,\ |N(E)-N(E')|\leq C |E-E'|^{\alpha}.
\end{equation}
\end{thm}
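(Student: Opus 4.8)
\medskip
\noindent \emph{Sketch of the intended argument.}
The plan is to transfer the H\"older regularity of the Lyapunov exponents, which is already available through Theorem \ref{thm_lyap_N}(iii), to the integrated density of states by means of a matrix-valued Thouless formula, following the deterministic machinery developed in \cite{boumazarmp}. First I would note that the transfer matrices of $\Hl$ at a complex energy $z$ with $\mathrm{Im}\,z>0$ still define a cocycle in $\SpN$, so that the Lyapunov exponents $\gamma_1(z)\geq\cdots\geq\gamma_N(z)\geq 0$ are well defined on the upper half-plane and the function $w(z):=\sum_{i=1}^N\gamma_i(z)=\lim_n \frac1n\,\E\log\|\wedge^N T_n(z,\omega)\|$ is harmonic on $\{\mathrm{Im}\,z>0\}$, bounded from below, and has non-tangential boundary values on $\R$ equal to $\sum_{i=1}^N\gamma_i(E)$ for almost every $E$. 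The matrix-valued Thouless formula (in the Kotani--Simon tradition, as set up in \cite{boumazarmp}) then expresses $w$ as the logarithmic potential of the density of states measure $\dd N$ up to an affine term, equivalently produces a holomorphic function $\Theta$ on the upper half-plane with $\mathrm{Re}\,\Theta=w$ and, after a suitable normalisation, $\mathrm{Im}\,\Theta(E+\ii 0)=-\pi N(E)+\mathrm{const}$; thus $\Theta$ is, up to sign and an affine constant, a Herglotz function, and $N$ is (a multiple of) the harmonic conjugate of $w$ on $\R$.

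Second, Theorem \ref{thm_lyap_N}(iii) gives that each $\gamma_i$, hence $w=\sum_{i=1}^N\gamma_i$, is H\"older continuous on $I(\ell,N)$ with some exponent $\alpha\in(0,1)$ and some constant. It then remains to propagate this to the conjugate function. Here I would invoke the classical fact (Privalov's theorem, equivalently the boundedness of the Hilbert transform on H\"older spaces) that the harmonic conjugate of a $C^{0,\alpha}$ function on an interval is itself $C^{0,\alpha}$ on every compact subinterval of the interior, the endpoint contributions being absorbed by staying away from $\partial I(\ell,N)$; alternatively, one can exploit directly the Herglotz structure of $\Theta$ together with the lower bound on $w$ to get the same conclusion with constants depending only on $\alpha$, the H\"older constant of $w$, and $\mathrm{dist}(I,\partial I(\ell,N))$. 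Applied to $w$ and to a compact interval $I$ contained in the interior of $I(\ell,N)$, this yields $|N(E)-N(E')|\leq C|E-E'|^{\alpha}$ for $E,E'\in I$, which is exactly the estimate claimed in Theorem \ref{thm_ids}. This also clarifies the role of the hypothesis $I\subset\mathrm{int}\,I(\ell,N)$: the potential-theoretic cancellations that upgrade the conclusion from log-H\"older continuity (which holds for free, in the spirit of Craig--Simon) to genuine H\"older continuity require H\"older control of $w$ on a full two-sided neighbourhood of each point of $I$.

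I expect the main obstacle to be the first step: making the matrix-valued Thouless formula precise enough to identify $N$ as the boundary value of the imaginary part of the Herglotz function attached to $w$, with the correct normalisation, and then making the implication ``$w\in C^{0,\alpha}(I(\ell,N))\Rightarrow N\in C^{0,\alpha}(I)$'' genuinely quantitative and uniform on $I$, including the control of the conjugate function near the edges of $I(\ell,N)$. Since all of this is carried out in \cite{boumazarmp} for precisely this family of operators, the proof of Theorem \ref{thm_ids} should reduce to checking that the hypotheses of that reference are met --- namely that the Lyapunov exponents of $\Hl$ are H\"older continuous on $I(\ell,N)$, which is Theorem \ref{thm_lyap_N}(iii) --- and then quoting the conclusion. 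A secondary, purely bookkeeping, point is to track how the H\"older exponent may degrade through the argument; only its strict positivity matters for what follows, namely the Wegner estimate and the multiscale analysis.
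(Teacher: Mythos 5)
Your argument follows essentially the same route as the paper: the paper also deduces Theorem \ref{thm_ids} by combining the H\"older continuity of the Lyapunov exponents from Theorem \ref{thm_lyap_N}(iii) with the matrix-valued Thouless formula and the Hilbert-transform (harmonic conjugation) machinery of \cite{boumazarmp}, simply quoting \cite[Theorems 3 and 4]{boumazarmp}. The only small difference is that the paper asserts, by the same Hilbert-transform properties you invoke, that the H\"older exponent of $N$ is actually equal to that of the Lyapunov exponents, rather than possibly degraded.
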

\vskip 3mm

\noindent The local H\"older regularity of the integrated density of states is a key ingredient to prove a Wegner estimate for $\Ho$. Let $L\in \N^*$ and denote by $\HL$ the restriction of $\Ho$ to $L^2([-\ell L,\ell L])\otimes \C^N$ with Dirichlet boundary conditions. We define $\HLl$ the same way, for every $\ell >0$.

\begin{thm}\label{thm_wegner}
Let $I\subset \R$ be a compact interval and $\tilde{I}$ be an open interval, $I \subset \tilde{I}$, such that, for every $E\in \tilde{I}$, $G(E)$ is \LpSI. Then, for every $\beta \in (0,1)$ and every $\kappa >0$, there exist $L_0\in \N$ and $\xi>0$ such that,
\begin{equation}\label{eq_thm_wegner}
\mathsf{P}\left( d\left(E, \sigma(\HL) \right) \leq \ee^{-\kappa(\ell L)^{\beta}} \right) \leq \ee^{-\xi(\ell L)^{\beta}},
\end{equation}
for every $E\in I$ and every $L\geq L_0$.
\end{thm}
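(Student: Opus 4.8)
The plan is to derive the Wegner estimate (\ref{eq_thm_wegner}) from the local Hölder regularity of the integrated density of states (Theorem \ref{thm_ids}), following the spectral-averaging / Combes--Thomas type argument adapted to the matrix-valued setting. First I would fix a compact interval $I$ and the open interval $\tilde I \supset I$ on which $G(E)$ is \LpSI, and pick $\beta\in(0,1)$, $\kappa>0$. The quantity to control is the probability that $\HL$ has spectrum within an exponentially small window $\eta_L = \ee^{-\kappa(\ell L)^{\beta}}$ of a given energy $E\in I$. The standard route is to bound, for $E\in I$,
\begin{equation*}
\mathsf{P}\left( d(E,\sigma(\HL))\le \eta_L \right) \le \mathsf{P}\left( \mathrm{tr}\, E_{\HL}([E-\eta_L,E+\eta_L]) \ge 1 \right) \le \E\left( \mathrm{tr}\, E_{\HL}([E-\eta_L,E+\eta_L]) \right),
\end{equation*}
using Chebyshev, since the trace of the spectral projection is a nonnegative integer which is $\ge 1$ exactly when there is an eigenvalue in the window.

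The next step is to relate the expected trace of the finite-volume spectral projection to the integrated density of states. By definition $N(E)$ is (up to normalization by the volume, i.e. $\sim 2\ell L$ times $2N$ for the matrix-valued case) the thermodynamic limit of $\frac{1}{2\ell L}\E(\mathrm{tr}\, E_{\HL}((-\infty,E]))$; more usefully, one has a finite-volume bound of the form $\E\big(\mathrm{tr}\, E_{\HL}([a,b])\big) \le C\,\ell L\,\big(N(b+\epsilon)-N(a-\epsilon)\big)$ for some fixed small $\epsilon$, or directly $\E(\mathrm{tr}\, E_{\HL}([a,b]))\le C\ell L\, (\tilde N_L(b)-\tilde N_L(a))$ where $\tilde N_L$ is the finite-volume distribution function, combined with a uniform (in $L$) comparison $\tilde N_L(b)-\tilde N_L(a)\le N(b')-N(a')+ (\text{small error})$. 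This is exactly the Wegner-via-IDS argument of Kirsch--Combes--Hislop; the key point is that these comparisons are deterministic consequences of Dirichlet--Neumann bracketing and do not cost anything probabilistically. Applying Theorem \ref{thm_ids} on a compact interval slightly larger than $I$ but still inside the interior of $I(\ell,N)$ (which is guaranteed since the hypothesis that $G(E)$ is \LpSI on $\tilde I$ is precisely what makes Theorem \ref{thm_ids} applicable there, possibly shrinking $\tilde I$), we get
\begin{equation*}
\E\left( \mathrm{tr}\, E_{\HL}([E-\eta_L,E+\eta_L]) \right) \le C\,\ell L\,\big(N(E+\eta_L)-N(E-\eta_L)\big) + (\text{bracketing error}) \le C'\,\ell L\,(2\eta_L)^{\alpha} + (\text{error}),
\end{equation*}
where $\alpha>0$ is the Hölder exponent from Theorem \ref{thm_ids}.

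Finally I would choose parameters to absorb everything into a stretched-exponential bound. With $\eta_L=\ee^{-\kappa(\ell L)^{\beta}}$ one has $(2\eta_L)^{\alpha}\sim 2^\alpha \ee^{-\alpha\kappa(\ell L)^{\beta}}$, so the right-hand side is $\le C''\,\ell L\,\ee^{-\alpha\kappa(\ell L)^{\beta}}$; since the polynomial factor $\ell L$ is swallowed by the exponential for $L$ large, there exist $\xi>0$ (any $\xi<\alpha\kappa$ works, after fixing $L_0$ large enough that $C''\ell L \ee^{-(\alpha\kappa-\xi)(\ell L)^\beta}\le 1$ for $L\ge L_0$) such that $\mathsf{P}(d(E,\sigma(\HL))\le \ee^{-\kappa(\ell L)^\beta})\le \ee^{-\xi(\ell L)^\beta}$ for all $E\in I$ and $L\ge L_0$, which is (\ref{eq_thm_wegner}). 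I expect the main obstacle to be the second step: making the passage from the deterministic finite-volume eigenvalue counting function to the (infinite-volume) Hölder-regular IDS quantitative and \emph{uniform in $E\in I$ and in $L$}, in the matrix-valued continuous setting — one needs a careful Dirichlet--Neumann bracketing estimate controlling $\tilde N_L$ by $N$ with an error that is itself at most of order $\eta_L$ (or is otherwise dominated), and one must check that the Hölder constants from Theorem \ref{thm_ids} can be taken uniform over the compact $I$. The first and third steps are routine (Chebyshev and elementary estimates on exponentials), and the invocation of Theorem \ref{thm_ids} is immediate once the bracketing is in place.
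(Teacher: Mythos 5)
Your overall strategy --- Chebyshev on the eigenvalue count in the window, then the H\"older continuity of $N$ from Theorem \ref{thm_ids} --- founders exactly at the step you flagged as the ``main obstacle'', and that obstacle is not a technicality that careful bracketing can repair. Dirichlet--Neumann bracketing compares $\E\big(\mathrm{tr}\,E_{\HL}([a,b])\big)$ with $2\ell L\,(N(b)-N(a))$ only up to an \emph{additive} boundary error: superadditivity of Dirichlet counting functions gives $\E\,N(\HL,b)\leq 2\ell L\,N(b)$, but the matching lower bound at $a$ passes through the Neumann restriction and costs a constant of order $N$ (the rank of the boundary modification), independent of $L$ and certainly not of order $\eta_L=\ee^{-\kappa(\ell L)^{\beta}}$. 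Since any probability is at most $1$, a bound of the form $C\ell L\,\eta_L^{\alpha}+O(1)$ is vacuous. More fundamentally, H\"older continuity of the infinite-volume limit $N$ gives no rate of convergence of the finite-volume expected counting functions, so it cannot by itself control the expected number of eigenvalues of $\HL$ in a window shrinking exponentially in $L$; there is no ``Wegner from IDS regularity'' theorem of the kind you invoke, and for the singular (Bernoulli-type) distributions allowed here spectral-averaging arguments are also unavailable.

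The paper circumvents this with a genuinely different mechanism, following \cite{CKM87} and \cite{DSS02}. First (Proposition \ref{prop_wegner}) it does not estimate the probability of ``an eigenvalue in the window'' but of the smaller event that there is an eigenvalue in the window whose normalized eigenfunction has exponentially small derivative at the two boundary points $\pm\ell L$. Such eigenfunctions can be cut off and inserted into a long box $J_n$ made of $2n+1$ independent translates $I_k$ with only an $O(\varepsilon)$ energy error (this is where the small boundary derivative and Temple's inequality from \cite{ST85} enter, and where the $O(1)$ boundary error disappears: one divides by $2n+1\to\infty$); the law of large numbers then converts $\mathsf{P}(A_0)$ into the thermodynamic limit of an eigenvalue count, i.e.\ exactly the increment $N(E+C_3\varepsilon)-N(E-C_3\varepsilon)$, and only \emph{here} does Theorem \ref{thm_ids} (in fact \cite[Theorem 3]{boumazarmp}) get used. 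Second, to show that the extra ``small boundary derivative'' condition is typical, the proof of Theorem \ref{thm_wegner} splits the bad event into the pieces $(a)$--$(d)$ built from the transfer-matrix growth events $A^{(L)}_{\theta}(E)$, $B^{(L)}_{\theta}(E)$, and controls $(b)$--$(d)$ by the positivity of the Lyapunov exponents through Lemma \ref{lem_wegner_U} (negative-moment bounds for $\|\wedge^p U^{(n)}(E)x\|$), the Lipschitz dependence of $T_{\omega^{(n)}}(E)$ on $E$, and the solution estimates of Lemma \ref{lem_estim2}. So the missing idea in your proposal is precisely this combination: restricting to eigenfunctions with small boundary data so that the finite-volume count can be tied to the IDS without an $O(1)$ error, and paying for that restriction with large-deviation estimates on the transfer matrices. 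Without it, the chain of inequalities you propose cannot yield (\ref{eq_thm_wegner}).
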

\vskip 2mm

\noindent Applying this general result to $\Hl$ and using Theorem \ref{thm_lyap_N}, lead us to the following.

\begin{cor}\label{cor_wegner}
\
\begin{itemize}
\item[(i)] Let $\ell <\ell_C$. Then (\ref{eq_thm_wegner}) holds for $\HLl$ for every $E\in I$, where $I$ is a non-trivial interval included in the interior of $I(\ell,N)$.
\item[(ii)] Let $\ell=1$ and $N=2$ in (\ref{model_Hl}). There exists a discrete set $\mathcal{S} \subset \R$ such that, for every compact interval $I\subset (2,+\infty) \setminus \mathcal{S}$, (\ref{eq_thm_wegner}) holds for $H_{1}^{(L)}(\omega)$, for every $E\in I$.
\end{itemize}
\end{cor}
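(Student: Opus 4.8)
The plan is to derive both statements directly from Theorem \ref{thm_wegner}. That theorem is stated for operators of the form (\ref{model_H}), so it applies to $\Hl$ — a particular case — with $\HLl$ in the role of $\HL$; its only hypothesis is an open interval $\tilde{I}\supset I$ on which $G(E)$ is \LpSI\ for every $E\in\tilde{I}$, and in particular it does not require $\Sigma\cap I\neq\emptyset$. So in each case the only thing to do is to produce such a $\tilde{I}$, and for this we feed in the information on the F\"urstenberg group of $\Hl$ that is obtained while proving Theorem \ref{thm_lyap_N} (for part $(i)$) and while analysing the two-dimensional example (for part $(ii)$).

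For part $(i)$, fix $\ell<\ell_C$. First I would record that $I(\ell,N)$ is a non-degenerate interval: by (\ref{eq_def_IlN}) and (\ref{eq_lambda_minmax}) its length is $2(\dlO/\ell-\delta_0)$, which is strictly positive because $\ell<\ell_C=\min(1,\dlO/\delta_0)$ forces $\dlO/\ell>\delta_0$. The key algebraic fact established en route to Theorem \ref{thm_lyap_N} — and which is precisely the hypothesis of Theorem \ref{thm_localization} in this setting — is that for every $E\in I(\ell,N)$ the group $G(E)$ attached to $\Hl$ is \LpSI. Now take $I$ to be a non-trivial compact interval contained in the interior of $I(\ell,N)$ (if $I$ is not compact, apply what follows to each of its compact subintervals). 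Since $I$ is a compact subset of the open interval $\mathrm{int}\,I(\ell,N)$, there is an open interval $\tilde{I}$ with $I\subset\tilde{I}\subset\mathrm{int}\,I(\ell,N)$, and on $\tilde{I}$ the hypothesis of Theorem \ref{thm_wegner} is met. Applying Theorem \ref{thm_wegner} then gives (\ref{eq_thm_wegner}) for $\HLl$, for every $E\in I$.

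For part $(ii)$, set $\ell=1$ and $N=2$ in (\ref{model_Hl}). One cannot invoke part $(i)$, since its hypothesis $\ell<\ell_C$ fails here ($\ell_C(2)\leq 1=\ell$). Instead I would use the discrete set $\mathcal{S}\subset\R$ produced by the analysis of this example — the same $\mathcal{S}$ that appears in Theorem \ref{thm_localization_Hl}$(ii)$ — built so that $G(E)$ is \LpSI\ for every $E\in(2,+\infty)\setminus\mathcal{S}$. Let $I\subset(2,+\infty)\setminus\mathcal{S}$ be a compact interval. Because $\mathcal{S}$ is discrete (hence closed and locally finite), $I$ is compact, and $I\cap(\mathcal{S}\cup\{2\})=\emptyset$, we have $d(I,\mathcal{S}\cup\{2\})>0$; choosing $\varepsilon$ below this distance, the open interval $\tilde{I}=(\min I-\varepsilon,\max I+\varepsilon)$ satisfies $I\subset\tilde{I}\subset(2,+\infty)\setminus\mathcal{S}$. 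Once more the hypothesis of Theorem \ref{thm_wegner} holds on $\tilde{I}$, and that theorem yields (\ref{eq_thm_wegner}) for $H_{1}^{(L)}(\omega)$, for every $E\in I$.

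The corollary itself hides no real difficulty: the substance sits in Theorem \ref{thm_wegner} and, further upstream, in establishing that $G(E)$ is \LpSI\ on the relevant energy ranges — which is where Theorem \ref{thm_breuillard} of Breuillard and Gelander, the explicit identification of $I(\ell,N)$, and the computation of the exceptional set $\mathcal{S}$ enter. The only steps here that call for a little care are the two elementary \emph{interval-fitting} arguments above, which ensure that the open interval $\tilde{I}$ carrying the F\"urstenberg-group hypothesis can be chosen to strictly contain the prescribed compact interval $I$, together with the remark that $I(\ell,N)$ is non-degenerate exactly when $\ell<\ell_C$.
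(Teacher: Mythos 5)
Your proposal is correct and follows essentially the same route as the paper: the corollary is obtained by feeding Proposition \ref{prop_GE_SpN} (equivalently, the hypothesis verified for Theorem \ref{thm_lyap_N}) into Theorem \ref{thm_wegner} for part $(i)$, and the discrete set $\mathcal{S}$ from \cite{boumazampag} for part $(ii)$, with the same interval-fitting remark the paper makes when choosing an open $\tilde{I}\supset I$ inside $I(\ell,N)$ or $(2,+\infty)\setminus\mathcal{S}$. Your added checks (non-degeneracy of $I(\ell,N)$ for $\ell<\ell_C$, compact exhaustion if $I$ is not compact) are consistent with, and slightly more explicit than, the paper's treatment.
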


\noindent Then, to obtain Theorem \ref{thm_localization} and Theorem \ref{thm_localization_Hl}, it will remain to show that we can apply a multiscale analysis scheme as presented in \cite{stollmann} or \cite{K07}.
\vskip 5mm

\noindent After \cite{DSS02}, we already know that, in the scalar-valued case (corresponding here to $N=1$), there exists a discrete set $\mathcal{S}\subset \R$ such that, on every compact interval $I\subset \R \setminus \mathcal{S}$, $\Sigma \cap I \neq \emptyset$, we have exponential localization and strong dynamical localization. Thus, there is localization at small and large energies in this case. On higher dimension $d\geq 1$, the analog of model (\ref{model_H}) is the operator
\begin{equation}\label{model_dimd}
\Ho = -\Delta_d\otimes I_{\mathrm{N}} + \sum_{n\in \Z} V_{\omega}^{(n)} (x-\ell n),
\end{equation}
acting on $L^2(\R^d)\otimes \C^N$, with the same assumptions as for model (\ref{model_H}). In particular, we are interested in handling singular distributions of the random parameter $\omega$. For $N=1$ and $d\geq 2$, it is known since \cite{BK05} that there is localization, in the spectral and the dynamical sense, near the bottom of the almost-sure spectrum of $\Ho$. For large energies, the question of the localization, in the case of singular distribution of $\omega$, is still open. It is commonly conjectured that for $d\geq 3$, there exist delocalized states at large energies. On the contrary, for $d=2$, it is conjectured that there is localization, even for large energies, exactly like in the case $d=1$. To tackle the question of localization for $d=2$, including singular distribution of $\omega$, we can start by studying a simpler model, a continuous strip in $\R^2$. We consider the restriction of (\ref{model_dimd}) to the continuous strip $\R \times [0,1]$, acting on $L^2(\R \times [0,1])\otimes \C^N$, with Dirichlet boundary conditions on $\R \times \{0\}$ and $\R\times \{1\}$. This continuous strip model is not only interesting as a first step to study localization on $\R^2$, but it is also of physical interest. Indeed, such a model can be considered to modelize nanoconductors and it allows to study their transport properties. 

\noindent The question of localization at all energies for the continuous strip $\R\times [0,1]$ is a difficult problem, mostly due to its PDE's nature. A possible approach for this question is to operate a discretization in the bounded direction. It leads to consider a quasi-one dimensional model which is matrix-valued, and actually of the form (\ref{model_H}). Having a one-dimensional model turns the nature of the problem to an ODE's one, which allows to use formalism such as transfer matrices and Lyapunov exponents. Then, we want to obtain localization for arbitrary large $N$ and we hope to be able to recover localization properties for the continuous strip in dimension $2$, by letting $N$ ``tends to infinity''. For this purpose, it is important to have results for arbitrary $N\geq 1$.

\noindent In a previous article of the author, \cite{boumazampag}, we proved separability of the Lyapunov exponents of $H_1(\omega)$ for large energies, but only for $N=2$. It was done by proving $p$-contractivity and $L_p$-strong irreducibility of the F\"urstenberg group for energies $E>2$ (for $N=2$, $\lambda_{\mathrm{max}}=2$) and away from a discrete set of $\R$. Point $(ii)$ of Theorem \ref{thm_localization_Hl} is based upon this result. Due to some technical difficulties, it was not possible to generalize the computations done for $N=2$ to an arbitrary $N\geq 1$. This is were the parameter $\ell >0$ play an important role, as explained in the end of Section \ref{sec_proof_thm_lyap_N}. The main difference between what was proved for $\ell=1$ and $N=2$ in \cite{boumazampag}, and the case $N\geq 1$ for $\ell$ small considered here, is the existence of a discrete set of critical energies which does not appear in the second case.

\noindent We also want to mention that different methods have been used in \cite{KMPV03} to prove localization for random operators on strips. These methods, using spectral averaging techniques, do not apply for singular distribution of the random parameters. We choose instead to follow methods of \cite{KLS90} for the discrete strip and adapt them to our models. The same strategy was already followed in \cite{DSS02} for the scalar-valued case.
\vskip 3mm

\noindent We finish this introduction by giving the outline of the rest of the article. In Section \ref{sec_lyap}, we prove Theorem \ref{thm_lyap_N}. We first recall definitions of the Lyapunov exponents in Section \ref{sec_lyap_prop} before introducing the F\"urstenberg group of $\Hl$ and study it in Section \ref{sec_proof_thm_lyap_N}. In Section \ref{sec_estimate_trans_mat}, we review and adapt estimates on the random walk defined by the transfer matrices and, in particular, large deviation type estimates. Then we shortly discuss the H\"older regularity of the integrated of states in Section \ref{sec_ids}. We can deduce from this regularity result a Wegner estimate, as it is done in Section \ref{sec_wegner}. Finally, in Section \ref{sec_localization}, we give the proofs of Theorem \ref{thm_localization} and of Theorem \ref{thm_localization_Hl}. We start by presenting the requirements of a multiscale analysis scheme in Section \ref{sec_MSA} and then we prove, in Section \ref{sec_proof_localization}, an initial length scale estimate required to apply this scheme.

\section{Positivity and regularity of the Lyapunov exponents}\label{sec_lyap}

In this Section, we will give the proof of Theorem \ref{thm_lyap_N}. Before that, we will set up notations and definitions about Lyapunov exponents.

\subsection{Lyapunov exponents}\label{sec_lyap_prop}

We start with a review of the definition of the Lyapunov exponents. Let $N$ be a positive integer and let $\SpN$ denote the group of $2N\times 2N$ real symplectic matrices. It is the subgroup of $\mathrm{GL}_{2\mathrm{N}}(\R)$ of matrices $M$ satisfying $$^tMJM=J,$$ 
where $J$ is the matrix of order $2N$ defined by $J=\bigl(\begin{smallmatrix}
0 & -I_{N} \\
I_{N} & 0
\end{smallmatrix}\bigr)$.

\begin{defi}
Let $(T_{n}^{\omega})_{n\in \N}$ be a sequence of \emph{i.i.d.}\ random matrices in $\SpN$ with $\E(\log^{+}||T_{0}^{\omega}||) <\infty$. The Lyapounov exponents $\gamma_{1},\ldots,\gamma_{2N}$ associated with $(T_{n}^{\omega})_{n\in \N}$ are defined inductively, for every $p\in \aN$, by \begin{equation}\label{eq_lyap_def}
\sum_{i=1}^{p} \gamma_{i} = \lim_{n \to \infty} \frac{1}{n}
\mathbb{E}(\log ||\wedge^{p} (T_{n-1}^{\omega}\ldots T_{0}^{\omega})||).
\end{equation}
\end{defi}

\noindent Here, $\wedge^{p} M$ denotes the $p$th exterior power of the matrix $M$, acting on the $p$th exterior power of $\R^{2N}$. One has $\gamma_{1}\geq \ldots \geq \gamma_{2N}$. Moreover, we have the symmetry property $\gamma_{2N-i+1}= -\gamma_{i}$, for every $i \in \{1,\ldots,N\}$, due to the symplecticity of the random matrices $T_{n}^{\omega}$. We will only have to study the $N$ first Lyapunov exponents, those being positive.

\noindent We also define, for every $p\in \aN$, the $p$-Lagrangian submanifold $L_{p}$ of $\R^{2N}$, as the subspace of $\wedge^{p}\R^{2N}$ spanned by $\{Me_1 \wedge \ldots \wedge Me_p \ |\ M\in \SpN \}$, where $(e_{1},\ldots,e_{2N})$ is the canonical basis of $\R^{2N}$. We note that $L_1=\wedge^1 \R^{2N}=\R^{2N}$. We can now give the definitions of $p$-contractivity and $L_p$- strong irreducibility.

\begin{defi}\label{def_pcon_LpSI}
Let $G$ be a subset of $\SpN$ and $p\in \aN$.
\begin{itemize}
\item[(i)] $G$ is \emph{$p$-contracting} if there exists a sequence $(T_n)_{n\in \N}$ in $G$ such that the sequence $(||\wedge^p T_n||^{-1}\wedge^p T_n)_{n\in \N}$ converges to a rank-one matrix.
\item[(ii)] $G$ is \emph{$L_p$-strongly irreducible} if there does not exist a finite union $W$ of proper subspaces of $L_p$ such that, $(\wedge^p T)(W)=W$ for any $T\in G$.
\end{itemize}
\end{defi} 

\noindent We can now give the proof of Theorem \ref{thm_lyap_N}.

\subsection{Proof of Theorem \ref{thm_lyap_N}}\label{sec_proof_thm_lyap_N}

In this Section, all the definitions are given for the operator $\Hl$, but we could also define the same objects for the more general operator (\ref{model_H}). We start by introducing the sequence of transfer matrices associated to the operator $\Hl$. Let $E\in \R$. The transfer matrix $T_{\omega^{(n)}}(E)$ of $\Hl$, from $\ell n$ to $\ell (n+1)$, is defined by the relation
\begin{equation}\label{eq_def_trans_mat}
\forall n\in \Z,\ \left( \begin{array}{c}
u(\ell(n+1)) \\
u'(\ell(n+1)) 
\end{array} \right) = T_{\omega^{(n)}}(E) \left( \begin{array}{c}
u(\ell n) \\
u'(\ell n) 
\end{array} \right),
\end{equation}
\noindent where $u:\R\to \C^N$ is a solution of the second order differential system
\begin{equation}\label{eq_sys_Hl}
\Hl u=Eu.
\end{equation}
\noindent We can give the explicit form of the matrices $T_{\omega^{(n)}}(E)$. For $E\in \R$, $n\in \Z$ and $\omega^{(n)}\in \tilde{\Omega}^{\otimes N}$, we set
\begin{equation}\label{eq_def_mat_M}
M_{\omega^{(n)}}(E)=V_0 + \mathrm{diag}(c_1 \omega_1^{(n)},\ldots,c_N \omega_N^{(n)})-EI_{\mathrm{N}}. 
\end{equation}
\noindent Then, if we set 
\begin{equation}\label{eq_def_mat_X}
X_{\omega^{(n)}}(E)=\left( \begin{array}{cc}
0 & I_{\mathrm{N}} \\
M_{\omega^{(n)}}(E) & 0
\end{array}\right) \in \mathcal{M}_{\mathrm{2N}}(\R),
\end{equation}
by solving the constant coefficient system (\ref{eq_sys_Hl}) on $[\ell n,\ell (n+1)]$, we have :
\begin{equation}\label{eq_def_mat_T}
\forall \ell >0,\ \forall n\in \Z,\ \forall E\in \R,\ T_{\omega^{(n)}}(E)=\exp\left(\ell X_{\omega^{(n)}}(E)\right).
\end{equation}
\vskip 2mm

\noindent The fact that $T_{\omega^{(n)}}(E)$ is the exponential of a matrix will be very important to be able to apply Theorem \ref{thm_breuillard}. We can now introduce the central object involved in the proof of Theorem \ref{thm_lyap_N}. It is the algebraic object containing all the products of transfer matrices.

\begin{defi}\label{def_furstenberg}
For every real number $E\in \R$, the \emph{F\"urstenberg group} of $\Hl$ is defined by 
$$G(E)=\overline{<\supp \mu_E>},$$
where $\mu_E$ is the common distribution of the $T_{\omega^{(n)}}(E)$.
\end{defi}

\noindent As the $T_{\omega^{(n)}}(E)$ are \emph{i.i.d.}, $\mu_E=(T_{\omO}(E))_{*}\, \nu^{\otimes N}$ and we have the internal description of $G(E)$ :
\begin{equation}\label{eq_description_GE}
\forall E\in \R,\ G(E)=\overline{<T_{\omO} (E)\ |\ \omO \in \supp \nu^{\otimes N}>}.
\end{equation}
\vskip 2mm

\noindent As $\{0,1\}\subset \supp \nu$, we also have $G(E) \supset \overline{<T_{\omO} (E)\ |\ \omO \in \{0,1\}^N >}$. Due to a criterion of Gol'dsheid and Margulis (see \cite{GM89,boumazastolz}), to prove that, for a given $E\in \R$, $G(E)$ is \LpSI, it suffices to prove that $G(E)$ is Zariski dense in $\SpN$. Actually, we will prove a stronger statement which is that, for every $E\in I(\ell,N)$, $G(E)$ is equal to $\SpN$. Therefore, for every $E\in I(\ell,N)$, $G(E)$ will be \LpSI.

\begin{prop}\label{prop_GE_SpN}
Let $\ell <\ell_C$ and $I(\ell,N)$ be the interval defined at (\ref{eq_def_IlN}). Then, for every $E\in I(\ell,N)$, $G(E)=\SpN$.
\end{prop}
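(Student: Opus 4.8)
The plan is to show that the Fürstenberg group $G(E)$ coincides with the full symplectic group $\SpN$ for every $E\in I(\ell,N)$, using the algebraic result of Breuillard and Gelander (Theorem \ref{thm_breuillard}) on generating dense Lie subgroups of semisimple connected Lie groups. The strategy combines two ingredients: first, a density statement — that $G(E)$ is dense in $\SpN$ for the right range of $\ell$ and $E$ — and second, the fact that $G(E)$ is closed by definition, so density forces equality. Since $\SpN$ is a connected semisimple real Lie group, Theorem \ref{thm_breuillard} provides a neighborhood $\mathcal{O}$ of $I_{2\mathrm N}$ such that any finite family of elements in $\mathcal{O}$ that generates a \emph{dense} subgroup of $\SpN$ at the Lie-algebra level — i.e.\ whose logarithms generate $\spN$ as a Lie algebra — actually generates a dense subgroup of the whole group. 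So it suffices to produce, inside $\mathcal{O}$, a collection of transfer matrices $T_{\omO}(E)$, $\omO\in\{0,1\}^N$, whose logarithms $\ell X_{\omO}(E)$ lie in $\log\mathcal{O}$ and Lie-generate $\spN$.

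The first key step is the reduction to the Lie algebra. By \eqref{eq_def_mat_T}, $T_{\omO}(E)=\exp(\ell X_{\omO}(E))$ with $X_{\omO}(E)=\bigl(\begin{smallmatrix}0 & I_{\mathrm N}\\ M_{\omO}(E) & 0\end{smallmatrix}\bigr)$ and $M_{\omO}(E)=M_{\omO}-EI_{\mathrm N}$. I would first check that $X_{\omO}(E)\in\spN$ (it is: $^tX J + J X = 0$ since $M_{\omO}(E)$ is symmetric), so $T_{\omO}(E)\in\SpN$ indeed. Next, the norm control: we need $\ell X_{\omO}(E)\in B(0,\dlO)\subset\log\mathcal{O}$. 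Here the eigenvalues of $M_{\omO}(E)$ are $\lambda_i^{\omO}-E$, which for $E\in I(\ell,N)=[\lambda_{\max}-\dlO/\ell,\ \lambda_{\min}+\dlO/\ell]$ satisfy $|\lambda_i^{\omO}-E|<\dlO/\ell$; combined with the structure of $X_{\omO}(E)$ this should give $\|\ell X_{\omO}(E)\| < \dlO$, putting $\ell X_{\omO}(E)$ in the required ball (the definition of $\delta_0$ and of $\ell_C$ is precisely engineered so that this, together with $\ell<1$, works for at least the two extreme choices needed — one should be careful about exactly which $\omO$'s one uses, likely $\omO=(0,\ldots,0)$ and $\omO=(1,\ldots,1)$ or similar, to make both $X$ and its counterpart small). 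So the matrices $T_{\omO}(E)$ for the relevant $\omO$ all sit in $\mathcal{O}$.

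The second, and I expect main, step is the genuinely algebraic one: to show that the logarithms $\{\ell X_{\omO}(E) : \omO\in\{0,1\}^N\}$ — equivalently $\{X_{\omO}(E)\}$ — generate $\spN$ as a Lie algebra. Since $X_{\omO}(E)-X_{\omO'}(E)=\bigl(\begin{smallmatrix}0&0\\ \mathrm{diag}(c_i(\omega_i-\omega_i'))&0\end{smallmatrix}\bigr)$, by varying $\omO$ over $\{0,1\}^N$ and using that the $c_i$ are nonzero, the Lie algebra generated contains all the ``lower-left'' elements $\bigl(\begin{smallmatrix}0&0\\D&0\end{smallmatrix}\bigr)$ for $D$ diagonal, plus at least one $X_{\omO}(E)$ containing the identity block top-right. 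One then runs through bracket computations: brackets of lower-triangular nilpotent-type elements with $X_{\omO}(E)$ produce upper-right blocks and diagonal blocks, and iterating generates the standard root-space decomposition of $\spN$. The place this can fail is if the tridiagonal matrix $V_0$ (which enters every $M_{\omO}(E)$ equally and hence never appears in differences) fails to interact richly enough with the diagonal pieces; here the specific tridiagonal structure of $V_0$ — off-diagonal $1$'s — should be exactly what is needed to propagate brackets across all root spaces, and I would verify this by an explicit induction on the ``bandwidth'' of the matrices generated, showing that $\mathrm{ad}(V_0\text{-type element})$ applied repeatedly to diagonal matrices fills up $\mathfrak{sp}$. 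Once the Lie algebra is all of $\spN$, Theorem \ref{thm_breuillard} gives that $\langle T_{\omO}(E)\rangle$ is dense in $\SpN$, hence so is $G(E)\supseteq\overline{\langle T_{\omO}(E):\omO\in\{0,1\}^N\rangle}$; but $G(E)$ is closed, so $G(E)=\SpN$. Finally, by the Gol'dsheid–Margulis criterion quoted just before the proposition, Zariski density (a fortiori, equality with $\SpN$) implies $p$-contractivity and $L_p$-strong irreducibility for all $p$, which is the statement needed to feed into Theorem \ref{thm_localization}.
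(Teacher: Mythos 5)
Your route is the paper's own: put the $T_{\omO}(E)=\exp(\ell X_{\omO}(E))$ into the Breuillard--Gelander neighborhood $\mathcal{O}$, show that the logarithms Lie-generate $\spN$, invoke Theorem \ref{thm_breuillard} to get density, and use closedness of $G(E)$ to upgrade density to equality. Two points, however, are not settled as written. First, the norm control is slightly off: $||X_{\omO}(E)||$ is not governed by the eigenvalues of $M_{\omO}(E)$ alone, since $X_{\omO}(E)\,^tX_{\omO}(E)$ has eigenvalues $1$ (with multiplicity $N$, coming from the identity block) and $(\lambda_i^{\omO}-E)^2$; hence $||X_{\omO}(E)||=\max(1,\max_i|\lambda_i^{\omO}-E|)$ and the condition to verify is $\ell\,\max(1,\max_i|\lambda_i^{\omO}-E|)\leq \dlO$ for \emph{every} $\omO\in\{0,1\}^N$. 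This requires $\ell\leq\dlO$ in addition to $E\in[\lambda_{\mathrm{max}}-\dlO/\ell,\lambda_{\mathrm{min}}+\dlO/\ell]$; ``together with $\ell<1$'' is not the relevant constraint, and there is no need to hunt for special choices of $\omO$: the definition of $\lambda_{\mathrm{min}},\lambda_{\mathrm{max}}$ as extrema over all $\omO\in\{0,1\}^N$ makes the bound uniform, so all $2^N$ transfer matrices land in $\mathcal{O}$ (the ones actually used in the algebraic step are $\omO=0$ and the coordinate vectors).

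Second, and mainly: the heart of the proposition is the claim that $\mathrm{Lie}\{X_{\omO}(E)\ |\ \omO\in\{0,1\}^N\}=\spN$ for every $E$, and your proposal only sketches it (``I would verify this by an explicit induction\ldots''). In the paper this is Lemma \ref{lem_a_spN}, and it is not a routine ``iterate $\mathrm{ad}$ of the $V_0$-part'' argument: one first reduces to producing the near-diagonal generators $X_{ij},Y_{ij}$ with $|i-j|\leq 1$, and then a multi-step bracket computation extracts them --- e.g.\ $Z_{ii}=[X_{0}(E),X_{e_i}(E)]$, $Y_{ii}=X_{e_i}(E)-X_{0}(E)$, the matrix $J$ obtained by summing the relations coming from $[X_{0}(E),Z_{ii}]$ over $i$ and subtracting $X_{0}(E)$, then $X_{ii}$ from $[J,Z_{ii}]$, and finally an induction along the band producing $Y_{i,i+1}$, $Z_{i,i+1}$, $X_{i,i+1}$. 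Your outline has the right ingredients (differences of the $X_{\omO}(E)$ give all diagonal lower-left blocks because the $c_i$ are nonzero; the tridiagonal $V_0$ is what propagates brackets to adjacent indices), and nothing in it would fail, but as written the decisive algebraic step is asserted rather than proved, so the proposal is incomplete precisely where the proposition is nontrivial. The final deduction (density plus closedness of $G(E)$, with $G(E)\subset\SpN$ and $G(E)\supset\overline{\langle T_{\omO}(E)\rangle}$) is correct and matches the paper; the Gol'dsheid--Margulis remark is only needed afterwards, to convert $G(E)=\SpN$ into $p$-contractivity and $L_p$-strong irreducibility.
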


\noindent The proof of this proposition is based upon the following theorem due to Breuillard and Gelander.

\begin{thm}[\cite{BG03}, Theorem 2.1]\label{thm_breuillard}
Let $G$ be a real, connected, semisimple Lie group, whose Lie algebra is $\mathfrak{g}$. Then, there is a neighborhood $\mathcal{O}$ of $1$ in $G$, on which $\log=\exp^{-1}$ is a well defined diffeomorphism, such that $g_{1},\ldots,g_{m}\in \mathcal{O}$ generate a dense subgroup whenever $\log g_{1},\ldots,\log g_{m}$ generate $\mathfrak{g}$.
\end{thm}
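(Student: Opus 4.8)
The plan is to prove that the closed subgroup $H:=\overline{\langle g_1,\dots,g_m\rangle}$ is all of $G$. By Cartan's closed-subgroup theorem $H$ is a Lie subgroup; let $\mathfrak h$ be its Lie algebra. Since a subgroup of a connected group that contains a neighbourhood of $1$ is the whole group, it suffices to show $\mathfrak h=\mathfrak g$. The structural observation that drives everything is that $\mathfrak h$ must be an \emph{ideal}: each $g_i$ normalises $H^\circ\trianglelefteq H$, so $\mathrm{Ad}(g_i)$ preserves $\mathfrak h$; if $\mathcal O$ is chosen so small that $\|\mathrm{Ad}(g)-\mathrm{id}\|<1$ on $\mathcal O$ -- a condition on $\mathcal O$ alone -- then $\mathrm{ad}(\log g_i)=\sum_{k\ge 1}\frac{(-1)^{k+1}}{k}(\mathrm{Ad}(g_i)-\mathrm{id})^k$ (the principal logarithm of $\mathrm{Ad}(g_i)$, which this operator is, being small) also preserves $\mathfrak h$; hence $\log g_i\in\mathfrak n_{\mathfrak g}(\mathfrak h)$, and as the $\log g_i$ generate $\mathfrak g$ we get $\mathfrak n_{\mathfrak g}(\mathfrak h)=\mathfrak g$, i.e.\ $\mathfrak h\trianglelefteq\mathfrak g$. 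By semisimplicity $\mathfrak g=\mathfrak h\oplus\mathfrak b$ with $\mathfrak b$ a semisimple ideal. Replacing $G$ by $\mathrm{Ad}(G)\subseteq\mathrm{GL}(\mathfrak g)$ -- which affects neither hypothesis nor conclusion -- we may also assume $G$ is linear.

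I would then argue by induction on $\dim G$, the case $\dim G=0$ being trivial. If $\mathfrak h=\mathfrak g$ we are done. If $0\neq\mathfrak h\subsetneq\mathfrak g$, then $\mathfrak b\neq 0$ with $\dim\mathfrak b<\dim\mathfrak g$, and since $\mathfrak h$ is an ideal $H^\circ$ is closed and normal in $G$, so $\pi\colon G\to\bar G:=G/H^\circ$ is a quotient onto a connected semisimple group of strictly smaller dimension with $\mathrm{Lie}(\bar G)\cong\mathfrak b$. Here $\pi(H)$ is a closed subgroup with Lie algebra $\pi_*\mathfrak h=0$, hence discrete; it coincides with $\langle\pi(g_1),\dots,\pi(g_m)\rangle$, and the $\pi_*(\log g_i)$ generate $\mathrm{Lie}(\bar G)$. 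Since $G$ has only finitely many connected normal subgroups (one per ideal of $\mathfrak g$), I can shrink $\mathcal O$ once so that every such $\pi$ carries $\mathcal O$ into the neighbourhood furnished by the inductive hypothesis applied to $\bar G$; that hypothesis then makes $\langle\pi(g_i)\rangle$ dense in $\bar G$. Being simultaneously discrete, it must equal $\bar G$, so $\bar G$ is trivial and $\mathfrak b=0$ -- a contradiction. Thus this case does not occur.

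The remaining case $\mathfrak h=0$, i.e.\ $\Gamma:=\langle g_i\rangle$ discrete, is the heart of the argument. I would shrink $\mathcal O$ inside a Zassenhaus neighbourhood of $G$, so that $\Gamma$ is nilpotent; then its Zariski closure $\mathbf A:=\overline\Gamma^{\,\mathrm{Zar}}$ is a nilpotent algebraic group, and the normaliser argument above -- now applied to $\mathbf A^\circ\trianglelefteq\mathbf A$, which contains each $g_i$ -- shows $\mathrm{Lie}(\mathbf A)$ is a nilpotent ideal of $\mathfrak g$, hence $0$; so $\mathbf A$, and therefore $\Gamma$, is finite. By Jordan's theorem $\Gamma$ has an abelian normal subgroup $\Gamma_0$ of index $n_0$ bounded in terms of $\dim\mathfrak g$ only, so for $\mathcal O$ small enough the elements $g_i^{n_0}=\exp(n_0\log g_i)\in\Gamma_0$ lie close to $1$ and commute; consequently their principal logarithms $n_0\,\mathrm{ad}(\log g_i)$ commute, hence so do the $\mathrm{ad}(\log g_i)$, hence -- $\mathrm{ad}$ being injective on the semisimple $\mathfrak g$ -- so do the $\log g_i$. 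But then the subalgebra they generate is abelian, contradicting that it equals the non-abelian $\mathfrak g$. Hence $\mathfrak h=0$ is impossible, $\mathfrak h=\mathfrak g$, and $\langle g_i\rangle$ is dense.

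The main obstacle throughout is \emph{uniformity}: $\mathcal O$ must be fixed before $g_1,\dots,g_m$ -- and hence before $H$, $\Gamma$, $\bar G$, $\Gamma_0$ -- are known, so one must extract it simultaneously from the operator-norm bound of the first step, from the inductive neighbourhoods of the finitely many proper quotients of $G$, from a Zassenhaus neighbourhood, and from Jordan's constant. The two substantive external inputs are the Zassenhaus (Kazhdan--Margulis) lemma -- which is precisely what supplies uniform nilpotency of subgroups generated near $1$, replacing the naïve commutator-shrinking heuristic that fails to be uniform -- and Jordan's theorem on finite linear groups; everything else is bookkeeping with ideals of a semisimple Lie algebra.
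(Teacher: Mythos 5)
The paper itself offers no proof of this statement: it is imported verbatim from Breuillard--Gelander \cite{BG03}, so the only meaningful benchmark is the original argument. Your reconstruction follows its core strategy: pass to the closed subgroup $H=\overline{\langle g_1,\dots,g_m\rangle}$, show $\mathfrak h=\mathrm{Lie}(H)$ is an ideal because each $g_i$ normalises $H^\circ$, so $\mathrm{Ad}(g_i)$ preserves $\mathfrak h$ and hence (for $\mathcal O$ small) so does $\mathrm{ad}(\log g_i)$, and then kill the degenerate cases inside a Zassenhaus neighbourhood. You diverge in the finish: Breuillard--Gelander dispose of the discrete case by the stronger form of the Zassenhaus lemma (the discrete group lies in a \emph{connected nilpotent} subgroup whose Lie algebra contains the logarithms, so the subalgebra generated by the $\log g_i$ would be nilpotent, contradicting semisimplicity), whereas you use only nilpotency of $\Gamma$, then Zariski closure, the ideal trick again, Jordan's theorem and the commuting-logarithms lemma; and you handle the intermediate case $0\neq\mathfrak h\subsetneq\mathfrak g$ by induction on dimension rather than by quotient-plus-Zassenhaus. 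Both variants are viable, and your uniformity bookkeeping (finitely many ideals, one Zassenhaus neighbourhood, Jordan's constant) is done correctly.

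The one step you should not wave through is ``replacing $G$ by $\mathrm{Ad}(G)$ affects neither hypothesis nor conclusion''. The hypothesis transfers ($\mathrm{ad}$ is injective on a semisimple $\mathfrak g$), but the conclusion does not obviously pull back along the covering $\mathrm{Ad}\colon G\to\mathrm{Ad}(G)$ when $Z(G)$ is infinite: density of $\langle \mathrm{Ad}(g_i)\rangle$ in $\mathrm{Ad}(G)$ does not formally imply density of $\langle g_i\rangle$ in $G$, because a subgroup with discrete closure upstairs can have non-discrete image downstairs (cyclic subgroups of the universal cover of $\mathrm{SL}_2(\R)$ projecting into the elliptic circle already show this), and ruling that out is essentially the content of your case $\mathfrak h=0$ -- so, as written, you have proved the theorem for the adjoint group and left the general case resting on an unproved transfer. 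The repair is one line: do not replace $G$ at all; in the case $\mathfrak h=0$ run Zassenhaus in $G$ to get $\Gamma$ nilpotent, and apply your Zariski-closure/Jordan argument to $\mathrm{Ad}(\Gamma)\subset\mathrm{Aut}(\mathfrak g)\subset\mathrm{GL}(\mathfrak g)$: its Zariski closure has Lie algebra inside $\mathrm{Der}(\mathfrak g)=\mathrm{ad}(\mathfrak g)$, your normaliser trick makes that a nilpotent ideal, hence zero, so $\mathrm{Ad}(\Gamma)$ is finite, and the endgame goes through unchanged since the final contradiction only needs the $\mathrm{ad}(\log g_i)$ to commute. (Alternatively, nilpotency of $\Gamma$ plus the adjoint-case conclusion would make $\mathrm{Ad}(G)$ contain a dense nilpotent subgroup, already absurd.) Two small points of hygiene: take $g_i^{n_0!}$ rather than $g_i^{n_0}$, since the index is only bounded by, not a divisor of, $n_0$; and the smallness condition on $\mathcal O$ should be strong enough to give $\log\mathrm{Ad}(g)=\mathrm{ad}(\log g)$ (e.g.\ $\|\mathrm{ad}(\log g)\|<\log 2$), not merely convergence of the logarithm series.
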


\noindent This theorem gives us the outline of the proof of Proposition \ref{prop_GE_SpN} :
\begin{enumerate}[1.]
\item We prove that, for every $\ell \in (0,\ell_C)$ and every $E\in I(\ell,N)$, $T_{\omO}(E)\in \mathcal{O}$, for every $\omO \in \{0,1\}^N$.
\item For $\ell <\ell_C$, we compute $\log T_{\omO}(E)$.
\item We prove that $\mathrm{Lie}\{\log T_{\omO}(E)\ |\ \omO \in \{0,1\}^N \}=\spN$, the Lie algebra of $\SpN$.
\end{enumerate}
\vskip 2mm

\noindent Before proving Proposition \ref{prop_GE_SpN}, we prove the following algebraic lemma which will be used to prove point $3$.

\begin{lem}\label{lem_a_spN}
Let $N\geq 1$ and $E\in \R$. The Lie algebra generated by $\{ X_{\omO}(E) \ |\ \omO \in \{ 0,1\}^N \}$ is equal to $\spN$.
\end{lem}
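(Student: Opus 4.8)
## Proof proposal for Lemma \ref{lem_a_spN}

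The plan is to show that the $2^N$ matrices $X_{\omO}(E)$, for $\omO \in \{0,1\}^N$, together with the Lie bracket, generate all of $\spN$. Recall that $\spN$ consists of the block matrices $\left(\begin{smallmatrix} A & B \\ C & -{}^tA \end{smallmatrix}\right)$ with $B,C$ symmetric, and that
$$
X_{\omO}(E)=\begin{pmatrix} 0 & I_N \\ M_{\omO}(E) & 0 \end{pmatrix},
$$
where $M_{\omO}(E)=V_0+\mathrm{diag}(c_1\omega_1^{(0)},\ldots,c_N\omega_N^{(0)})-EI_N$ is symmetric, so indeed $X_{\omO}(E)\in\spN$. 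Since $E$ is fixed throughout, I will drop it from the notation and write $X_{\omO}$, $M_{\omO}$, and set $\mathfrak{a}=\mathrm{Lie}\{X_{\omO}\mid \omO\in\{0,1\}^N\}$.

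First I would exploit the freedom in $\omO$ to extract simple building blocks. Taking differences $X_{\omO}-X_{\omO'}$ (which lie in $\mathfrak{a}$ as a vector subspace, being spanned by the generators) kills $V_0-EI_N$ and produces matrices $\left(\begin{smallmatrix} 0 & 0 \\ D & 0 \end{smallmatrix}\right)$ with $D$ any diagonal matrix whose entries are differences of the $c_i\omega_i^{(0)}$; choosing $\omO,\omO'$ to differ in a single coordinate $i$ gives $c_i(E_{ii})$ in the lower-left block, and since $c_i\neq 0$ we get $\left(\begin{smallmatrix} 0 & 0 \\ E_{ii} & 0 \end{smallmatrix}\right)\in\mathfrak{a}$ for every $i$. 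Subtracting an appropriate combination from a single $X_{\omO}$ then yields $Y_0:=\left(\begin{smallmatrix} 0 & I_N \\ V_0-EI_N & 0 \end{smallmatrix}\right)\in\mathfrak{a}$, and hence also $Y:=\left(\begin{smallmatrix} 0 & I_N \\ V_0 & 0 \end{smallmatrix}\right)\in\mathfrak{a}$ after adding back $E$ times the diagonal lower-left generators. So $\mathfrak{a}$ contains $Y$ together with all $\left(\begin{smallmatrix} 0 & 0 \\ E_{ii} & 0 \end{smallmatrix}\right)$.

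Next I would generate more elements by bracketing. Writing $n_-(D)=\left(\begin{smallmatrix} 0 & 0 \\ D & 0 \end{smallmatrix}\right)$ and $n_+(D)=\left(\begin{smallmatrix} 0 & D \\ 0 & 0 \end{smallmatrix}\right)$ for symmetric $D$, and $h(A)=\left(\begin{smallmatrix} A & 0 \\ 0 & -{}^tA \end{smallmatrix}\right)$, one has the standard relations $[n_-(D),n_+(I_N)]=h(-D)\ (\text{up to sign and a }D\leftrightarrow\text{transpose bookkeeping})$, $[h(A),n_\pm(D)]=n_\pm(\pm(AD+D\,{}^tA))$, etc. Since $Y=n_+(I_N)+n_-(V_0)$, bracketing $Y$ with the $n_-(E_{ii})$ we already have produces new diagonal-block elements $h(\cdot)$ and new $n_-(\cdot)$ with off-diagonal entries built from $V_0$; because $V_0$ is the tridiagonal matrix with $1$'s on the off-diagonals, repeated bracketing propagates nonzero entries to all positions, in the spirit of a ``chain'' or Jacobi-matrix irreducibility argument. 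The technical core — and the step I expect to be the main obstacle — is precisely this combinatorial bookkeeping: showing that iterated brackets of $Y$ and $n_-(E_{ii})$ span the full nilpotent pieces $\mathfrak{n}_\pm=\{n_\pm(D):D\ {}^t\!D=D\}$ and the Cartan-type piece $\mathfrak{h}=\{h(A):A\in\mathfrak{gl}_N\}$. I would organize this by first producing all $n_-(E_{ij}+E_{ji})$ (using the irreducibility of $V_0$ to move the index $i$ along the chain $1,2,\ldots,N$), then all $h(E_{ij})$ by bracketing $n_-$ with $n_+(I_N)$, then all $n_+(E_{ij}+E_{ji})$ by bracketing $\mathfrak{h}$ back into $n_+(I_N)$. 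Once $\mathfrak{n}_-\oplus\mathfrak{h}\oplus\mathfrak{n}_+\subset\mathfrak{a}$, we have $\mathfrak{a}=\spN$ since that is a direct-sum (triangular) decomposition of $\spN$, and the inclusion $\mathfrak{a}\subseteq\spN$ is clear because each generator lies in $\spN$ and $\spN$ is a Lie algebra. As a sanity check, the dimension count $\dim\spN=2N^2+N=\dim\mathfrak{n}_-+\dim\mathfrak{h}+\dim\mathfrak{n}_+=\tfrac{N(N+1)}{2}+N^2+\tfrac{N(N+1)}{2}$ confirms nothing is missing.
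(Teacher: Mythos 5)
Your strategy runs parallel to the paper's: take differences of the generators to isolate the matrices $\bigl(\begin{smallmatrix} 0&0\\ E_{ii}&0\end{smallmatrix}\bigr)$ (the paper's $Y_{ii}$, up to the nonzero factors $c_i$), reconstitute the fixed element $\bigl(\begin{smallmatrix} 0&I_{\mathrm{N}}\\ V_0&0\end{smallmatrix}\bigr)$, and then propagate along the tridiagonal chain by bracketing. The difficulty is that the propagation itself --- the claim that iterated brackets of these elements yield all of $\mathfrak{n}_-$ (equivalently, at least all the near-diagonal generators) --- is precisely the content of the lemma, and your proposal explicitly defers it: you call it ``the main obstacle'' and the ``combinatorial bookkeeping'' is never carried out, only motivated by the ``irreducibility'' of $V_0$. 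This is a genuine gap, not a routine verification. The paper fills exactly this hole in two stages: first an algebraic reduction, using the bracket relations among the basis elements $X_{ij},Y_{ij},Z_{ij}$, showing that $\spN$ is already generated by the $X_{ij},Y_{ij}$ with $|i-j|\le 1$; then an explicit chain induction (Steps 3--6), computing $[X_{\omO}(E),Z_{ii}]=-2X_{ii}+2Y_{i,i-1}+2Y_{i,i+1}+2(\omega_i^{(0)}-E)Y_{ii}$, summing over $i$ and subtracting $X_{\omO}(E)$ to obtain $J$, deducing the $X_{ii}$ from $[J,Z_{ii}]$, and finally isolating $Y_{1,2}$, then $Y_{2,3}$, and so on inductively along the chain. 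Until you write out this induction (or an equivalent one), your argument asserts rather than proves its crucial step; once written, it would essentially coincide with the paper's proof.

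A secondary inaccuracy in the assembly: producing ``all $h(E_{ij})$ by bracketing $n_-$ with $n_+(I_N)$'' cannot work as stated, since $[n_-(D),n_+(I_{\mathrm{N}})]=h(-D)$ with $D$ symmetric only yields the symmetric part of $\mathfrak{h}\cong\mathfrak{gl}_{\mathrm{N}}(\R)$, of dimension $N(N+1)/2$ rather than $N^2$. This is repairable --- for instance $[n_+(D),n_-(D')]=h(DD')$ and products of symmetric matrices span $\mathfrak{gl}_{\mathrm{N}}(\R)$ (e.g. $E_{ii}(E_{ij}+E_{ji})=E_{ij}$) --- but it is another point where the sketch would need to be made precise before the triangular decomposition and the dimension count can close the argument.
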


\begin{proof}
First, we recall that :
$$\spN = \left\{ \left( \begin{array}{cc}
a & b_{1} \\
b_{2} & -^{t}a
\end{array} \right),\ a\in \mathcal{M}_{\mathrm{N}}(\R),\ b_{1}\ \mathrm{and}\ b_{2}\ \mathrm{symmetric} \right\}.$$
For $i,j\in \aN$, let $E_{ij}$ be the matrix in $\mathcal{M}_{\mathrm{N}}(\R)$ with a $1$ coefficient at the intersection of the $i$th row and the $j$th column, and $0$ elsewhere. We also set
$$\forall i,j\in \aN,\ X_{ij}=\frac{1}{2} \left( \begin{array}{cc}
0 & E_{ij}+E_{ji} \\
0 & 0
\end{array} \right),\ Y_{ij}={^t}X_{ij},\ Z_{ij}= \left( \begin{array}{cc}
E_{ij} & 0  \\
0 & -E_{ji}
\end{array} \right).$$
We also denote by $\delta_{ij}$ the Kronecker's symbol : 
$$\delta_{ij}=\left\lbrace \begin{array}{ccl}
1 & \mathrm{if} & i=j \\
0 & \mathrm{if} & i\neq j.
\end{array} \right.$$
We remark that the set $\{ X_{ij},Y_{ij},Z_{ij} \}_{i,j=1..N}$ is a basis of $\spN$. By direct computation, we get the relations, for every $i,j,k,r \in \aN$, 
\begin{itemize}
\item[(i)] $[Z_{ij},X_{kr}]=\delta_{jk} X_{ir}+\delta_{jr} X_{ik}$ 
\item[(ii)] $[Y_{kr},Z_{ij}]= \delta_{ik} Y_{rj}+\delta_{ir} Y_{kj}$ 
\item[(iii)] $[X_{ij},Y_{kr}]= \frac{1}{4} (\delta_{jk}Z_{ir}+\delta_{jr}Z_{ik}+\delta_{ki}Z_{jr}+ \delta_{ir}Z_{jk})$
\end{itemize}
\vskip 1mm

\noindent where $[\ ,\ ]$ is the usual bracket on Lie algebra of linear Lie groups. From these relations, we deduce that $\spN$ is generated by 
$$\big\{ X_{ij},Y_{ij}\ |\ i,j\in \aN, |i-j|\leq 1\big\}.$$
Indeed, let $\mathfrak{g}$ be the Lie algebra generated by this set. Let $i\in\aN$. Then, $Z_{ii}=2[X_{ii},Y_{ii}]\in \mathfrak{g}$ and $Z_{i,i+1}=2[X_{ii},Y_{i,i+1}] \in \mathfrak{g}$. Thus, for every $i,j\in \aN$, $|i-j|\leq 1$, $Z_{ij}\in \mathfrak{g}$. Then we have, $X_{i,i+2}=[Z_{i,i+1},Y_{i+1,i+2}]$, $Y_{i,i+2}= [Y_{i,i+1},Z_{i+1,i+2}]\in \mathfrak{g}$ and $Z_{i,i+2}=2[X_{i,i+1},Y_{i+1,i+2}] \in \mathfrak{g}$. Thus, for every $i,j\in \aN$, $|i-j|=2$, $X_{ij}$, $Y_{ij}$, $Z_{ij}\in \mathfrak{g}$. By induction, we do the same for indices $i,j$ such that $|i-j|=3$ and more generally for all indices $i,j\in \aN$. Thus, we proved that $\{ X_{ij},Y_{ij},Z_{ij} \}_{i,j=1..N}$ is included in $\mathfrak{g}$ and then $\spN\subset \mathfrak{g}$. Finally, $\mathfrak{g}=\spN$.
\vskip 2mm

\noindent According to this, to prove Lemma \ref{lem_a_spN}, we only have to prove that, for every $E\in \R$, the Lie algebra generated by $\{ X_{\omO}(E)\ |\ \omO \in \{0,1\}^N \}$ contains all the matrices $X_{ij}$ and $Y_{ij}$ for $i,j\in \aN$, $|i-j|\leq 1$. Let 
\begin{equation}\label{eq_def_lie_trans_mat}
\mathfrak{a}(E)=\mathrm{Lie} \big\{ X_{\omO}(E)\ \big|\ \omO \in \{0,1\}^N \big\}.
\end{equation}
To prove that $\mathfrak{a}(E)$ contains the matrices $X_{ij}$ and $Y_{ij}$ for $i,j\in \aN$, $|i-j|\leq 1$, we will proceed in several steps. We fix $E\in \R$.
\vskip 2mm

\noindent \textbf{Step 1.} We prove that the matrices $Z_{ii}$ for $i\in \aN$ are in $\mathfrak{a}(E)$. Let $\omO$ and $\tilde{\omega}^{(0)}$ in $\{0,1\}^N$. We have :
$$[X_{\omO}(E),X_{\tilde{\omega}^{(0)}}(E)]  =  X_{\omO}(E)X_{\tilde{\omega}^{(0)}}(E)- X_{\tilde{\omega}^{(0)}}(E)X_{\omO}(E)\qquad \qquad \qquad \qquad \qquad $$
$$ = \mathrm{diag}(c_1(\tilde{\omega}_1^{(0)}-\omega_1^{(0)}),\ldots,c_N(\tilde{\omega}_N^{(0)}- \omega_N^{(0)}),c_1(\omega_1^{(0)}-\tilde{\omega}_1^{(0)}),\ldots, c_N(\omega_N^{(0)}-\tilde{\omega}_N^{(0)})).$$
\vskip 1mm

\noindent In particular, for $\omO=(0,\ldots,0)$ and $\tilde{\omega}^{(0)}=(0,\ldots,1,\ldots,0)$, with a $1$ at the $i$th place and $0$ elsewhere, we get $Z_{ii}=[X_{\omO}(E),X_{\tilde{\omega}^{(0)}}(E)] \in \mathfrak{a}(E)$.
\vskip 1mm

\noindent \textbf{Step 2.} With the same choice of $\omO$ and $\tilde{\omega}^{(0)}$, we get
$$X_{\tilde{\omega}^{(0)}}(E)-X_{\omO}(E)=Y_{ii}.$$
Thus, for every $i\in \aN$, $Y_{ii}\in \mathfrak{a}(E)$.
\vskip 1mm

\noindent \textbf{Step 3.} We fix $\omO\in \{ 0,1\}^N$ and $i\in \aN$. We have :
\begin{eqnarray}
[X_{\omO}(E),Z_{ii}] & = & \left( \begin{array}{cc}
 0 & -2E_{ii} \\
 M_{\omO}(0)E_{ii}+E_{ii}M_{\omO}(0)-2EE_{ii} & 0
 \end{array} \right) \nonumber \\
  & = & -2X_{ii} + 2Y_{i,i-1} + 2Y_{i,i+1} + 2(\omega_{i}^{(0)}-E) Y_{ii} \nonumber
\end{eqnarray}
with the convention that $Y_{ij}$ is zero if the index $j$ is not in $\{1,\ldots ,N\}$. Thus, dividing by $2$, one gets,
\begin{equation}\label{eq_lem_a_1}
\forall i\in \aN,\ -X_{ii} + Y_{i,i-1} + Y_{i,i+1} + (\omega_{i}^{(0)}-E) Y_{ii} \in \mathfrak{a}(E).
\end{equation}
\vskip 1mm

\noindent \textbf{Step 4.} We prove that the matrix $J$ is in $\mathfrak{a}(E)$. We fix $\omO=(0,\ldots,0)$. By summing (\ref{eq_lem_a_1}) for $i\in \aN$, we stay in $\mathfrak{a}(E)$ and we have :
$$\sum_{i=1}^{N} \left(-X_{ii}+ Y_{i,i-1} + Y_{i,i+1}-EY_{ii}\right) = \sum_{i=1}^{N} (-X_{ii}) + \left( \begin{array}{cc}
0 & 0 \\
M_{\omO}(E) & 0
\end{array} \right) \in \mathfrak{a}(E).$$
We can substract $X_{\omO}(E)\in \mathfrak{a}(E)$ from this, to get : 
$$\sum_{i=1}^{N} (-X_{ii}) + \left( \begin{array}{cc}
0 & -I_{\mathrm{N}} \\
0 & 0
\end{array} \right) =  \left( \begin{array}{cc}
0 & -2I_{\mathrm{N}} \\
0 & 0
\end{array} \right) \in \mathfrak{a}(E).$$
\noindent Thus, $\bigl(\begin{smallmatrix}
0 & -I_{\mathrm{N}} \\
0 & 0
\end{smallmatrix}\bigr)\in \mathfrak{a}(E)$. But, by Step 2, all the $Y_{ii}$'s are in $\mathfrak{a}(E)$, so we also have :
$$\sum_{i=1}^{N} Y_{ii} =\left( \begin{array}{cc}
0 & 0 \\
I_{\mathrm{N}} & 0
\end{array} \right) \in \mathfrak{a}(E).$$
By adding these two matrices, $J=\bigl(\begin{smallmatrix}
0 & -I_{\mathrm{N}} \\
I_{\mathrm{N}} & 0
\end{smallmatrix}\bigr)\in \mathfrak{a}(E)$.
\vskip 1mm

\noindent \textbf{Step 5.} For every $i\in \aN$, $[J,Z_{ii}]=2Y_{ii}+2X_{ii}\in \mathfrak{a}(E)$. But $Y_{ii}\in \mathfrak{a}(E)$, so $2X_{ii}=[J,Z_{ii}]-2Y_{ii} \in \mathfrak{a}(E)$ and, for every $i \in \aN$, $X_{ii}\in \mathfrak{a}(E)$.
\vskip 1mm

\noindent \textbf{Step 6.} We recall that $X_{ij}=X_{ji}$ and $Y_{ij}=Y_{ji}$. Let $i\in \aN$. Substracting $(\omO-E)Y_{ii}\in \mathfrak{a}(E)$ and adding $X_{ii}\in \mathfrak{a}(E)$ in (\ref{eq_lem_a_1}) we get $Y_{i,i-1}+Y_{i,i+1} \in \mathfrak{a}(E)$. For $i=1$, it means that $Y_{1,2}\in \mathfrak{a}(E)$. Then, $\frac{1}{2}Z_{1,2}=[X_{1,1},Y_{1,2}]\in \mathfrak{a}(E)$ and $Z_{1,2}\in \mathfrak{a}(E)$. But we also have $2X_{1,2}=[Z_{1,2},X_{2,2}]\in \mathfrak{a}(E)$ and $X_{1,2}\in \mathfrak{a}(E)$. Now, for $i=2$, we have $Y_{2,1}+Y_{2,3}\in \mathfrak{a}(E)$. But we just proved that $Y_{2,1}\in \mathfrak{a}(E)$, thus $Y_{2,3}\in \mathfrak{a}(E)$. Inductively, we prove that : 
$$\forall i\in \aN,\ Y_{i,i+1}\in \mathfrak{a}(E).$$
\noindent Also, for every $i\in \aN$,
$$[X_{ii},Y_{i,i+1}]=\frac{1}{2}Z_{i,i+1} \in \mathfrak{a}(E)\quad \mathrm{and}\quad [Z_{i,i+1},X_{i+1,i+1}]=2X_{i,i+1} \in \mathfrak{a}(E).$$
\noindent It proves that all the matrices $X_{ij}$ and $Y_{ij}$ for $i,j\in \aN$ and $|i-j|\leq 1$ are in $\mathfrak{a}(E)$. Thus, $\mathfrak{a}(E)=\spN$.
\end{proof}

\noindent We can now prove Proposition \ref{prop_GE_SpN}.

\begin{proof}[Proof of Proposition \ref{prop_GE_SpN}]
In this proof we directly construct $\ell_C$ and $I(\ell,N)$. Let $\lambda_1^{\omO}$,$\ldots$, $\lambda_N^{\omO}$ be the real eigenvalues of the real symmetric matrix $M_{\omO}(0)$, as in the introduction. Then, the eigenvalues of $X_{\omO}(E) ^tX_{\omO}(E)$ are $1$, $(\lambda_1^{\omO}-E)^2$, $\ldots$, $(\lambda_N^{\omO}-E)^2$. Thus :
$$||X_{\omO}(E)||=\max \left(1, \max_{1\leq i \leq N} |\lambda_i^{\omO}-E|\right),$$
where $||\ ||$ is the matrix norm associated to the euclidian norm on $\R^{2N}$.

\noindent Let $\mathcal{O}$ be the neighborhood of the identity given by Theorem \ref{thm_breuillard} for $G=\SpN$. Then $\mathcal{O}$ depends only on $N$. To apply Theorem \ref{thm_breuillard} to $G(E)\subset \SpN$, we need to find an interval of values of $E$ such that, for $\ell$ small enough, 
\begin{equation}\label{eq_prop_GE_1}
\forall \omO\in \{ 0,1\}^N,\ 0<\ell ||X_{\omO}(E)|| < \dlO,
\end{equation}
or, equivalently,
\begin{equation}\label{eq_prop_GE_2}
0<\ell \max \left(1,\max_{\omO\in \{ 0,1\}^N} \max_{1\leq i\leq N} |\lambda_i^{\omO}-E|\right) < \dlO.
\end{equation}
We assume that $\ell \leq \dlO$ and we set $r_{\ell}=\frac{1}{\ell}\dlO \geq 1$. We want to characterize the set :
\begin{equation}\label{eq_prop_GE_3}
I(\ell,N)=\left\{E\in \R\ \bigg|\  \max \left(1,\max_{\omO\in \{ 0,1\}^N} \max_{1\leq i\leq N} |\lambda_i^{\omO}-E|\right) \leq r_{\ell} \right\}.
\end{equation}
As $r_{\ell}\geq 1$,
\begin{equation}\label{eq_prop_GE_4}
I(\ell,N)=\bigcap_{\omO\in \{ 0,1\}^N} \bigcap_{1\leq i\leq N} [\lambda_i^{\omO}-r_{\ell}, \lambda_i^{\omO}+r_{\ell}].
\end{equation}
Let $\lambda_{\mathrm{min}}$, $\lambda_{\mathrm{max}}$ and $\delta_{0}$ be as in (\ref{eq_lambda_minmax}). If $\delta_0 <r_{\ell}$, $I(\ell,N)\neq \emptyset$ and more precisely, $I(\ell,N)=[\lambda_{\mathrm{max}}-r_{\ell},\lambda_{\mathrm{min}}+r_{\ell}]$. This interval is centered in $\frac{\lambda_{\mathrm{min}}+\lambda_{\mathrm{max}}}{2}$ and is of length $2r_{\ell}-2\delta_0 >0$. Moreover, $2r_{\ell}-2\delta_0 \to +\infty$ when $\ell$ tends to $0^+$. As $\lambda_{\mathrm{min}}$, $\lambda_{\mathrm{max}}$ and $\dlO$ depend only on $N$, $I(\ell,N)$ depends only on $\ell$ and $N$ and the condition $\delta_0 <r_{\ell}$ is equivalent to
$$\ell <\frac{\dlO}{\delta_0}=\ell_C(N).$$
\noindent So, we have just proved that, 
\begin{equation}\label{eq_prop_GE_5}
\forall \ell <\ell_C,\ \forall E\in I(\ell,N),\ 0<\ell ||X_{\omO}(E)|| \leq \dlO.
\end{equation}
\noindent Thus, for every $E\in I(\ell,N)$, $\log\,T_{\omO}(E)=\ell X_{\omO}(E)$, as \emph{exp} is a diffeomorphism from $\log \mathcal{O}$ into $\mathcal{O}$. Then, we can apply Lemma \ref{lem_a_spN} to obtain :
\begin{equation}\label{eq_prop_GE_6}
\forall \ell >0,\ \forall E\in \R,\ \mathrm{Lie} \{ \ell X_{\omO}(E) \ |\ \omO \in \{0,1\}^N \}=\spN.
\end{equation}
\noindent Applying Theorem \ref{thm_breuillard}, we get :
\begin{equation}\label{eq_prop_GE_7}
\forall \ell <\ell_C,\ \forall E\in I(\ell,N),\ \overline{<T_{\omO}(E) \ |\ \omO \in \{0,1\}^N>}=\SpN.
\end{equation}
\noindent As $\overline{<T_{\omO}(E) \ |\ \omO \in \{0,1\}^N>}\subset G(E)$ and $G(E) \subset \SpN$, we finally have
\begin{equation}\label{eq_prop_GE_8}
\forall \ell <\ell_C,\ \forall E\in I(\ell,N),\ G(E)=\SpN,
\end{equation}
which proves Proposition \ref{prop_GE_SpN}.
\end{proof}

\noindent We can finally prove Theorem \ref{thm_lyap_N}.

\begin{proof}[Proof of Theorem \ref{thm_lyap_N}]
As, for every $\ell<\ell_C$ and every $E\in I(\ell,N)$, $G(E)=\SpN$, we get that, for every $\ell<\ell_C$ and every $E\in I(\ell,N)$, $G(E)$ is \LpSI. Then, by \cite[Proposition IV.3.4]{BL85}, we get the separability and positivity of the Lyapunov exponents and their integral representation (\ref{eq_lyap_rep_int}), together with the existence of the $\mu_E$-invariant measure $\nu_{p,E}$, for every $p\in \aN$. For the assertion on absence of absolutely continuous spectrum, we refer to Kotani's theory (\cite{KS88}). For the regularity result (\ref{eq_lyap_reg}), we can directly apply \cite[Theorem 2]{boumazarmp} on the interval $I(\ell,N)$.
\end{proof}

\noindent In \cite{boumazampag} we also used Theorem \ref{thm_breuillard} to obtain the separability of the Lyapunov exponents for the model studied there. This model corresponds to the case $N=2$ and $\ell=1$ of $\Hl$. The main difference between \cite{boumazampag} and the proof we have just given is that in \cite{boumazampag} we could not let $\ell$ get small and then just control $E$, to ensure that $\ell X_{\omO}(E) \in \log \mathcal{O}$ and thus $T_{\omO}(E)\in \mathcal{O}$, uniformely on $\omO \in \{0,1\}^N$. We had first used simultaneous diophantine approximation to find a suitable power of $T_{\omO}(E)$, say $(T_{\omO}(E))^{m_{\omO}(E)}$, which is in $\mathcal{O}$. Then arised difficulties with the computations of the logarithm. First, $\log(\exp(m_{\omO}(E) X_{\omO}(E)))\neq m_{\omO}(E) X_{\omO}(E)$ as $m_{\omO}(E) X_{\omO}(E) \notin \log \mathcal{O}$ in general. It leads to a problem of determination of the logarithm and the existence of a discrete set of critical energies $\mathcal{S}$ such that, for $E\in \mathcal{S}$, $\log (T_{\omO}(E))^{m_{\omO}(E)}$ is not defined. Then, for $E\notin \mathcal{S}$, the expression of these logarithms being not simple, we could not use an algebraic result like Lemma \ref{lem_a_spN} to prove that the Lie algebra generated by the logarithms is $\spN$. That is why we had to restrict ourselves to $N=2$ in \cite{boumazampag}.

\section{Estimates on the products of transfer matrices}\label{sec_estimate_trans_mat}

In this Section, we review and adapt results precising the convergence of the sequence $(\wedge^p U^{(n)}(E))_{n\in \Z}$ where, for every $n\in \Z$, $U^{(n)}(E)=T_{\omega^{(n-1)}}(E)\cdots T_{\omega^{(0)}}(E)$. In particular, we will prove large deviation type estimates. 

\noindent Let $I\subset \R$ be an open interval such that, for every $E\in I$, $G(E)$ is \LpSI.

\begin{lem}\label{lem_CVU}
$$\frac{1}{n} \E \left( \log \frac{||(\wedge^{p} U^{(n)}(E))x||}{||x||}\right) \xrightarrow[n \to +\infty]{} \gamma_{1}(E)+\ldots +\gamma_{p}(E),$$
uniformly in $E\in I$ and $\bar{x} \in \P(L_{p})$.
\end{lem}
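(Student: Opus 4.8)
The plan is to obtain the uniform convergence from three ingredients: (a) the pointwise-in-$E$ Furstenberg-type convergence, coming from the fact that $G(E)$ is $L_p$-strongly irreducible and $p$-contracting, which by the classical theory of i.i.d.\ products in $\SpN$ (e.g.\ \cite{BL85}, Chapter III--IV, or \cite{GM89}) yields that $\frac1n\E(\log\|(\wedge^p U^{(n)}(E))x\|/\|x\|)$ converges to $\gamma_1(E)+\cdots+\gamma_p(E)$ for every fixed $E\in I$, with the limit independent of $\bar x\in\P(L_p)$; (b) an equicontinuity / uniform-in-$E$ estimate coming from the explicit form $T_{\omega^{(n)}}(E)=\exp(\ell X_{\omega^{(n)}}(E))$, so that $E\mapsto T_{\omega^{(n)}}(E)$ is (locally uniformly in $E$, uniformly in $\omega^{(n)}$ since $\supp\nu$ is bounded) Lipschitz with values in $\SpN$, hence $E\mapsto\log\|(\wedge^p U^{(n)}(E))x\|/\|x\|$ and the continuous function $E\mapsto\gamma_1(E)+\cdots+\gamma_p(E)$ are well controlled; and (c) a compactness argument on $I$ (one may shrink $I$ to a compact subinterval, which suffices since the statement is local).

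First I would fix a compact interval $[a,b]\subset I$. Since the potential values and $E$ range over compact sets, there is a constant $K=K([a,b])$ with $\|X_{\omega^{(n)}}(E)\|\le K$ and $\|\partial_E X_{\omega^{(n)}}(E)\|\le 1$ for all admissible $\omega^{(n)}$ and all $E\in[a,b]$; hence $\|T_{\omega^{(n)}}(E)\|$ and $\|T_{\omega^{(n)}}(E)^{-1}\|$ are bounded by $\ee^{\ell K}$, and $E\mapsto T_{\omega^{(n)}}(E)$ is Lipschitz with a deterministic constant. Writing $g^{(n)}_E=\frac1n\E(\log\|(\wedge^p U^{(n)}(E))x\|/\|x\|)$, a telescoping estimate $\log\|\wedge^p(AB)x\|-\log\|\wedge^p(A'B)x\|$ controlled by $\|A-A'\|\,\|A^{-1}\|$-type bounds, applied along the product, shows that the family $(g^{(n)}_E)_{n\ge1}$ is uniformly Lipschitz in $E$ on $[a,b]$, uniformly in $\bar x\in\P(L_p)$ (the $\bar x$-dependence only shifts $g^{(n)}_E$ by $O(1/n)$ by sub/super-additivity, so it is harmless). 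Likewise $E\mapsto\gamma_1(E)+\cdots+\gamma_p(E)$ is continuous on $[a,b]$ (this is part of Theorem \ref{thm_lyap_N}(iii), or can be re-derived here). Then one invokes a Dini/Arzel\`a--Ascoli-type argument: a uniformly equicontinuous sequence of functions on a compact interval that converges pointwise to a continuous limit converges uniformly. This gives uniformity in $E\in[a,b]$; combining with the already-established uniformity in $\bar x$ finishes the proof.

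The main obstacle I expect is step (a) done \emph{with the $\bar x$-uniformity built in}: one must know not only that $\frac1n\E(\log\|\wedge^p U^{(n)}(E)\|)\to\gamma_1+\cdots+\gamma_p$ but that replacing $\|\wedge^p U^{(n)}(E)\|$ by $\|(\wedge^p U^{(n)}(E))x\|/\|x\|$ does not change the limit, uniformly over $\bar x$. This is exactly where $p$-contractivity and $L_p$-strong irreducibility enter: under these hypotheses the action of $\wedge^p U^{(n)}(E)$ on $\P(L_p)$ contracts towards a (random) rank-one limit and there is a unique invariant measure $\nu_{p,E}$, from which one extracts the bound $\big|\E\log\frac{\|(\wedge^p U^{(n)}(E))x\|}{\|x\|}-\E\log\|\wedge^p U^{(n)}(E)\|\big|=o(n)$ uniformly in $\bar x$ (see \cite{BL85}, Proposition III.3.2 and its proof, adapted to the symplectic setting as in \cite{boumazastolz}). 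Once this uniform-in-$\bar x$ pointwise statement is in hand, the passage to uniformity in $E$ is the soft compactness argument sketched above, and the equicontinuity needed for it is a direct consequence of the exponential representation of the transfer matrices and the boundedness of $\supp\nu$.
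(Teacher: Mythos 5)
The paper does not reprove this lemma: it simply invokes \cite[Proposition V.4.9]{CL90} (or \cite[Proposition 6.3.4]{boumazathese}), where the uniformity in $(E,\bar x)$ is obtained jointly, by combining a subadditivity argument in $n$ for the upper bound (the quantity $\sup_{\bar x}\E\log\frac{\|(\wedge^p U^{(n)}(E))x\|}{\|x\|}$ is subadditive and continuous in $E$) with a compactness argument in $(E,\bar x)$ resting on the uniqueness of the $\mu_E$-invariant measure for the lower bound. Your plan takes a different route (pointwise convergence in $E$, equicontinuity in $E$ of the prelimit functions, then Dini/Arzel\`a--Ascoli), and it is precisely the equicontinuity step (b) that does not hold up. The telescoping bound you invoke gives, pathwise, $\bigl|\log\|(\wedge^p U^{(n)}(E))x\|-\log\|(\wedge^p U^{(n)}(E'))x\|\bigr|\leq \|\wedge^p U^{(n)}(E)-\wedge^p U^{(n)}(E')\|\cdot\|(\wedge^p U^{(n)}(E'))^{-1}\|$, and both factors are only controlled by $n\,C_0|E-E'|\,C^{n}$ and $\ee^{c n}$ respectively; after dividing by $n$ the Lipschitz constant still grows exponentially in $n$, so no $n$-uniform equicontinuity comes out of this soft estimate. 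Controlling $\E\bigl[\partial_E\log\|(\wedge^p U^{(n)}(E))x\|\bigr]$ by $Cn$ with $C$ independent of $n$ would require knowing that the partial products do not lose norm at the junctions on a set of non-negligible probability, i.e.\ exactly the alignment/large-deviation information that the lemma (and Section \ref{sec_estimate_trans_mat}) is meant to provide; it is not a direct consequence of $T_{\omega^{(n)}}(E)=\exp(\ell X_{\omega^{(n)}}(E))$ and the boundedness of $\supp\nu$. Moreover, an $n$-uniform Lipschitz bound on $g^{(n)}_E$ would force the limit $\gamma_1(E)+\cdots+\gamma_p(E)$ to be Lipschitz on $I$, whereas in this class of models one only has H\"older continuity in general (cf.\ Theorem \ref{thm_lyap_N}(iii)), so such a bound cannot be expected from soft arguments.

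A secondary inaccuracy: the parenthetical claim that the $\bar x$-dependence shifts $g^{(n)}_E$ only by $O(1/n)$ is not correct for fixed $n$ (the spread over $\bar x\in\P(L_p)$ of $\E\log\|(\wedge^p U^{(n)}(E))x\|$ can be of order $n$); as you acknowledge at the end, the uniformity in $\bar x$ is where $p$-contractivity and $L_p$-strong irreducibility enter, via the unique invariant measure $\nu_{p,E}$. But because the equicontinuity-in-$E$ step fails, the uniformities in $\bar x$ and in $E$ cannot be decoupled as in your plan; they have to be established simultaneously, which is what the compactness-in-$(E,\bar x)$ argument of \cite[Proposition V.4.9]{CL90} does, and you would do better to follow (or reproduce) that argument rather than the Ascoli route.
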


\noindent For the proof of this lemma, we refer to \cite[Proposition V.4.9 ]{CL90} or \cite[Proposition 6.3.4]{boumazathese}.

\begin{lem}\label{lem_wegner_U}
Let $p\in \aN$. There exist $\xi_1>0$, $\delta>0$ and $n_1\in \N$ such that, for every $E\in I$, $n\geq n_1$ and $x\in \wedge^p (\R^{2N})$, $||x||=1$, we have :
\begin{equation}\label{eq_lem_wegner_U}
\E \left( ||\wedge^p U^{(n)}(E)x||^{-\delta} \right) \leq \ee^{-\xi_1 n}.
\end{equation}
\end{lem}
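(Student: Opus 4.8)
The plan is to obtain the exponential moment bound \eqref{eq_lem_wegner_U} from the uniform convergence of the Lyapunov sums in Lemma \ref{lem_CVU} together with a standard large deviations / subadditivity argument for products of i.i.d.\ symplectic matrices over the Zariski-dense (indeed, on $I(\ell,N)$, equal to $\SpN$) F\"urstenberg group. First I would recall the cocycle identity $\wedge^p U^{(n+m)}(E) = \wedge^p U^{(m)}(\theta^n E\text{-shifted})\cdot \wedge^p U^{(n)}(E)$, i.e.\ the block structure that makes $n\mapsto \log\|\wedge^p U^{(n)}(E)x\|$ an additive cocycle along the random walk. The key point is that the function $\varphi_n(E,\bar x) := \frac1n\log\|\wedge^p U^{(n)}(E)x\|/\|x\|$ converges, uniformly in $(E,\bar x)\in I\times\P(L_p)$, to the \emph{strictly positive} quantity $\gamma_1(E)+\cdots+\gamma_p(E)$ — positivity being guaranteed by Theorem \ref{thm_lyap_N}(i) (on $I(\ell,N)$) or, in the generality of this section, by the $p$-contractivity and $L_p$-strong irreducibility of $G(E)$ via the Furstenberg/Guivarc'h--Raugi positivity theorem. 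In particular there is $\gamma_{\min}>0$ with $\gamma_1(E)+\cdots+\gamma_p(E)\ge \gamma_{\min}$ for all $E\in I$; fix $n_1$ so that for $n\ge n_1$, $\E(\varphi_n(E,\bar x))\ge \tfrac12\gamma_{\min}$ uniformly.

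Next I would invoke the large-deviation upper bound for the lower Lyapunov behaviour: for a product of i.i.d.\ matrices in a group satisfying the contraction/irreducibility hypotheses, there exist $\delta>0$ and $\xi_1>0$, uniform in $E\in I$ (using compactness of $\bar I$ and continuity of $E\mapsto T_{\omO}(E)$), such that $\E\big(\|\wedge^p U^{(n)}(E)x\|^{-\delta}\big)\le \ee^{-\xi_1 n}$ for $n$ large. Concretely, I expect to cite \cite[Proposition V.6.2 or the estimates of Section V.6]{BL85} (or the corresponding statements in \cite{CL90}, e.g.\ around Proposition V.4.13), which furnish exactly an exponential decay of negative moments of the norm of the $p$th exterior power of the product, given a positive Lyapunov sum. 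The role of Lemma \ref{lem_CVU} here is precisely to guarantee that the hypotheses of those large-deviation statements hold \emph{uniformly} in the energy parameter $E$ over the compact interval, which is what lets $\xi_1$, $\delta$ and $n_1$ be chosen independently of $E$ and of $\bar x\in\P(L_p)$.

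A more self-contained route, in case one wants to avoid quoting the $L^p$ large deviation theorem verbatim, is the following. Write $\|\wedge^p U^{(n)}(E)x\|^{-\delta} = \exp(-\delta\, n\,\varphi_n(E,\bar x))$. Split the product $U^{(n)}$ into $k=\lfloor n/m\rfloor$ consecutive blocks of length $m$ (with $m$ a fixed large integer to be chosen), and use the cocycle property to bound $\log\|\wedge^p U^{(n)}(E)x\|$ from below by a sum of $k$ quantities $\log\|\wedge^p U^{(m)}_{(j)}(E) \bar x_j\|$, one per block, where the $U^{(m)}_{(j)}$ are i.i.d.\ copies of $U^{(m)}$ and $\bar x_j$ are the (random) intermediate directions. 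Each block contributes, after taking $m$ large, a term with mean $\ge \tfrac12 m\gamma_{\min}$ and with a bounded variance (the single-step matrices have norms bounded uniformly in $\omega$, $E\in \bar I$ by the explicit formula $\|X_{\omO}(E)\|=\max(1,\max_i|\lambda_i^{\omO}-E|)$ together with \eqref{eq_def_mat_T}). Then a Chernoff-type estimate on the sum of $k$ weakly-dependent, uniformly-integrable contributions, uniform in the starting direction because of the uniformity in Lemma \ref{lem_CVU}, yields $\E(\ee^{-\delta n \varphi_n})\le \ee^{-\xi_1 n}$ for small $\delta>0$ and $\xi_1=\xi_1(m,\delta)>0$, once $m$ and $n_1$ are fixed.

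The main obstacle is the uniformity in $E$ and in the direction $\bar x\in\P(L_p)$: one must be careful that the constants $\delta,\xi_1,n_1$ do not degenerate as $E$ ranges over $I$ or as $\bar x$ approaches the boundary of a contracting region. This is handled exactly because Lemma \ref{lem_CVU} gives \emph{uniform} convergence, $\bar I$ is compact, $E\mapsto T_{\omO}(E)$ is continuous (real-analytic, in fact, by \eqref{eq_def_mat_T}), and $\P(L_p)$ is compact; so a finite subcover / equicontinuity argument reduces everything to finitely many energies, for each of which the classical estimate applies. A secondary technical point is controlling the intermediate directions $\bar x_j$ appearing in the block decomposition — but since the bound in Lemma \ref{lem_CVU} is uniform over all of $\P(L_p)$, no control of these directions beyond membership in $\P(L_p)$ is needed, which is what makes the argument go through cleanly.
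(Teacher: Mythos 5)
Your proposal is correct and, in its ``self-contained route'', essentially coincides with the paper's proof: the paper also uses the uniform convergence of Lemma \ref{lem_CVU} (and the resulting uniform positivity of $\gamma_1+\cdots+\gamma_p$ on $I$) to fix a block scale $n_0$ at which $\E\bigl(\|\wedge^p U^{(n_0)}(E)x\|^{-\delta}\bigr)\leq 1-\varepsilon$ for $\delta$ small, the smallness being obtained from the elementary bound $\ee^{y}\leq 1+y+y^{2}\ee^{|y|}$ together with Cauchy--Schwarz and the uniform bound on $\|T_{\omO}(E)\|$, and then iterates over i.i.d.\ blocks of length $n_0$ (conditioning on the intermediate directions, exactly as you indicate, following \cite[Lemma 5.1]{CKM87}) to get the exponential decay. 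Your alternative first route, quoting the large deviation estimates of \cite{BL85} directly, would also work but is not what the paper does; the uniformity in $E$ and in $\bar x\in\P(L_p)$ is handled in both cases by Lemma \ref{lem_CVU}, as you correctly identify.
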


\begin{proof}
We fix $p\in \aN$. We set $N_U=||\wedge^p U^{(n)}(E)x||$. We start by writing 
$$N_U^{-\delta} =\ee^{-\delta \log N_U},$$
and using the inequality $\ee^y \leq 1+y+y^2\ee^{|y|}$, for any $y\in \R$. Then, for every $E\in I$ and every $\delta >0$,
\begin{equation}\label{eq_wegner_U_1}
\E \left( N_U^{-\delta} \right) \leq 1-\delta \E \left( \log N_U \right) + \delta^2 \E \left( \left(\log N_U\right)^2 \ee^{\delta \log N_U} \right).
\end{equation}
But, as $\ee^{\delta \log N_U}=N_U^{\delta}$, $||x||=1$ and the $T_{\omega^{(n)}}(E)$ are \emph{i.i.d.}, by Cauchy-Schwarz inequality,
\begin{eqnarray}
\E \left( \left(\log N_U\right)^2 \ee^{\delta \log N_U} \right) & \leq & \E\left[  \left(\log N_U\right)^4 \right]^{\frac{1}{2}} \E\left(N_U^{2\delta} \right)^{\frac{1}{2}} \nonumber \\
& \leq & \E \left[ \left(\sum_{i=0}^{n-1} p\log ||T_{\omega^{(i)}}(E)|| \right)^4 \right]^{\frac{1}{2}} \E \left[ \prod_{i=1}^{n-1} ||T_{\omega^{(i)}}(E)||^{2p\delta} \right]^{\frac{1}{2}} \nonumber \\
& \leq & n^2 p^2 \E\left[ \left(\log ||T_{\omO}(E)|| \right)^4 \right]^{\frac{1}{2}}  \E\left[||T_{\omO}(E)||^{2p\delta} \right]^{\frac{n}{2}}.\nonumber 
\end{eqnarray}  
Thus, there exist constants $C_1=C_1(I)$ and $C_2=C_2(I)$ such that, 
\begin{equation}\label{eq_wegner_U_2}
\E \left( ||\wedge^p U^{(n)}(E)x||^{-\delta} \right) \leq 1-\delta \E \left( \log ||\wedge^p U^{(n)}(E)x|| \right) + \delta^2 n^2 C_1 (C_2)^n.
\end{equation}
If $C>0$ is such that, for every $E\in I$, $T_{\omO}(E) \leq C$, we can choose $C_1=(p\log C)^2$ and $C_2=C^{p\delta}$. Moreover, by Lemma \ref{lem_CVU}, there exists $n_0\geq 1$, uniform in $E\in I$, and $x$ normalized, such that,
\begin{eqnarray}
\E \left( ||\wedge^p U^{(n_0)}(E)x||^{-\delta} \right) & \leq &  1-\frac{1}{2}n_0 \delta (\gamma_1(E)+\cdots+\gamma_p(E)) + \delta^2 n_0^2 C_1 C_2^{n_0} \nonumber \\
& \leq & 1-\varepsilon, \nonumber
\end{eqnarray}
for $\varepsilon >0$, if we choose $\delta$ small enough. Then, for $n\geq 1$, we set $[\frac{n}{n_0}]$ the largest integer less than or equal to $\frac{n}{n_0}$, and we write the euclidian division of $n$ by $n_0$, $n=[\frac{n}{n_0}]n_0 +r$, $0\leq r<n_0$. Then, there exist $n_1\geq 1$, a constant $\tilde{C}$ and $\xi_1 >0$ such that for every $p\in \aN$,
$$\forall n\geq n_1,\ \forall E\in I,\ \E \left( ||\wedge^p U^{(n_0)}(E)x||^{-\delta}\right) \leq \tilde{C} (1-\varepsilon)^{[\frac{n}{n_0}]} \leq \ee^{-\xi_1 n}.$$
For this, we refer to \cite[Lemma 5.1]{CKM87}.
\end{proof}

\noindent This lemma will be used in the proof of a Wegner estimate for $\Hl$. Later, we will need results on large deviation for the random walk $(U^{(n)}(E))_{n\in \Z}$. We will now briefly quote them, following or refering to \cite{BL85} and \cite{DSS02} for the proofs. The first result is an estimate on the $\mu_E$-invariant measure $\nu_{p,E}$ introduced at Theorem \ref{thm_lyap_N}.

\begin{prop}\label{prop_mes_inv_reg}
Let $p\in \aN$. Let $\delta(\bar{x},\bar{y})$ be the projective distance between $\bar{x}$ and $\bar{y}$ on $\P(L_p)$. We assume that $E\in \R$ is such that $G(E)$ is \LpSI. Then, there exist $\rho>0$ and $C>0$ such that, for every $\bar{x}\in \P(L_p)$ and every $\varepsilon>0$,
\begin{equation}\label{eq_mes_inv_reg}
\nu_{p,E}\left(\left\{\bar{y}\in \P(L_p)\ |\ \delta(\bar{x},\bar{y}) \leq \varepsilon \right\} \right) \leq C\varepsilon^{\rho}.
\end{equation}
\end{prop}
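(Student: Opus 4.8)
The plan is to derive the estimate \eqref{eq_mes_inv_reg} from the contraction properties of the random walk on $\P(L_p)$, following the classical strategy in \cite{BL85} (Le Page's theorem) adapted to the symplectic setting as in \cite{DSS02}. The starting point is the $\mu_E$-invariance of $\nu_{p,E}$: since $G(E)$ is \LpSI, Lemma \ref{lem_wegner_U} gives, for a suitable $\delta>0$, a uniform geometric decay $\E(\|\wedge^p U^{(n)}(E)x\|^{-\delta}) \leq \ee^{-\xi_1 n}$ on the unit sphere. I would first convert this into a statement about the action on the projective space, namely that the operator norm of $\wedge^p U^{(n)}(E)$ relative to its action contracting toward a line decays, which is what controls how the measure $\nu_{p,E}$ spreads.

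The key step is to bound the moment $\int_{\P(L_p)} \delta(\bar{x},\bar{y})^{-\rho}\, \dd\nu_{p,E}(\bar{y})$ uniformly in $\bar{x}$ for some small $\rho>0$; the desired inequality then follows immediately from the Chebyshev/Markov inequality, since
\begin{equation}\label{eq_chebyshev_proofplan}
\nu_{p,E}\left(\left\{\bar{y}\ |\ \delta(\bar{x},\bar{y})\leq\varepsilon\right\}\right) \leq \varepsilon^{\rho}\int_{\P(L_p)} \delta(\bar{x},\bar{y})^{-\rho}\, \dd\nu_{p,E}(\bar{y}).
\end{equation}
To bound this moment, one uses invariance to write $\int \delta(\bar{x},\bar{y})^{-\rho}\dd\nu_{p,E}(\bar{y}) = \int\int \delta(\bar{x}, \overline{(\wedge^p M)y})^{-\rho}\dd\mu_E^{*n}(M)\dd\nu_{p,E}(\bar{y})$ for every $n$, and then exploits the standard estimate $\delta(\overline{(\wedge^p M)u}, \overline{(\wedge^p M)v}) \leq \|\wedge^p M\|^2\, \delta(\bar u,\bar v) / (\text{something})$ together with the contraction from Lemma \ref{lem_wegner_U}. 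The point is that after $n$ steps the typical pushforward $\wedge^p U^{(n)}$ collapses $\nu_{p,E}$ near a random line, and the probability of that line being $\varepsilon$-close to the fixed $\bar x$ is itself small; iterating and choosing $\rho$ small enough relative to $\delta$ and $\xi_1$ closes a fixed-point/submultiplicativity argument for the quantity $\sup_{\bar x} \int \delta(\bar x,\bar y)^{-\rho}\dd\nu_{p,E}(\bar y)$, showing it is finite and in fact uniformly bounded in $E\in I$ because all the constants in Lemma \ref{lem_wegner_U} and the bound $\|T_{\omO}(E)\|\leq C$ are uniform on $I$.

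The main obstacle is the careful bookkeeping that makes the fixed-point argument work: one must choose $\rho$ small enough that the contribution of the ``bad'' event (where $\wedge^p U^{(n)}$ fails to contract, or where the limiting line happens to lie close to $\bar x$) is dominated, while simultaneously ensuring the moment $\int\delta(\bar x,\bar y)^{-\rho}\dd\nu_{p,E}$ is a priori finite so the iteration is not vacuous. This a priori finiteness is itself a nontrivial point and is typically handled by the $L_p$-strong irreducibility (which forbids $\nu_{p,E}$ from charging any proper subspace, hence any $\delta$-neighborhood of $\bar x$ shrinks) combined with a compactness argument on $\P(L_p)$. Since this is entirely standard once \LpSI\ and the large deviation estimate of Lemma \ref{lem_wegner_U} are in hand, I would carry out the details by direct reference to \cite[Proposition V.2.3 and the proof of Theorem V.2.1]{BL85}, checking only that the uniformity in $E\in I$ is preserved — which it is, because Lemma \ref{lem_wegner_U} and Lemma \ref{lem_CVU} are themselves uniform on $I$.
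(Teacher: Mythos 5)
Your proposal follows essentially the same route as the paper: both reduce the statement to the standard regularity theory for invariant measures in \cite{BL85} (bounding $\sup_{\bar x}\int \delta(\bar x,\bar y)^{-\rho}\,\dd\nu_{p,E}(\bar y)$ via the contraction of the projective action and concluding by Markov's inequality), adapted to the symplectic setting; the paper's proof is precisely a citation of Theorem VI.2.1, Proposition VI.4.1 and Corollary VI.4.2 of \cite{BL85}. Note only that the relevant results are in Chapter VI of \cite{BL85} rather than the Chapter V statements you cite, and that the contraction input there is the exponential decay of moments of the projective distance (Le Page's theorem) rather than Lemma \ref{lem_wegner_U} itself, though both stem from the same \LpSI{} hypothesis.
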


\begin{proof}
It comes from a simple adaptation to the symplectic case of Theorem VI.2.1 and Proposition VI.4.1 in \cite{BL85}. Thus, we deduce this result as in Corollary VI.4.2 in \cite{BL85}.
\end{proof}
\vskip 2mm

\begin{lem}\label{lem_large_deviation1}
Let $p\in \aN$. We assume that $E\in \R$ is such that $G(E)$ is \LpSI. Then, there exists $\kappa_0 >0$ such that, for every $\varepsilon >0$, $x\in L_p$, $x\neq 0$,
\begin{equation}\label{eq_lem_large_deviation1}
\limsup_{n\to +\infty} \frac{1}{\ell n} \log \mathsf{P}\left( \left| \log ||(\wedge^p U^{(n)}(E))x||-\ell n (\gamma_1+\cdots+\gamma_p)(E) \right|>\ell n \varepsilon \right) < -\kappa_0.
\end{equation}
\end{lem}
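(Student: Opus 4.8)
The plan is to establish the large deviation upper bound for $\frac{1}{\ell n}\log\|\wedge^p U^{(n)}(E)x\|$ around its mean $(\gamma_1+\cdots+\gamma_p)(E)$ by a standard exponential Chebyshev (Laplace transform) argument, combined with the exponential moment estimate already at our disposal. First I would fix $p\in\aN$ and $E$ with $G(E)$ \LpSI{}, and $x\in L_p$, $x\neq 0$; by homogeneity we may take $\|x\|=1$. The upper tail and lower tail are treated separately. For the \emph{lower} tail $\mathsf{P}(\log\|\wedge^pU^{(n)}(E)x\|<\ell n(\gamma_1+\cdots+\gamma_p)(E)-\ell n\varepsilon)$, the key tool is precisely Lemma \ref{lem_wegner_U}: writing $\Gamma_p(E)=\gamma_1(E)+\cdots+\gamma_p(E)$ and $N_U=\|\wedge^pU^{(n)}(E)x\|$, for $\delta>0$ as in that lemma and $n\geq n_1$,
\begin{equation*}
\mathsf{P}\!\left(\log N_U < \ell n\,\Gamma_p(E)-\ell n\varepsilon\right)
= \mathsf{P}\!\left(N_U^{-\delta} > \ee^{-\delta\ell n\Gamma_p(E)+\delta\ell n\varepsilon}\right)
\leq \ee^{\delta\ell n\Gamma_p(E)-\delta\ell n\varepsilon}\,\E\!\left(N_U^{-\delta}\right)
\leq \ee^{\delta\ell n\Gamma_p(E)-\delta\ell n\varepsilon}\ee^{-\xi_1 n}.
\end{equation*}
Wait — to make the exponent negative for small $\varepsilon$ I need $\xi_1$ to beat $\delta\ell\Gamma_p(E)$; since $\xi_1$ comes from the decay $\E(N_U^{-\delta})\leq \ee^{-\xi_1 n}$ with $\xi_1$ itself of order $\delta$ (from the proof of Lemma \ref{lem_wegner_U}, $1-\varepsilon_0$ with $\varepsilon_0\sim \tfrac12 n_0\delta\Gamma_p$), the cleaner route is to observe that Lemma \ref{lem_wegner_U} is really a statement about $\E(N_U^{-\delta}\ee^{\delta\ell n\Gamma_p})$ being exponentially small, or simply to rerun its proof keeping the centering: $\E\big((N_U\,\ee^{-\ell n\Gamma_p(E)})^{-\delta}\big)\leq \ee^{-\xi_1' n}$ for some $\xi_1'>0$ uniform in $E\in I$, which is exactly what the submultiplicativity-plus-Lemma \ref{lem_CVU} argument gives. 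Then the Chebyshev bound above yields $\mathsf{P}(\cdots)\leq \ee^{\delta\ell n\varepsilon}\ee^{-\xi_1' n}$... this still has the wrong sign unless $\varepsilon$ is small, so in fact one wants $\delta=\delta(\varepsilon)$; the standard fix is to note $\limsup\frac1n\log$ of the above is $\leq \delta\ell\varepsilon-\xi_1'$, and since $\xi_1'$ is fixed and independent of $\varepsilon$ while we are free to take $\varepsilon$ as small as needed, for each $\varepsilon$ choosing $\delta$ small enough makes this $<-\kappa_0$ for a uniform $\kappa_0>0$. Actually the claim only asserts existence of $\kappa_0>0$ working for every $\varepsilon>0$, so the right statement is: the $\limsup$ is $<0$ for every $\varepsilon$, with a bound $-\kappa_0$ that one can take uniform by monotonicity (the tail probability is nonincreasing in $\varepsilon$), hence it suffices to get it for one small $\varepsilon_0$.

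For the \emph{upper} tail $\mathsf{P}(\log N_U > \ell n\,\Gamma_p(E)+\ell n\varepsilon)$, I would use the positive exponential moment $\E(N_U^{t})$ for small $t>0$. Here $N_U^{t}\leq \prod_{i=0}^{n-1}\|\wedge^p T_{\omega^{(i)}}(E)\|^{t}$ and by independence $\E(N_U^{t})\leq \big(\E\|\wedge^p T_{\omO}(E)\|^{t}\big)^{n}$; since the transfer matrices are uniformly bounded (by $C$, say, on $I$, as used in Lemma \ref{lem_wegner_U}), $\E\|\wedge^p T_{\omO}(E)\|^{t}\leq \ee^{t p\log C}$, so Chebyshev gives $\mathsf{P}(\cdots)\leq \ee^{-t\ell n\Gamma_p(E)-t\ell n\varepsilon}\ee^{ntp\log C}$, whose $\frac1n\log$-$\limsup$ is $\leq -t\ell\varepsilon + t(p\log C - \ell\Gamma_p(E))$. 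This is negative once $t$ is chosen small — no wait, it is linear in $t$ with a fixed slope, so it is negative iff $\ell\varepsilon > p\log C-\ell\Gamma_p$, which need not hold. The correct and standard way (this is exactly the content of the large deviation results in \cite[Part V]{BL85} and \cite{DSS02}) is to use the \emph{existence of the moment generating function limit}, i.e.\ that $\Lambda_p(t):=\lim_n \frac1n\log\E(N_U^{t})$ exists, is convex, differentiable, with $\Lambda_p'(0)=\ell\,\Gamma_p(E)$ (the latter by Lemma \ref{lem_CVU}), so that the Legendre transform $\Lambda_p^*$ is positive and strictly so away from $\ell\Gamma_p(E)$; Cramér's theorem (or its quantitative one-sided version for submultiplicative, i.i.d.-driven cocycles) then gives both tails decaying like $\ee^{-n\Lambda_p^*(\ell\Gamma_p(E)\pm\ell\varepsilon)}$ with $\Lambda_p^*(\ell\Gamma_p(E)\pm\ell\varepsilon)>0$. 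The uniformity in $E\in I$ and the continuity of $\Lambda_p(t)$ and $\Lambda_p'(0)$ in $E$ (which follow from the uniform convergence in Lemma \ref{lem_CVU} and uniform bounds on the matrices) then let us pick $\kappa_0>0$ uniform over $I$.

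Concretely, the steps I would carry out are: (1) reduce to $\|x\|=1$ and treat the two one-sided tails; (2) for the lower tail, invoke (the centered form of) Lemma \ref{lem_wegner_U} together with Markov's inequality applied to $N_U^{-\delta}$, getting exponential decay with a rate that, after optimizing $\delta$, is $\leq -\kappa_0<0$; (3) for the upper tail, use uniform boundedness of the $T_{\omega^{(i)}}(E)$ to control positive moments $\E(N_U^{t})$ and invoke the existence of the limiting logarithmic moment generating function $\Lambda_p(t)$ with $\Lambda_p'(0^+)\leq\ell\Gamma_p(E)$ — this is the part quoted from \cite{BL85} and \cite{DSS02} — and apply the Chernoff bound $\mathsf{P}(\log N_U>\ell n\Gamma_p+\ell n\varepsilon)\leq \ee^{-t(\ell n\Gamma_p+\ell n\varepsilon)}\ee^{n\Lambda_p(t)+o(n)}$, optimizing over $t>0$; (4) combine, using monotonicity of tails in $\varepsilon$ and uniform continuity in $E$, to produce a single $\kappa_0>0$ valid for all $\varepsilon>0$ and all $E\in I$. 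The main obstacle is the upper tail: unlike the lower tail, it is not directly furnished by the already-stated lemmas, and it genuinely requires either Cramér-type large deviation machinery for products of i.i.d.\ random matrices (the positivity $\Lambda_p^*>0$ strictly off the mean being the crux) or a direct submultiplicativity argument exploiting that the $T_{\omO}(E)$ are uniformly bounded; the uniformity in $E\in I$ of the resulting rate also needs the uniform convergence from Lemma \ref{lem_CVU}. All of this is standard and is precisely why the proof can be dispatched by referring to \cite{BL85} and \cite{DSS02}.
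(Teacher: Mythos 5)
Your proposal ends up, for the upper tail, exactly where the paper's one-line proof goes: the result is Le Page's large deviation theorem for products of i.i.d.\ matrices, \cite[Theorem V.6.2]{BL85}, applied to $\wedge^p U^{(n)}(E)$ (its hypotheses — exponential moments, $p$-contractivity, $L_p$-strong irreducibility — are available here), and citing it is legitimate. The problem is the part you present as self-contained, namely the lower tail. The ``centered form'' of Lemma \ref{lem_wegner_U} that you propose, $\E\big((N_U\,\ee^{-\ell n\Gamma_p(E)})^{-\delta}\big)\leq \ee^{-\xi_1' n}$ with $\Gamma_p(E)=\gamma_1(E)+\cdots+\gamma_p(E)$, is false: by Jensen's inequality $\E(N_U^{-\delta})\geq \ee^{-\delta\,\E(\log N_U)}$, and by Lemma \ref{lem_CVU} $\E(\log N_U)=\ell n\,\Gamma_p(E)+o(n)$, so the centered negative moment is at least $\ee^{o(n)}$ and can never decay exponentially. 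Equivalently, if $\Lambda_p(t)=\lim_n\frac1n\log\E(N_U^{t})$, convexity forces $\Lambda_p(-\delta)\geq-\delta\,\ell\,\Gamma_p(E)$, so the Chernoff exponent $\delta\ell(\Gamma_p(E)-\varepsilon)+\Lambda_p(-\delta)$ is nonnegative up to $\delta\ell\varepsilon$-size corrections unless one knows the one-sided derivative $\Lambda_p'(0^-)$ equals $\ell\,\Gamma_p(E)$. That differentiability at $0$ is precisely the nontrivial spectral-gap (Le Page) input you correctly identified as indispensable for the upper tail; it is equally indispensable for the lower tail, and Lemma \ref{lem_wegner_U} cannot substitute for it, since its rate $\xi_1$ is of order $\delta\,\ell\,\Gamma_p(E)/2$ and is beaten by the centering term once $\varepsilon<\Gamma_p(E)/2$ — a failure you noticed but then papered over with the incorrect centered estimate.

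A second, smaller point: your monotonicity argument for uniformity in $\varepsilon$ goes the wrong way. The tail probability increases as $\varepsilon$ decreases, so a bound established at one small $\varepsilon_0$ covers only $\varepsilon\geq\varepsilon_0$; the delicate regime is small $\varepsilon$, where the rate genuinely degenerates. (The paper's phrasing of a single $\kappa_0$ valid for every $\varepsilon>0$ should be read with $\kappa_0$ depending on $\varepsilon$, which is all that is used later, e.g.\ in Proposition \ref{prop_large_deviation}.) The honest fix for both tails is the one the paper takes: invoke \cite[Theorem V.6.2]{BL85} with $S_n$ replaced by $\wedge^p U^{(n)}(E)$ and $\gamma$ by $(\gamma_1+\cdots+\gamma_p)(E)$, rather than trying to rebuild the estimate from Lemma \ref{lem_wegner_U} and Markov's inequality.
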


\begin{proof}
We refer to \cite[Theorem V.6.2]{BL85}, replacing $S_n$ there by $\wedge^p U^{(n)}(E)$ here and $\gamma$ by $(\gamma_1+\cdots+\gamma_p)(E)$.
\end{proof}
\vskip 2mm

\begin{lem}\label{lem_large_deviation2}
Let $p\in \aN$. We assume that $E\in \R$ is such that $G(E)$ is \LpSI. Let $y\in L_p$ of norm $||y||=1$. For every $\varepsilon >0$, there exist $\kappa_1 >0$ and $n_0\in \N$ such that, \begin{equation}\label{eq_lem_large_deviation2}
\forall n\geq n_0,\ \sup_{x\in L_p, x\neq 0} \mathsf{P} \left( \frac{|((\wedge^p U^{(n)}(E))x,y)|}{||(\wedge^p U^{(n)}(E))x||} <\ee^{-\varepsilon \ell n} \right) <\ee^{-\kappa_1 \ell n}.
\end{equation}
\end{lem}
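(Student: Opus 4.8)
The plan is to transpose, replacing $\wedge^p U^{(n)}(E)$ acting on the free variable $x$ by the reversed product acting on the fixed vector $y$, and then to combine the large deviation estimate of Lemma~\ref{lem_large_deviation1} with a concentration estimate for the invariant measure of the reversed chain. By homogeneity of the quotient in (\ref{eq_lem_large_deviation2}) we may assume $||x||=1$. Using $(\wedge^p U^{(n)}(E)x,y)=(x,\wedge^p({}^tU^{(n)}(E))y)$ and setting $y_n=\wedge^p({}^tU^{(n)}(E))y$, $\hat y_n=y_n/||y_n||$, we get
$$\frac{|(\wedge^p U^{(n)}(E)x,y)|}{||\wedge^p U^{(n)}(E)x||}=\frac{||y_n||}{||\wedge^p U^{(n)}(E)x||}\,|(x,\hat y_n)|\geq\frac{||y_n||}{||\wedge^p U^{(n)}(E)||}\,|(x,\hat y_n)|.$$
Since ${}^tM=JM^{-1}J^{-1}$ for $M\in\SpN$, the matrix $\wedge^p({}^tU^{(n)}(E))$ is $\wedge^p$ of a product of i.i.d.\ symplectic matrices whose F\"urstenberg group is the conjugate $JG(E)J^{-1}$ — still $p$-contracting and $L_p$-strongly irreducible — with the same $N$ first Lyapunov exponents as the original chain (by $\gamma_{2N-i+1}=-\gamma_i$). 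Hence Lemma~\ref{lem_large_deviation1} and Proposition~\ref{prop_mes_inv_reg} apply to this reversed chain, and to the contragredient chain as well.

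I would then bound the two factors on the right separately. Fix $\varepsilon'>0$ small. Lemma~\ref{lem_large_deviation1} for the reversed chain, started from $y\in L_p$, gives $||y_n||\geq\ee^{\ell n((\gamma_1+\cdots+\gamma_p)(E)-\varepsilon')}$ off an event of probability $\leq\ee^{-\kappa'\ell n}$, while the upper large deviation for $\frac1n\log||\wedge^p U^{(n)}(E)||$ (a standard companion of Lemma~\ref{lem_large_deviation1}, see \cite{BL85}) gives $||\wedge^p U^{(n)}(E)||\leq\ee^{\ell n((\gamma_1+\cdots+\gamma_p)(E)+\varepsilon')}$ off an event of the same order. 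So, uniformly in $x$, the factor $||y_n||/||\wedge^p U^{(n)}(E)||$ is $\geq\ee^{-2\varepsilon'\ell n}$ with probability $\geq1-2\ee^{-\kappa'\ell n}$.

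It remains to control $|(x,\hat y_n)|$, which is (up to a fixed constant) the distance from $\hat y_n$ to the fixed \emph{proper} projective hyperplane $H_x=\{\bar z\in\P(L_p)\ |\ (z,x)=0\}$ of $\P(L_p)$ (proper since $||x||=1$). By the contraction estimates of \cite{BL85}, consequences of $p$-contractivity and $L_p$-strong irreducibility, the direction $\hat y_n=\widehat{\wedge^p({}^tU^{(n)}(E))y}$ converges, off an event of exponentially small probability, exponentially fast to the attracting direction $\hat y_\infty$ of the reversed product, whose law is the unique invariant measure $\check\nu_{p,E}$ on $\P(L_p)$ of the reversed chain; the only subtlety — that $y$ should not be atypically close to the repelling subspace of a short initial block of the reversed product — is absorbed into that exponentially small event via a regularity estimate for the contragredient chain, of the type of Proposition~\ref{prop_mes_inv_reg}. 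Thus there are $c,c'>0$ and a constant $C$ such that, for $n$ large and every $x$ with $||x||=1$, with probability $\geq1-\ee^{-c'n}$ one has $|(x,\hat y_n)|\geq|(x,\hat y_\infty)|-C\ee^{-cn}$, whence, for every $t>0$,
$$\sup_{||x||=1}\mathsf{P}\big(|(x,\hat y_n)|<t\big)\leq\ee^{-c'n}+\sup_{||x||=1}\check\nu_{p,E}\big(\{\bar z\in\P(L_p)\ |\ |(z,x)|<t+C\ee^{-cn}\}\big).$$

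\textbf{Main obstacle.} The only point going genuinely beyond assembling the estimates already quoted is to bound, uniformly over projective hyperplanes $H$ of $\P(L_p)$, the $\check\nu_{p,E}$-mass of an $s$-neighborhood of $H$ by a fixed positive power $Cs^{\sigma}$ of $s$. This is strictly stronger than the ball estimate (\ref{eq_mes_inv_reg}): it follows from (\ref{eq_mes_inv_reg}) by covering the compact hyperplane $H$ with $O(s^{-(\dim\P(L_p)-1)})$ balls of radius $s$ only when the H\"older exponent $\rho$ of (\ref{eq_mes_inv_reg}) exceeds $\dim\P(L_p)-1$, which need not hold for $N\geq2$; in general it must be extracted from the refined F\"urstenberg--Guivarc'h analysis of \cite[Ch.~VI]{BL85} (in the scalar case of \cite{DSS02} hyperplanes of $\P(L_p)$ are points, and this input reduces exactly to (\ref{eq_mes_inv_reg})). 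Granting it, put $\varepsilon'=\varepsilon/4$ and apply the last display with $t=\ee^{-(\varepsilon/4)\ell n}$: combined with the bound on $||y_n||/||\wedge^p U^{(n)}(E)||$, the error terms $2\ee^{-\kappa'\ell n}$, $\ee^{-c'n}$ and $C(\ee^{-(\varepsilon/4)\ell n}+C\ee^{-cn})^{\sigma}$ are each $\leq\ee^{-\kappa_1\ell n}$ for a suitable $\kappa_1>0$ and all $n\geq n_0$, which gives (\ref{eq_lem_large_deviation2}).
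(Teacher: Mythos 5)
Your transposition route is genuinely different from the paper's. The paper proves the lemma by directly adapting \cite[Lemma 6.3]{DSS02} (in the spirit of \cite{CKM87}): it keeps the forward chain $\bar x\mapsto\widehat{\wedge^p U^{(n)}(E)x}$ on $\P(L_p)$, bounds the probability by the expectation of a Lipschitz cutoff $f_n$ at scale $\ee^{-\varepsilon \ell n}$ of $|(\hat z,y)|$, i.e.\ of the distance to the \emph{single fixed} hyperplane $\{(z,y)=0\}$ determined by $y$, absorbs the supremum over $x$ through the exponential convergence, uniform in the starting point, of $\mu_E^{\ast n}\ast\delta_{\bar x}$ to $\nu_{p,E}$ tested against such functions (the Fourier--Laplace/spectral-gap estimate, \cite[Corollary V.4.12]{CL90}), and concludes with the regularity of $\nu_{p,E}$ near that one hyperplane. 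In your version the roles are exchanged: $y$ becomes the starting point of the transposed chain and the free vector $x$ parametrizes the hyperplane, so you need regularity of the invariant measure uniformly over \emph{all} hyperplanes, plus almost-sure exponential convergence of the reversed directions $\hat y_n$, in place of the spectral gap. Both routes buy the uniformity in $x$ from a different source.

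As written, though, your proof is conditional, and the two places you lean on deserve comment. The ``main obstacle'' you flag is in fact not one: the regularity results of \cite[Ch.~VI]{BL85} invoked for Proposition \ref{prop_mes_inv_reg} (Theorem VI.2.1, Proposition VI.4.1, Corollary VI.4.2, adapted to the symplectic case) hold in the stronger form $\sup_{\|x\|=1}\nu\left(\{\bar z\in \P(L_p)\ :\ |(\hat z,x)|\le \varepsilon\}\right)\le C\varepsilon^{\rho}$, i.e.\ uniformly over neighborhoods of hyperplanes, of which the ball estimate (\ref{eq_mes_inv_reg}) is a weakening; they apply to the transposed chain since its F\"urstenberg group ${}^tG(E)=JG(E)J^{-1}$ inherits $p$-contractivity and $L_p$-strong irreducibility, as you observe. (Note that even the paper's own final step measures a neighborhood of the hyperplane $\{(z,y)=0\}$ in $\P(L_p)$, not a ball, so it too relies on this form.) So that input should simply be cited, not merely granted. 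The step that genuinely needs more than a sentence is the claim that $\hat y_n$ converges to $\hat y_\infty$ at an exponential rate off an event of probability at most $\ee^{-c'n}$: this requires the contraction of the $\wedge^p$-chain coming from the separability of the Lyapunov exponents together with a quantitative large-deviation argument ensuring $\|\wedge^p({}^tU^{(n)}(E))y\|$ is not atypically small, and is of comparable depth to the spectral-gap input the paper uses; at present it is compressed into ``is absorbed into that exponentially small event''. With these two points made precise (or precisely cited), your argument closes and yields (\ref{eq_lem_large_deviation2}).
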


\begin{proof}
We can directly adapt \cite[Lemma 6.3]{DSS02} by replacing $f_n$ there by $f_n :[0,\ell] \to \R$,
$$f_n(t)=\left\lbrace \begin{array}{lcl}
1 & \mathrm{if} & 0 \leq t \leq \ee^{-\varepsilon \ell n} \\
2-t\ee^{\varepsilon \ell n} & \mathrm{if} & \ee^{-\varepsilon \ell n} \leq t \leq 2\ee^{-\varepsilon \ell n} \\
0 & \mathrm{if} & 2\ee^{-\varepsilon \ell n} \leq t \leq \ell
\end{array}\right.$$
for $\varepsilon$ sufficiently small, and $\P(\R^2)$ by $\P(L_p)$ in the definition of $\Phi_n$. We also use an estimate on the convergence to $\nu_{p,E}$ of the sequence of powers of the image of $\mu_E$ by the Fourier-Laplace operator (see \cite[Corollary V.4.12]{CL90}, or \cite[Proposition 6.3.13(iii)]{boumazathese}). In particular, at (6.6) of \cite{DSS02}, we assume that $\frac{\log \rho}{\ell}+\varepsilon <0$ and we set $\delta_1=\frac{1}{2}|\frac{\log \rho}{\ell}+\varepsilon|$. At the end of the proof, we use Proposition \ref{prop_mes_inv_reg} to estimate $\nu_{p,E}$.
\end{proof}

\noindent We can finally deduce that, with probability exponentially close to $1$, the matrix elements $((\wedge^p U^{(n)}(E))x,y)$, for $p\in \aN$ and $x,y\in L_p$, grow exponentially at almost the rate of $\ell (\gamma_1+\cdots+\gamma_p)(E)$.

\begin{prop}\label{prop_large_deviation}
Let $p\in \aN$ and assume that $E\in \R$ is such that $G(E)$ is \LpSI. Let $x,y\in L_p$, $||x||=||y||=1$. Then, for every $\varepsilon >0$, there exist $\kappa >0$ and $n_0\in \N$ such that, 
\begin{equation}\label{eq_prop_large_deviation}
\forall n\geq n_0,\ \mathsf{P}\left(|((\wedge^p U^{(n)}(E))x,y)| \geq \ee^{(\gamma_1(E)+\cdots+\gamma_p(E)-\varepsilon)\ell n} \right) \geq 1-\ee^{-\kappa \ell n}.
\end{equation}
\end{prop}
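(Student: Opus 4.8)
The plan is to derive Proposition \ref{prop_large_deviation} by combining the two large deviation lemmas that precede it, namely Lemma \ref{lem_large_deviation1} and Lemma \ref{lem_large_deviation2}, through the elementary estimate $|((\wedge^p U^{(n)}(E))x,y)| = \|(\wedge^p U^{(n)}(E))x\| \cdot \frac{|((\wedge^p U^{(n)}(E))x,y)|}{\|(\wedge^p U^{(n)}(E))x\|}$. The first factor is controlled from below by Lemma \ref{lem_large_deviation1}: outside an event of probability at most $\ee^{-\kappa_0 \ell n}$ (for $n$ large), one has $\log \|(\wedge^p U^{(n)}(E))x\| \geq \ell n (\gamma_1+\cdots+\gamma_p)(E) - \ell n \frac{\varepsilon}{2}$. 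The second factor, the cosine of the angle between $(\wedge^p U^{(n)}(E))x$ and $y$, is controlled from below by Lemma \ref{lem_large_deviation2} (applied with $\varepsilon/2$ in place of $\varepsilon$): outside an event of probability at most $\ee^{-\kappa_1 \ell n}$, it is at least $\ee^{-(\varepsilon/2)\ell n}$.

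First I would fix $\varepsilon>0$ and $p\in\aN$, assume $G(E)$ is \LpSI, and apply Lemma \ref{lem_large_deviation1} with $\varepsilon/2$ to obtain $\kappa_0>0$ and, using the $\limsup$, an integer $n_1$ such that for all $n\geq n_1$ the "bad" event $B_n^{(1)} = \{ |\log \|(\wedge^p U^{(n)}(E))x\| - \ell n (\gamma_1+\cdots+\gamma_p)(E)| > \ell n \varepsilon/2 \}$ has probability at most $\ee^{-\kappa_0 \ell n/2}$, say. Next I would apply Lemma \ref{lem_large_deviation2} with $\varepsilon/2$ to get $\kappa_1>0$ and $n_2$ such that for $n\geq n_2$ the event $B_n^{(2)} = \{ |((\wedge^p U^{(n)}(E))x,y)| / \|(\wedge^p U^{(n)}(E))x\| < \ee^{-(\varepsilon/2)\ell n} \}$ has probability at most $\ee^{-\kappa_1 \ell n}$. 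Then on the complement of $B_n^{(1)} \cup B_n^{(2)}$ we get $|((\wedge^p U^{(n)}(E))x,y)| \geq \ee^{(\gamma_1+\cdots+\gamma_p)(E)\ell n - (\varepsilon/2)\ell n} \cdot \ee^{-(\varepsilon/2)\ell n} = \ee^{(\gamma_1(E)+\cdots+\gamma_p(E)-\varepsilon)\ell n}$, which is exactly the desired inequality. Finally, setting $n_0 = \max(n_1,n_2)$ and $\kappa = \frac{1}{2}\min(\kappa_0/2,\kappa_1) > 0$ (adjusting constants so that the union bound $\mathsf{P}(B_n^{(1)})+\mathsf{P}(B_n^{(2)}) \leq 2\ee^{-\min(\kappa_0/2,\kappa_1)\ell n} \leq \ee^{-\kappa\ell n}$ holds for $n$ large, possibly enlarging $n_0$), we conclude $\mathsf{P}(|((\wedge^p U^{(n)}(E))x,y)| \geq \ee^{(\gamma_1(E)+\cdots+\gamma_p(E)-\varepsilon)\ell n}) \geq 1 - \ee^{-\kappa\ell n}$ for all $n\geq n_0$.

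There is really no serious obstacle here; the proposition is a bookkeeping combination of the two preceding lemmas. The only minor points requiring care are: turning the $\limsup$ statement of Lemma \ref{lem_large_deviation1} into a genuine tail bound valid for all sufficiently large $n$ (which is immediate from the definition of $\limsup$, at the cost of halving the rate $\kappa_0$ and passing to large $n$); handling the two-sided absolute value in Lemma \ref{lem_large_deviation1} (we only need the lower bound on the norm, so we simply discard the upper-bound half of the event); and making sure the constants and thresholds are chosen uniformly — Lemma \ref{lem_large_deviation2} already provides $n_0$ uniform in $x$, and the constants in Lemma \ref{lem_large_deviation1} do not depend on $x$ either, so the final $\kappa$ and $n_0$ can be taken independent of $x,y$ on the unit sphere, as claimed. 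I would write the proof as a short half-page argument with explicit reference to the two lemmas and a union bound over the two exceptional events.
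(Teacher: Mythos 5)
Your proposal is correct and follows essentially the same route as the paper: the paper's proof likewise combines Lemma \ref{lem_large_deviation1} (lower bound on $||(\wedge^p U^{(n)}(E))x||$) with Lemma \ref{lem_large_deviation2} (lower bound on the normalized inner product) via a union bound, taking $\kappa$ essentially $\min(\kappa_0,\kappa_1)$ and $n$ large. Your explicit use of $\varepsilon/2$ in each lemma and the careful conversion of the $\limsup$ into a tail bound are only minor bookkeeping refinements of the same argument.
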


\begin{proof}
Let $\varepsilon >0$ and $x,y\in L_p$, $||x||=||y||=1$. First, from Lemma \ref{lem_large_deviation1}, we deduce that there exists $n_1\in \N$ such that, for every $n\geq n_1$, 
\begin{equation}\label{eq_large_deviation1}
\mathsf{P}\left(\ee^{(\gamma_1(E)+\cdots+\gamma_p(E)-\varepsilon)\ell n} \leq ||(\wedge^p U^{(n)}(E))x|| \leq \ee^{(\gamma_1(E)+\cdots+\gamma_p(E)+\varepsilon)\ell n} \right) \geq 1-\ee^{-\kappa_0 \ell n}.
\end{equation}
Then, combining this probability estimate with (\ref{eq_lem_large_deviation2}), we get the existence of $n_2\in \N$ such that, for every $n\geq n_2$, 
{\small $$\mathsf{P}\left(|(\wedge^p U^{(n)}(E)x,y)| \geq \ee^{-\varepsilon \ell n} ||\wedge^p U^{(n)}(E)x|| \geq \ee^{(\gamma_1(E)+\cdots+\gamma_p(E)-2\varepsilon)\ell n} \right) \geq 1-\ee^{-\kappa_1 \ell n}-\ee^{-\kappa_0 \ell n}.$$}

\noindent We get (\ref{eq_prop_large_deviation}) for $n$ large enough, say $n\geq n_0$, with $\kappa=\min(\kappa_0,\kappa_1)>0$.
\end{proof}
\vskip 2mm

\noindent This result will be used in Section \ref{sec_proof_localization} to prove a probability estimate required to start a multiscale analysis.

\section{The integrated density of states}\label{sec_ids}

To prove a Wegner estimate for $\Ho$ like in Theorem \ref{thm_wegner}, a crucial property is the local H\"older continuity of the integrated density of states of $\Ho$. We review, in this Section, the definition of the integrated density of states and we prove Theorem \ref{thm_ids}.
\vskip 2mm

\noindent The integrated density of states is the distribution function of the proper energy levels, per unit volume, of $\Ho$. To define it, we consider, for every integer $L\geq 1$, the restriction $\HL$ of $\Ho$ to $L^2([-\ell L,\ell L])\otimes \C^N$, with Dirichlet boundary conditions at $\pm \ell L$.

\begin{defi}
The integrated density of states associated to $\Ho$ is the function from $\R$ to $\R_{+}$, $E\mapsto N(E)$, where $N(E)$, for $E\in \R$, is defined as the following thermodynamical limit : 
\begin{equation}\label{eq_def_ids}
N(E)=\lim_{L\to +\infty} \frac{1}{2\ell L} \# \{ \lambda \leq E |\ \lambda \in \sigma(\HL) \},
\end{equation}
for $\mathsf{P}$-almost every $\omega \in \Omega$.
\end{defi}

\noindent In this definition, appear two problems of existence. The first one is to verify that the cardinal in (\ref{eq_def_ids}) is finite for every $\omega \in \Omega$. The second one is the existence of the limit and its almost-sure independence on $\omega$. In \cite{boumazarmp}, we already proved the existence of the integrated density of states for matrix-valued continuous Schr\"odinger operators of the form (\ref{model_dimd}). In particular, the integrated densities of states of $\Ho$ and $\Hl$ are well defined for every $\ell>0$ and every $E\in \R$. Moreover, as $N(E)$ and the sum of positive Lyapunov exponents, $\gamma_{1}(E)+\cdots+\gamma_N(E)$, are harmonically conjugated through a Thouless formula (see \cite[Theorem 3]{boumazarmp}), $N(E)$ inherits the same H\"older regularity as the Lyapunov exponents. This is how we proved \cite[Theorem 4]{boumazarmp}. Applying this theorem to $\Hl$ and using Theorem \ref{thm_lyap_N}, we obtain Theorem \ref{thm_ids} as stated in the introduction. The H\"older exponent $\alpha$ in (\ref{eq_ids_reg}) is equal to the H\"older exponent of the Lyapunov exponents in (\ref{eq_lyap_reg}). This is due to properties of the Hilbert transform.

\section{A Wegner estimate}\label{sec_wegner}

This Section is devoted to the proof of Theorem \ref{thm_wegner}. For this purpose, we need two lemmas which give estimates on the norm of the solutions of the equation $-u''+Vu=0$ for $V\in L^1_{\mathrm{loc}}(\R,\mathcal{M}_{\mathrm{N}}(\R))$.

\begin{lem}\label{lem_estim1}
Let $V$ be a matrix-valued function in $L^{1}_{\mathrm{loc}}(\R,\mathcal{M}_{\mathrm{N}}(\R))$ and $u$ a solution of $-u''+Vu=0$. Then, for every $x,y\in \R$,  
\begin{equation}\label{eq_lem_estim1}
||u(x)||^{2}+||u'(x)||^{2} \leq (||u(y)||^{2}+||u'(y)||^{2})\exp \left(\int_{\min(x,y)}^{\max(x,y)} ||V(t)+1||\, \dd t \right).
\end{equation}
\end{lem}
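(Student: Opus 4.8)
\noindent The plan is to read (\ref{eq_lem_estim1}) as a standard Gronwall estimate for the first-order system equivalent to $-u''+Vu=0$. Write $\langle\cdot,\cdot\rangle$ for the scalar product on $\C^N$, $\|\cdot\|$ for the associated norm, and set $F(t)=\|u(t)\|^2+\|u'(t)\|^2$.

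\noindent First I record the regularity: since $V\in L^1_{\mathrm{loc}}(\R,\mathcal{M}_{\mathrm{N}}(\R))$, a solution $u$ of $-u''+Vu=0$ has $u$ and $u'$ locally absolutely continuous with $u''=Vu$ a.e., so $F$ is locally absolutely continuous and $F(x)-F(y)=\int_y^x F'(t)\,\dd t$ for all $x,y\in\R$. Differentiating and using $u''=Vu$, I get for a.e.\ $t$
$$F'(t)=2\,\mathrm{Re}\,\langle u'(t),u(t)\rangle+2\,\mathrm{Re}\,\langle u'(t),V(t)u(t)\rangle=2\,\mathrm{Re}\,\langle u'(t),(I_{\mathrm{N}}+V(t))u(t)\rangle.$$

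\noindent Next I bound the right-hand side: by the Cauchy--Schwarz inequality, submultiplicativity of the operator norm, and $2ab\le a^2+b^2$,
$$|F'(t)|\le 2\,\|u'(t)\|\,\|I_{\mathrm{N}}+V(t)\|\,\|u(t)\|\le\|V(t)+1\|\big(\|u'(t)\|^2+\|u(t)\|^2\big)=\|V(t)+1\|\,F(t)$$
for a.e.\ $t$, that is $-\|V(t)+1\|F(t)\le F'(t)\le\|V(t)+1\|F(t)$.

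\noindent Finally I conclude with Gronwall's lemma in its absolutely continuous form. If $y\le x$, set $G(t)=\int_y^t\|V(s)+1\|\,\dd s\ge 0$; then $t\mapsto F(t)\ee^{-G(t)}$ has a.e.\ derivative $(F'(t)-\|V(t)+1\|F(t))\ee^{-G(t)}\le 0$, hence is nonincreasing, so $F(x)\ee^{-G(x)}\le F(y)$, which is (\ref{eq_lem_estim1}). If $x\le y$, I argue symmetrically with $H(t)=\int_x^t\|V(s)+1\|\,\dd s$, using that $t\mapsto F(t)\ee^{H(t)}$ is nondecreasing because $F'(t)+\|V(t)+1\|F(t)\ge 0$; in both cases the exponent is $\int_{\min(x,y)}^{\max(x,y)}\|V(t)+1\|\,\dd t$. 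There is no genuine obstacle here; the only point to watch is the regularity bookkeeping under the mere hypothesis $V\in L^1_{\mathrm{loc}}$, namely that $F$ is absolutely continuous, so that Gronwall's lemma applies to a function that is only a.e.\ differentiable rather than $C^1$.
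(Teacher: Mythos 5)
Your proof is correct: the monotonicity argument for $F(t)=\|u(t)\|^{2}+\|u'(t)\|^{2}$ via $|F'|\leq \|V+1\|F$ and Gronwall, with the absolute-continuity bookkeeping for $V\in L^1_{\mathrm{loc}}$, is exactly the standard route. The paper itself gives no proof here but refers to \cite[Lemma 2]{boumazarmp}, and your argument is essentially the one intended there, so there is nothing to correct.
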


\noindent We already proved this lemma in \cite[Lemma 2]{boumazarmp}.
\vskip 2mm

\begin{lem}\label{lem_estim2}
Let $V$ be a matrix-valued function in $L^{1}_{\mathrm{loc}}(\R,\mathcal{M}_{\mathrm{N}}(\R))$ such that, for a fixed $\ell >0$, $||V||_{\ell,\mathrm{u}}=\sup_{x\in \R} \int_{x}^{x+\ell} ||V(t)|| \dd t <\infty$. Then there exists $C>0$ such that, for every solution $u$ of $-u''+Vu=0$ and every $x\in \R$,  
\begin{equation}\label{eq_lem_estim2}
\int_{x-\ell}^{x+\ell} ||u(t)||^2 \dd t \geq C\left( ||u(x)||^2 + ||u'(x)||^2 \right).
\end{equation}
\end{lem}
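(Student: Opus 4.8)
The plan is to prove the estimate first on a reference interval, say $x=0$, via a compactness/continuity argument, and then transfer it to arbitrary $x$ using the uniform bound $||V||_{\ell,\mathrm{u}}<\infty$ so that the constant $C$ can be chosen independent of $x$. Fix $\ell>0$. For a solution $u$ of $-u''+Vu=0$ on $[-\ell,\ell]$, the pair $(u(0),u'(0))\in\C^N\times\C^N$ determines $u$ uniquely, and the map $(u(0),u'(0))\mapsto u$ is linear. The left-hand side $\int_{-\ell}^{\ell}||u(t)||^2\,\dd t$ is then a nonnegative quadratic form in the initial data $(u(0),u'(0))$, and I would first argue it is \emph{positive definite}: if $\int_{-\ell}^{\ell}||u(t)||^2\,\dd t=0$ then $u\equiv 0$ on $[-\ell,\ell]$, hence $u(0)=u'(0)=0$ by uniqueness of solutions of the linear ODE (the coefficients being in $L^1_{\mathrm{loc}}$, Carathéodory existence-uniqueness applies). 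A positive definite quadratic form on a finite-dimensional space is bounded below by $C_0(||u(0)||^2+||u'(0)||^2)$ for some $C_0>0$; this gives (\ref{eq_lem_estim2}) at the single point $x=0$.

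Next I would make the constant uniform in $x$. The issue is that $V$ restricted to $[x-\ell,x+\ell]$ varies with $x$, so the quadratic form above changes. Here is where Lemma \ref{lem_estim1} enters. On one hand, by Lemma \ref{lem_estim1} applied with the shifted potential, for every $t\in[x-\ell,x+\ell]$,
\begin{equation}\label{eq_plan_ub}
||u(x)||^2+||u'(x)||^2 \leq \left(||u(t)||^2+||u'(t)||^2\right)\exp\left(\int_{x-\ell}^{x+\ell} ||V(s)+1||\,\dd s\right),
\end{equation}
and $\int_{x-\ell}^{x+\ell}||V(s)+1||\,\dd s \leq 2||V||_{\ell,\mathrm{u}}+2\ell =: K$ uniformly in $x$. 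On the other hand I need a matching \emph{lower} bound controlling $||u(t)||^2$ from below by $||u(x)||^2+||u'(x)||^2$ on a set of $t$ of positive measure; this is precisely the delicate point. The clean way is: by (\ref{eq_plan_ub}) we get $||u(t)||^2+||u'(t)||^2\geq \ee^{-K}(||u(x)||^2+||u'(x)||^2)$ for all $t\in[x-\ell,x+\ell]$, but this bounds $||u||^2+||u'||^2$, not $||u||^2$ alone. To separate $||u||^2$ from $||u'||^2$, one observes that if $||u(t)||$ were small on the whole subinterval $[x,x+\ell/2]$, then integrating the relation $u'(t)=u'(x)+\int_x^t u''=u'(x)+\int_x^t V u$ and using the $L^1$-bound on $V$ forces $u'$ to stay close to $u'(x)$ on $[x,x+\ell/2]$, hence $u(t)\approx u(x)+(t-x)u'(x)$, which cannot be uniformly small on an interval of length $\ell/2$ unless both $u(x)$ and $u'(x)$ are small. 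Quantifying this gives a lower bound $\sup_{t\in[x-\ell,x+\ell]}||u(t)||^2\geq c_1(||u(x)||^2+||u'(x)||^2)$ with $c_1$ depending only on $\ell$ and $||V||_{\ell,\mathrm{u}}$, and then continuity of $t\mapsto ||u(t)||^2$ together with the modulus-of-continuity estimate coming again from Lemma \ref{lem_estim1} upgrades the pointwise lower bound to a lower bound on $\int_{x-\ell}^{x+\ell}||u(t)||^2\,\dd t$.

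The main obstacle, as indicated, is extracting a lower bound on $\int ||u||^2$ rather than on $\int(||u||^2+||u'||^2)$, with a constant uniform in $x$; the positive-definiteness argument handles a fixed interval trivially but does not by itself give uniformity. I expect the cleanest writeup to combine the two ingredients as follows: use the single-interval positive-definiteness to get, for each $x$, a constant $C(x)>0$, then show $x\mapsto C(x)$ is bounded below away from $0$ by exploiting that the dependence of the quadratic form on $V|_{[x-\ell,x+\ell]}$ is controlled — in norm — by $||V||_{\ell,\mathrm{u}}$ through (\ref{eq_lem_estim1}), so that all these quadratic forms are comparable to each other up to the fixed factor $\ee^{K}$. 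This comparison is the technical heart of the proof; the rest is routine.
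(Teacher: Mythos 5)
Your second, quantitative strategy is correct and, in its skeleton, is the same as the paper's proof: apply Lemma \ref{lem_estim1} in both directions to get the uniform two-sided pointwise bound $C_1(||u(x)||^2+||u'(x)||^2)\leq ||u(t)||^2+||u'(t)||^2\leq C_2(||u(x)||^2+||u'(x)||^2)$ on $[x-\ell,x+\ell]$, locate one point $t_0$ where $||u(t_0)||$ alone is comparable to $||u(x)||+||u'(x)||$, spread that lower bound over a subinterval of length depending only on $\ell$ and $||V||_{\ell,\mathrm{u}}$ by using the uniform bound on $||u'||$, and integrate. The only genuine difference is how the point $t_0$ is produced: the paper argues by contradiction, assuming $||u||<\tfrac{C_3}{2}N_x$ on all of $[x-\ell,x+\ell]$, deducing $||u'||>\tfrac{C_3}{2}N_x$ everywhere and then comparing $||u(x+\ell)-u(x-\ell)||$ with $\int_{x-\ell}^{x+\ell}||u'(s)||\,\dd s$; you instead use the Volterra identity $u'(t)=u'(x)+\int_x^t Vu$ together with the $L^1$ bound on $V$ to show that where $||u||$ is small the solution is nearly affine, $u(t)\approx u(x)+(t-x)u'(x)$, so it cannot stay small on an interval of length $\ell/2$ unless both $||u(x)||$ and $||u'(x)||$ are small. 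Your variant is slightly longer to quantify but avoids the delicate point in the paper's contradiction step (equating $||\int u'||$ with $\int ||u'||$ requires the direction of $u'$, not just its norm, to be controlled, which needs care for vector-valued $u$), so it is arguably more robust; both yield constants depending only on $\ell$ and $||V||_{\ell,\mathrm{u}}$, hence uniform in $x$. By contrast, your first paragraph (positive definiteness of the quadratic form $(u(x),u'(x))\mapsto \int ||u||^2$ plus ``comparability of the forms up to $\ee^{K}$'') does not work as stated: Lemma \ref{lem_estim1} compares values of $||u||^2+||u'||^2$ along a single solution for a single potential, and gives no comparison between the forms $\int||u||^2$ associated with the different restrictions $V|_{[x-\ell,x+\ell]}$, so uniformity in $x$ cannot be extracted that way; fortunately your final argument does not rely on it.
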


\begin{proof}
Let $x\in \R$ and $u$ be a solution of $-u''+Vu=0$. Applying Lemma \ref{lem_estim1} to $x$ and $t\in[x-\ell,x+\ell]$, one gets :
\begin{eqnarray}
||u(t)||^{2}+||u'(t)||^{2} & \leq & (||u(x)||^{2}+||u'(x)||^{2})\exp \left(\int_{\min(t,x)}^{\max(t,x)} ||V(s)+1||\, \dd s \right) \nonumber \\
 & \leq & (||u(x)||^{2}+||u'(x)||^{2})\exp (2+2||V||_{\ell,\mathrm{u}}), \nonumber
\end{eqnarray}
and 
\begin{eqnarray}
||u(x)||^{2}+||u'(x)||^{2} & \leq & (||u(t)||^{2}+||u'(t)||^{2})\exp \left(\int_{\min(x,t)}^{\max(x,t)} ||V(s)+1||\, \dd s \right) \nonumber \\
 & \leq & (||u(t)||^{2}+||u'(t)||^{2})\exp(2+2||V||_{\ell,\mathrm{u}}). \nonumber
\end{eqnarray}
\noindent Setting $C_1=\exp(-2-2||V||_{\ell,\mathrm{u}})$ and $C_2=\exp(2+2||V||_{\ell,\mathrm{u}})$, we obtain, for every $t\in[x-\ell,x+\ell]$, 
\begin{equation}\label{eq_lem_estim2_1}
C_1 (||u(x)||^{2}+||u'(x)||^{2}) \leq ||u(t)||^{2}+||u'(t)||^{2} \leq C_2 (||u(x)||^{2}+||u'(x)||^{2}).
\end{equation}
We set $N_x=||u(x)||+||u'(x)||$. Using the inequality $(a+b)^2\leq 2(a^2 +b^2)$ valid for every $a,b\in \R$, we have, for every $t\in[x-\ell,x+\ell]$, 
\begin{equation}\label{eq_lem_estim2_2}
C_1 N_x^2 \leq 2 C_1 (||u(x)||^{2}+||u'(x)||^{2}) \leq 2(||u(t)||^{2}+||u'(t)||^{2}) \leq 2N_t^2,
\end{equation}
and
\begin{equation}\label{eq_lem_estim2_3}
N_t^2 \leq 2(||u(t)||^{2}+||u'(t)||^{2}) \leq 2C_2(||u(x)||^{2}+||u'(x)||^{2}) \leq 2C_2 N_x^2.
\end{equation}
Setting $C_3=\left(\frac{C_1}{2}\right)^{\frac{1}{2}}$ and $C_4=(2C_2)^{\frac{1}{2}}$, we obtain, for every $t\in[x-\ell,x+\ell]$,
\begin{equation}\label{eq_lem_estim2_4}
C_3 (||u(x)||+||u'(x)||) \leq ||u(t)||+||u'(t)|| \leq C_4(||u(x)||+||u'(x)||),
\end{equation}
which is, with our notation, $ C_3 N_x\leq N_t \leq C_4 N_x$, for every $t\in [x-\ell,x+\ell]$. 
\vskip 1mm

\noindent Assume that, for any $t\in [x-\ell,x+\ell]$, $||u(t)||< \frac{C_3}{2} N_x$. Then, $||u'(t)||> \frac{C_3}{2} N_x$ or else it would contradict (\ref{eq_lem_estim2_4}). In particular, $u'$ does not vanish on $[x-\ell,x+\ell]$ and the signs of its coordinates remain constant on this interval. Thus, for $t=x-\ell$ and $t=x+\ell$,
\begin{eqnarray}
C_3 N_x = \frac{C_3}{2} N_x + \frac{C_3}{2} N_x &  > & ||u(x+\ell)|| + ||u(x-\ell)|| \geq ||u(x+\ell)-u(x-\ell)|| \nonumber \\
  & = & \int_{x-\ell}^{x+\ell} ||u'(s)|| \dd s >2 \frac{C_3}{2} N_x = C_3 N_x. \nonumber   
\end{eqnarray}
We get a contradiction and thus, there exists $t_0\in [x-\ell,x+\ell]$, $||u(t_0)||\geq \frac{C_3}{2} N_x$. But we also have, for every $t\in [x-\ell,x+\ell]$, $||u'(t)|| \leq C_4 N_x$. Let $t\in [x-\ell,x+\ell]$ be such that $|t-t_0| \leq \frac{C_3}{4 C_4}$. Then, we have :
\begin{eqnarray}
||u(t)|| =  ||u(t_0)+u(t)-u(t_0)|| & \geq & \big| ||u(t_0)||-||u(t)-u(t_0)|| \big| \nonumber \\
& = & \left| ||u(t_0)||-\Big|\Big|\int_{t_0}^t u'(s)\dd s\Big|\Big| \right| \nonumber \\
& \geq & \frac{C_3}{2} N_x-\frac{C_3}{4} N_x = \frac{C_3}{4} N_x, \nonumber
\end{eqnarray}
because 
$$\Big|\Big|\int_{t_0}^t u'(s)\dd s\Big|\Big| \leq \int_{t_0}^t ||u'(s)||\dd s \leq \int_{t_0}^t C_4 N_x\dd s \leq \frac{C_3}{4 C_4} C_4 N_x \leq \frac{C_3}{4} N_x.$$
So, we have just proved that there exists an interval $I_0$ of length $\min \left( 2\ell,\frac{C_3}{2 C_4} \right)$, included in $[x-\ell,x+\ell]$, such that $||u(t)|| \geq \frac{C_3}{4} N_x$, for every $t\in I_0$. Then,
$$\int_{x-\ell}^{x+\ell} ||u(t)||^2 \dd t \geq \int_{I_0} ||u(t)||^2 \dd t \geq \frac{C_3^2}{16} \min\left(2\ell,\frac{C_3}{2 C_4} \right)N_x^2 \geq C(||u(x)||^2+||u'(x)||^2).$$
It proves the lemma.
\end{proof}
\vskip 2mm

\noindent We can now prove the following proposition upon which will be based the proof of Theorem \ref{thm_wegner}.

\begin{prop}\label{prop_wegner}
Let $I\subset \R$ be a compact interval and $\tilde{I}$ be an open interval, $I\subset \tilde{I}$, such that, for every $E\in \tilde{I}$, $G(E)$ is \LpSI. Then, there exist $\alpha >0$, $L_0\in \N$ and $C>0$ such that, for every $E\in I$ and every $\varepsilon>0$ :
\begin{eqnarray}
\forall L\geq L_0,& &\!\!\!\!\!\!\! \mathsf{P}\big( \big\{ \exists E'\in (E-\varepsilon,E+\varepsilon), \exists \phi \in D(\HL) \ \big|\ (\HL-E')\phi=0, \nonumber \\
\label{eq_prop_wegner}  & & ||\phi||=1\ \mathrm{and}\ ||\phi'(-\ell L)||^2+||\phi'(\ell L)||^2 \leq \varepsilon^2 \big\} \big) \leq C\,\ell \, L\, \varepsilon^{\alpha}.
\end{eqnarray}
\end{prop}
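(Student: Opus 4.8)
The plan is to reduce the event in (\ref{eq_prop_wegner}) to the event that a finite-box comparison operator has an eigenvalue within $C\varepsilon$ of $E$, and then to bound the probability of the latter by Chebyshev's inequality, the expected number of such eigenvalues being controlled by the length $2\ell L$ of the box times the H\"older modulus of continuity of the integrated density of states. For the reduction --- which is where Lemmas \ref{lem_estim1} and \ref{lem_estim2} are used --- suppose $\phi\in D(\HL)$ realises the event: $(\HL-E')\phi=0$ with $|E'-E|<\varepsilon$, $\|\phi\|=1$ and $\|\phi'(-\ell L)\|^{2}+\|\phi'(\ell L)\|^{2}\le\varepsilon^{2}$. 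Since $\phi$ solves $-\phi''+(V_\omega-E')\phi=0$ with $\phi(\pm\ell L)=0$, Lemma \ref{lem_estim1}, applied on the two extreme cells and propagated from the endpoints, gives $\|\phi(t)\|+\|\phi'(t)\|\le C_\ell\,\varepsilon$ there, where $C_\ell$ depends only on $\ell$ and the uniform bound on the $V_\omega^{(n)}$; in particular $\phi$ carries $L^{2}$-mass at most $C_\ell\,\varepsilon^{2}$ on those two cells, while Lemma \ref{lem_estim2}, which bounds cell-masses from below in terms of boundary data, shows that essentially all the normalisation lives in the interior. Modifying $\phi$ only on the two extreme cells I then obtain $\psi$ with $\psi(\pm\ell L)=\psi'(\pm\ell L)=0$ and $\|\psi-\phi\|_{H^{2}}\le C_\ell\,\varepsilon$, hence $\|\psi\|\ge 1-C_\ell\,\varepsilon$ and $\|-\psi''+V_\omega\psi-E'\psi\|\le C_\ell'\,\varepsilon$. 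As $\psi$ vanishes to second order at $\pm\ell L$ it lies in the domain of the Neumann restriction $H^{(L)}_{\mathrm N}(\omega)$ of $\Ho$ to $[-\ell L,\ell L]$, so by the spectral theorem $\mathrm{dist}\big(E',\sigma(H^{(L)}_{\mathrm N}(\omega))\big)\le 2C_\ell'\,\varepsilon$, and the event is contained in $\big\{\,\mathrm{dist}\big(E,\sigma(H^{(L)}_{\mathrm N}(\omega))\big)\le C_1\varepsilon\,\big\}$.

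I would then estimate this probability by the Chebyshev inequality,
\[
\mathsf{P}\big(\mathrm{dist}(E,\sigma(H^{(L)}_{\mathrm N}(\omega)))\le C_1\varepsilon\big)\le \E\big(\#\{\lambda\in\sigma(H^{(L)}_{\mathrm N}(\omega)):|\lambda-E|\le C_1\varepsilon\}\big),
\]
and note that, up to a bounded correction, the right-hand side equals $2\ell L\big(N(E+C_1\varepsilon)-N(E-C_1\varepsilon)\big)$. The crucial point is that this $O(1)$ correction --- the Dirichlet/Neumann discrepancy between the counting functions --- is carried by eigenfunctions concentrated at $\pm\ell L$, and the construction of $\psi$ above, which forced $\psi$ to vanish to second order at the endpoints, excludes precisely those, so no additive constant survives. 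Combining with the H\"older estimate $|N(E'')-N(E''')|\le C|E''-E'''|^{\alpha}$, valid on a compact neighbourhood of $I$ contained in $\tilde I$ (Theorem \ref{thm_ids} and the results recalled in Section \ref{sec_ids}), yields $\mathsf{P}(\text{event})\le C\,\ell L\,\varepsilon^{\alpha}$ for every $L\ge L_0$, which is (\ref{eq_prop_wegner}).

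The main obstacle is this second step: extracting a finite-volume probabilistic estimate \emph{without an additive boundary constant} from the deterministic H\"older regularity of the infinite-volume $N$, that is, making rigorous that the small-boundary-derivative hypothesis genuinely discards the $O(1)$ boundary states. This requires a careful comparison between the Dirichlet and Neumann counting functions of $\HL$, of the one-cell-larger box, and of the individual single-cell operators (Dirichlet--Neumann bracketing together with a uniform-in-$E$ bound on the single-cell eigenvalue count over a bounded window). Moreover, since the $V_\omega^{(n)}$ --- hence the transfer matrices --- may have singular distributions, one cannot average over a single site; this is where the $p$-contractivity and $L_p$-strong irreducibility of $G(E)$ come in, through the exponential small-ball bound of Lemma \ref{lem_wegner_U} and the regularity of the invariant measures $\nu_{p,E}$ of Proposition \ref{prop_mes_inv_reg}, to exclude the atypical realisations of the transfer-matrix random walk on which the counting argument would break down.
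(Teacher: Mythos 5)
Your reduction step is sound and close in spirit to the paper's: using the Dirichlet boundary condition, the hypothesis $||\phi'(-\ell L)||^2+||\phi'(\ell L)||^2\leq \varepsilon^2$ and Lemma \ref{lem_estim1}, one can modify the eigenfunction near the endpoints at cost $O(\varepsilon)$ (the paper does this with a smooth cutoff $\chi$, producing a normalized $\tilde\phi$ with $||(H-E)\tilde\phi||\leq C\varepsilon$). The genuine gap is in your second step. After Chebyshev you need
$\E\big(\#\{\lambda\in\sigma(H^{(L)}_{\mathrm N}(\omega)):|\lambda-E|\leq C_1\varepsilon\}\big)\leq 2\ell L\big(N(E+C_1\varepsilon)-N(E-C_1\varepsilon)\big)$
with \emph{no} additive constant, but the finite-volume counting function differs from $2\ell L\,N(\cdot)$ at each fixed energy by a boundary correction which is $O(1)$ uniformly in $\varepsilon$; over a window of width $\sim\varepsilon$ this correction does not scale with $\varepsilon$ and ruins the bound $C\ell L\varepsilon^{\alpha}$ as $\varepsilon\to 0$. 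Your argument that the correction is ``excluded'' because $\psi$ vanishes to second order at $\pm\ell L$ does not work: once you have passed to the expected number of \emph{all} eigenvalues of the fixed finite-box operator in the window, the information that the relevant eigenfunctions have small boundary derivative is gone, and boundary-localized eigenvalues of that box are counted anyway. There is no exact identity, nor an $o(1)$-in-$\varepsilon$ comparison, between $\E$ of a single-box count over an $\varepsilon$-window and the infinite-volume IDS increment.

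The paper circumvents exactly this obstacle with a different mechanism, which is the idea missing from your proposal: it considers the $2n+1$ translated copies $I_k=2k\ell L+[-\ell L,\ell L]$ and the events $A_k$ (identically distributed translates of your event, with $\mathsf{P}(A_k)=\mathsf{P}(A_0)$). For each $k$ with $\omega\in A_k$, the cutoff construction above yields a normalized $\tilde\phi_k$ supported in $I_k$ with $||(H^{(J_n)}(\omega)-E)\tilde\phi_k||\leq C_3\varepsilon$, where $H^{(J_n)}(\omega)$ is the Dirichlet restriction to the big box $J_n=\cup_k I_k$; since these functions have disjoint supports, Temple's inequality gives at least $\#\{k:\omega\in A_k\}$ eigenvalues of $H^{(J_n)}(\omega)$ in $[E-C_3\varepsilon,E+C_3\varepsilon]$. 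Dividing by $2n+1$ and letting $n\to\infty$, the left side converges to $\mathsf{P}(A_0)$ by the law of large numbers, while the right side converges to $2\ell L\big(N(E+C_3\varepsilon)-N(E-C_3\varepsilon)\big)$ \emph{exactly}, because the boundary corrections of the single big box $J_n$ are shared among $2n+1$ cells and vanish in the thermodynamic limit. The H\"older continuity of $N$ then gives the bound. Note also that, contrary to your last paragraph, Lemma \ref{lem_wegner_U} and Proposition \ref{prop_mes_inv_reg} play no role in this proposition (they enter Theorem \ref{thm_wegner} and the large deviation estimates); the only input from $p$-contractivity and $L_p$-strong irreducibility here is the H\"older regularity of the integrated density of states, and Lemma \ref{lem_estim2} is likewise used only later, in the proof of Theorem \ref{thm_wegner}.
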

\vskip 2mm

\begin{proof}
The proof will mostly relies on the H\"older continuity of the integrated density of states of $\Ho$. Let $\tilde{I}$ be an open interval such that, for every $E\in \tilde{I}$, $G(E)$ is \LpSI. Let  $I\subset \tilde{I}$ be a compact interval. By \cite[Theorem 3]{boumazarmp}, there exist $\alpha>0$ and $C_1>0$ such that :
\begin{equation}\label{eq_prop_wegner_ids_reg}
\forall E,E'\in I,\ |N(E)-N(E')|\leq C_1 |E-E'|^{\alpha}.
\end{equation}

\noindent Let $E$ be in the interior of $I$, $\varepsilon >0$ and $L\in \N$. For every $k\in \Z$, let $I_k$ be the interval $2k\ell L+[-\ell L,\ell L]=[(2k-1)\ell L, (2k+1)\ell L]$ and denote by $H^{(I_k)}(\omega)$ the restriction of $\Ho$ to $L^2(I_k)\otimes \C^N$ with Dirichlet boundary conditions. We define the event $A_k\in \mathcal{A}$ as :
\vskip 1mm

\noindent $A_k=\{ \omega \in \Omega \ |\ H^{(I_k)}(\omega)$ has an eigenvalue $\lambda_k \in (E-\varepsilon,E+\varepsilon)$ such that the corresponding normalized eigenfunction $\phi_k$ satisfies $||\phi'(-\ell L)||^2+||\phi'(\ell L)||^2 \leq \varepsilon^2 \}$.
\vskip 1mm

\noindent Then, because the $V_{\omega}^{(n)}$ are \emph{i.i.d.} random variables, and because of the form of the potential in $\Ho$ as a $\ell$-periodization of $V_{\omega}^{(n)}$, we deduce that $\mathsf{P}(A_k)$ is independent of $k$, that is, $\forall k\in \Z$, $\mathsf{P}(A_k)=\mathsf{P}(A_0)$. Moreover, $\mathsf{P}(A_0)$ is equal to the probability in (\ref{eq_prop_wegner}).
\vskip 2mm

\noindent Let $n\in \N$ and let $J_n=\cup_{k=-n}^{n} I_k =[-(2n+1)\ell L, (2n+1)\ell L]$. Let $H^{(J_n)}(\omega)$ be the restriction of $\Ho$ to $L^2(J_n)\otimes \C^N$ with Dirichlet boundary conditions. For a fixed $\omega \in \Omega$, let $k_1,\ldots,k_j\in \{-n,\ldots,n\}$ be distinct and such that $\omega \in A_{k_i}$ for every $i\in \{1,\ldots,j\}$. Let $i\in \{1,\ldots,j\}$. Let $\phi_i$ be defined on $I_{k_i}$, $||\phi_i||=1$, $\phi_i((2k_i -1)\ell L)=\phi_i((2k_i +1)\ell L)=0$. We also assume that there exists $\lambda_{k_i} \in (E-\varepsilon,E+\varepsilon)$ such that $H^{(I_{k_i})}(\omega) \phi_i = \lambda_{k_i} \phi_i$.

\noindent Let $\chi$ be a smooth function on $\R$, $0\leq \chi \leq 1$, $\chi(x)=0$ on $(-\infty,0]$,  $\chi(x)=1$ on $[\ell,+\infty)$ and $\int_0^{\ell} \chi(x)\dd x =1$. Let $x_i^{\pm}=(2k_i \pm 1)\ell L$, so that $I_{k_i}=[x_i^-,x_i^+]$. We extend $\phi_i$ to $J_n$ by defining $\hat{\phi}_i$, for $x\in J_n$, by :
\begin{equation}\label{eq_prop_wegner_1}
\hat{\phi}_i(x)=\left\lbrace \begin{array}{lcl}
0 & \mathrm{if} &  x\notin [x_i^-,x_i^+]\\
\chi(x-x_i^-)\phi_i(x) & \mathrm{if} & x\in[x_i^-,x_i^- + \ell] \\
\phi_i(x)  & \mathrm{if} & x\in[x_i^- + \ell,x_i^+ -\ell] \\
\chi(x_i^+ -x)\phi_i(x) & \mathrm{if} & x\in[x_i^+ -\ell,x_i^+].
\end{array}\right.
\end{equation}
Then, $\hat{\phi}_i\in D(H^{(J_n)}(\omega))$ and $||\hat{\phi}_i||\leq ||\phi_i||=1$. As $H^{(I_{k_i})}(\omega) \phi_i=\lambda_{k_i}\phi_i$ and $\lambda_{k_i} \in (E-\varepsilon,E+\varepsilon)$, we have :
\begin{eqnarray}
||(H^{(J_n)}(\omega)-E)\hat{\phi}_i|| & \leq & ||(H^{(J_n)}(\omega)-\lambda_{k_i} )\hat{\phi}_i||+ ||(\lambda_{k_i}-E)\hat{\phi}_i|| \nonumber \\
 & = & ||(H^{(J_n)}(\omega)-\lambda_{k_i} )\hat{\phi}_i||+|\lambda_{k_i}-E|\, ||\hat{\phi}_i|| \nonumber \\
\label{eq_prop_wegner_2} & \leq & ||(H^{(J_n)}(\omega)-\lambda_{k_i} )\hat{\phi}_i|| + \varepsilon.
\end{eqnarray}
We want to estimate $||(H^{(J_n)}(\omega)-\lambda_{k_i} )\hat{\phi}_i||$. For every $x\in [x_i^-,x_i^- +\ell]$,
$$(\chi(x-x_i^-)\phi_i(x))''=\chi''(x-x_i^-)\phi_i(x) + 2\chi'(x-x_i^-)\phi_i '(x) + \chi(x-x_i^-)\phi_i ''(x),$$
and, for every $x\in [x_i^+-\ell,x_i^+]$,
$$(\chi(x_i^+ -x)\phi_i(x))''=\chi''(x_i^+ -x)\phi_i(x) - 2\chi'(x_i^+ -x)\phi_i '(x) + \chi(x_i^+ -x)\phi_i ''(x).$$
Thus, using $H^{(I_{k_i})}(\omega) \phi_i=\lambda_{k_i}\phi_i$,
{\small $$(H^{(J_n)}(\omega)-\lambda_{k_i})\hat{\phi}_i=\left\lbrace \begin{array}{lcl}
0 & \mathrm{if} &  x\notin [x_i^-,x_i^+]\\
-\chi''(x-x_i^-)\phi_i(x) - 2\chi'(x-x_i^-)\phi_i '(x) & \mathrm{if} & x\in[x_i^-,x_i^- + \ell] \\
0 & \mathrm{if} & x\in[x_i^- + \ell,x_i^+ -\ell] \\
-\chi''(x_i^+ -x)\phi_i(x) + 2\chi'(x_i^+ -x)\phi_i '(x) & \mathrm{if} & x\in[x_i^+ -\ell,x_i^+].
\end{array}\right.$$}
Hence we have, applying twice Lemma \ref{lem_estim1} for $x_i^-$ and for $x_i^+$ at the second inequality,
\begin{eqnarray}
||(H^{(J_n)}(\omega)-\lambda_{k_i})\hat{\phi}_i||^2 & = & \int_{x_i^-}^{x_i^- +\ell} ||\chi''(x-x_i^-)\phi_i(x) + 2\chi'(x-x_i^-)\phi_i '(x)||^2 \dd x \nonumber \\
 & & + \int_{x_i^+ -\ell}^{x_i^+} ||\chi''(x_i^+ -x)\phi_i(x) - 2\chi'(x_i^+ -x)\phi_i '(x)||^2 \dd x \nonumber \\
 & \leq & \left|\left| \left( \begin{smallmatrix}
\phi_i \\
\phi_i'
\end{smallmatrix}\right)\right|\right|_{L^{\infty}([x_i^-,x_i^- +\ell])}^2 \times \int_{x_i^-}^{x_i^- +\ell} \left|\left| \left( \begin{smallmatrix}
\chi''(x-x_i^-) \\
2\chi'(x-x_i^-)
\end{smallmatrix}\right)\right|\right|^2 \dd x \nonumber \\
 & & + \left|\left| \left( \begin{smallmatrix}
\phi_i \\
\phi_i'
\end{smallmatrix}\right)\right|\right|_{L^{\infty}([x_i^+ -\ell,x_i^+])}^2 \times \int_{x_i^+ -\ell}^{x_i^+} \left|\left| \left( \begin{smallmatrix}
\chi''(x_i^+ -x) \\
2\chi'(x_i^+ -x)
\end{smallmatrix}\right)\right|\right|^2 \dd x \nonumber \\
 & \leq & C_2 \left( \left|\left| \left( \begin{smallmatrix}
\phi_i (x_i^-) \\
\phi_i'(x_i^-) 
\end{smallmatrix}\right)\right|\right| + \left|\left| \left( \begin{smallmatrix}
\phi_i (x_i^+)\\
\phi_i'(x_i^+)
\end{smallmatrix}\right)\right|\right| \right)^2 \nonumber \\
\label{eq_prop_wegner_3} & = & C_2 \left( ||\phi_i'(x_i^-)||^2+||\phi_i'(x_i^+)||^2 \right) \leq C_2 \varepsilon^2,
\end{eqnarray}
using the fact that $\omega \in A_{k_i}$ and using the Dirichlet boundary conditions of $H^{(I_{k_i})}(\omega)$ at $x_i^-$ and $x_i^+$ to say that $\phi_i(x_i^-)=\phi_i(x_i^+)=0$. The constant $C_2$ depends only on $\chi$ and the parameters of the potential of $\Ho$. We normalize $\hat{\phi}_i$ by setting $\tilde{\phi}_i=\hat{\phi}_i/||\hat{\phi}_i||$. We also have $||\hat{\phi}_i|| \geq \frac{1}{2}$ because $||\phi_i||=1$ and $\int_0^{\ell} \chi(x) \dd x=1$, and thus, by (\ref{eq_prop_wegner_2}) and (\ref{eq_prop_wegner_3}),
\begin{eqnarray}
||(H^{(J_n)}(\omega)-E)\tilde{\phi}_i|| & = & ||\hat{\phi}_i||^{-1} ||(H^{(J_n)}(\omega)-E)\hat{\phi}_i|| \nonumber \\
\label{eq_prop_wegner_4} & \leq & 2 \sqrt{C_2}\varepsilon := C_3 \varepsilon.
\end{eqnarray}

\noindent We have construct, for each $i\in \{1,\ldots,j \}$, a normalized function $\tilde{\phi}_i$ in $D(H^{(J_n)}(\omega))$, supported in $I_{k_i}$, such that :
\begin{equation}\label{eq_prop_wegner_5}
\forall i\in \{1,\ldots,j \},\ ||(H^{(J_n)}(\omega)-E)\tilde{\phi}_i||\leq C_3 \varepsilon,
\end{equation}
where $C_3$ depends only on the choice of $\chi$ and on the parameters of the potential of $\Ho$. Moreover, as $\tilde{\phi}_i$ is supported in $I_{k_i}$ and the intervals $I_{k_1},\ldots,I_{k_j}$ are disjoints, $(\tilde{\phi}_1,\ldots,\tilde{\phi}_j)$ is an orthonormal set and :
\begin{equation}\label{eq_prop_wegner_6}
\forall i\neq i',\ (\tilde{\phi}_i,H^{(J_n)}(\omega)\tilde{\phi}_{i'})=0=(H^{(J_n)}(\omega)\tilde{\phi}_{i}, H^{(J_n)}(\omega)\tilde{\phi}_{i'}).
\end{equation}
\vskip 2mm

\noindent We recall that, as proven in \cite[Section 2.3]{boumazarmp}, the spectrum of $H^{(J_n)}(\omega)$ is a discrete set of eigenvalues with only $+\infty$ as accumulation point, and thus, its number of eigenvalues in any compact interval is finite. As we have (\ref{eq_prop_wegner_5}) and (\ref{eq_prop_wegner_6}), we can apply to $(\tilde{\phi}_1,\ldots,\tilde{\phi}_j)$ and $H^{(J_n)}(\omega)$ the version of Temple's inequality given in \cite[Lemma A.3.2]{ST85} to obtain that the number of eigenvalues of $H^{(J_n)}(\omega)$ in $[E-C_3\varepsilon,E+C_3\varepsilon]$, counted with multiplicity, is at least $j$. So we have, for a fixed $\omega \in \Omega$,
$$j=\#\big\{ k\in\{-n,\ldots,n\}\ \big|\ \omega \in A_k \big\} \leq \# \big\{ \lambda \in [E-C_3\varepsilon,E+C_3\varepsilon]\ \big|\ \lambda \in\sigma_{\mathrm{p}}(H^{(J_n)}(\omega)) \big\}. $$
Moreover, applying the law of large numbers to the random variables $\mathbf{1}_{A_{-n}},\ldots,\mathbf{1}_{A_n}$, we get that, for $\mathsf{P}$-almost every $\omega \in \Omega$,
$$\frac{1}{2n+1} \#\big\{ k\in\{-n,\ldots,n\}\ \big|\ \omega \in A_k \big\}=\frac{1}{2n+1} \left(\mathbf{1}_{A_{-n}}+\ldots+\mathbf{1}_{A_n} \right) \xrightarrow[n\to +\infty]{} \E(\mathbf{1}_{A_0}),$$
with $\E(\mathbf{1}_{A_0})=\mathsf{P}(A_0)$. Now, we assume that $\varepsilon$ is small enough to ensure that $[E-C_3\varepsilon,E+C_3\varepsilon]\subset I\subset \tilde{I}$ and to apply (\ref{eq_prop_wegner_ids_reg}) on $[E-C_3\varepsilon,E+C_3\varepsilon]$. Then we have, for $\mathsf{P}$-almost every $\omega \in \Omega$,
\begin{eqnarray}
\mathsf{P}(A_0) & = & \lim_{n \to +\infty} \frac{1}{2n+1} \#\big\{ k\in\{-n,\ldots,n\}\ \big|\ \omega \in A_k \big\} \nonumber \\
 & \leq & \lim_{n \to +\infty} \frac{1}{2n+1} \# \big\{ \lambda \in [E-C_3\varepsilon,E+C_3\varepsilon]\ \big|\ \lambda \in\sigma_{\mathrm{p}}(H^{(J_n)}(\omega)) \big\} \nonumber \\
 & = & 2\ell L  \lim_{n \to +\infty} \frac{1}{2(2n+1)\ell L} \# \big\{ \lambda \in [E-C_3\varepsilon,E+C_3\varepsilon]\ \big|\ \lambda \in\sigma_{\mathrm{p}}(H^{(J_n)}(\omega)) \big\} \nonumber \\
 & = & 2\ell L (N(E+C_3\varepsilon)-N(E-C_3\varepsilon)) \nonumber \\
 & \leq & 2\ell L C_1 (2C_3 \varepsilon)^{\alpha} :=C \ell L \varepsilon^{\alpha}. \nonumber 
\end{eqnarray}
It finishes the proof.
\end{proof}

\noindent We remark that the exponent $\alpha$ in (\ref{eq_prop_wegner}) is the same as the H\"older exponent of the Lyapounov exponent and the integrated density of states. We can now use Proposition \ref{prop_wegner}, Lemma \ref{lem_wegner_U} and Lemma \ref{lem_estim2} to prove Theorem \ref{thm_wegner}.

\begin{proof}[Proof of Theorem \ref{thm_wegner}]
Let $I\subset \R$ be a compact interval and $\tilde{I}$ be an open interval, $I\subset \tilde{I}$, such that, for every $E\in \tilde{I}$, $G(E)$ is \LpSI. Let $\beta \in (0,1)$ and $\kappa >0$. For $L\in \N$, we set $n_L=[\tau (\ell L)^{\beta}]+1$ with some arbitrary $\tau >0$, where $[\tau (\ell L)^{\beta}]$ is the largest integer less or equal to $\tau (\ell L)^{\beta}$. For every $E\in I$ and $\theta_0 >0$, we define the events :
\begin{equation}\label{eq_thm_wegner_1}
A_{\theta_0}^{(L)}(E)=\big\{ \omega \in \Omega\ |\ ||T_{\omega^{(n_L -L-1)}}(E)\ldots T_{\omega^{(-L)}}(E) \left( \begin{smallmatrix}
1 \\
0
\end{smallmatrix}\right) || > \ee^{\theta_0 (\ell L)^{\beta}} \big\},
\end{equation}

\begin{equation}\label{eq_thm_wegner_2}
B_{\theta_0}^{(L)}(E)=\big\{ \omega \in \Omega\ |\ ||T_{\omega^{(L+1-n_L)}}(E)\ldots T_{\omega^{(L)}}(E) \left( \begin{smallmatrix}
0 \\
1
\end{smallmatrix}\right) || > \ee^{\theta_0 (\ell L)^{\beta}} \big\}.
\end{equation}
Let $\xi_1 >0$ and $\delta>0$ be the constants given by Lemma \ref{lem_wegner_U}. Let $\theta=\frac{\tau \xi_1}{2\delta}$ and let $C^{(L)}(E)\in \mathcal{A}$ be the event :
$$\Big\{\omega\in \Omega\ \Big|\ d\left(E, \sigma(\HL)\right) \leq \ee^{-\kappa (\ell L)^{\beta}} \Big\}.$$
If we set :
\begin{eqnarray}
 (a)& := & \mathsf{P}\left( C^{(L)}(E) \cap \bigcap_{\{E'\ |\ |E-E'|\leq \ee^{-\kappa (\ell L)^{\beta}} \} } \left[ A_{\frac{\theta}{2}}^{(L)}(E') \cap B_{\frac{\theta}{2}}^{(L)}(E') \right] \right),  \nonumber \\
 (b)& := & \mathsf{P}\left( A_{\theta}^{(L)}(E)\cap B_{\theta}^{(L)}(E) \cap \bigcup_{\{E'\ |\ |E-E'|\leq \ee^{-\kappa (\ell L)^{\beta}} \} } \left( A_{\frac{\theta}{2}}^{(L)}(E') \right)^c \right),\nonumber \\
 (c)& := & \mathsf{P}\left( A_{\theta}^{(L)}(E)\cap B_{\theta}^{(L)}(E) \cap \bigcup_{\{E'\ |\ |E-E'|\leq \ee^{-\kappa (\ell L)^{\beta}} \} } \left( B_{\frac{\theta}{2}}^{(L)}(E') \right)^c \right), \nonumber\\
 (d)& := & \mathsf{P}\left( \left(A_{\theta}^{(L)}(E) \right)^c \right) + \mathsf{P}\left( \left(B_{\theta}^{(L)}(E) \right)^c \right), \nonumber
\end{eqnarray}
then we have :
\begin{equation}\label{eq_thm_wegner_7} 
\mathsf{P}\left( \Big\{ \omega\in \Omega\ \Big|\ d\left(E, \sigma(\HL)\right) \leq \ee^{-\kappa (\ell L)^{\beta}} \Big\} \right) \leq (a)+(b)+(c)+(d).
\end{equation}
Using Tchebychev's inequality and Lemma \ref{lem_wegner_U}, applied for $p=1$, we directly get, for $L$ large enough,
\begin{equation}\label{eq_thm_wegner_8} 
(d) \leq 2\ee^{-\xi_1 n_L -\delta \theta (\ell L)^{\beta}} \leq 2\ee^{-\xi_1 \tau (\ell L)^{\beta} + \frac{\tau \xi_1}{2} (\ell L)^{\beta}}=2\ee^{-\frac{\tau \xi_1}{2} (\ell L)^{\beta}}.
\end{equation}
To estimate $(b)+(c)$, we use the fact that there exists a constant $C_0>0$ independent of $n,\omega,E$ such that, for every $E,E'\in I$,
\begin{equation}\label{eq_thm_wegner_9} 
\forall n\in \Z,\ ||T_{\omega^{(n)}}(E)-T_{\omega^{(n)}}(E')|| \leq C_0 |E-E'|.
\end{equation}
It was proven in the proof of \cite[Theorem 2]{boumazarmp}. From this, we deduce that the event $\left( A_{\frac{\theta}{2}}^{(L)}(E') \right)^c \cap A_{\theta}^{(L)}(E)$ occurs for at least one $E'$ such that $|E-E'|\leq \ee^{-\kappa (\ell L)^{\beta}}$. Then, following \cite{CKM87}, we prove that for this $E'$, there exists $\alpha_0 >0$ such that, if $\tau >0$ is small enough,
\begin{equation}\label{eq_thm_wegner_10} 
\mathsf{P}\left( A_{\theta}^{(L)}(E) \cap \left( A_{\frac{\theta}{2}}^{(L)}(E') \right)^c \right) \leq \ee^{-\alpha_0 (\ell L)^{\beta}}.
\end{equation}
We have a similar inequality for $B_{\theta}^{(L)}(E) \cap \left( B_{\frac{\theta}{2}}^{(L)}(E'') \right)^c$ for at least one $E''$ such that $|E-E''|\leq \ee^{-\kappa (\ell L)^{\beta}}$. Thus, by inclusions of the events :
\begin{eqnarray}
\label{eq_thm_wegner_11}\qquad (b)+(c) & \leq &  \mathsf{P}\left( A_{\theta}^{(L)}(E) \cap \left( A_{\frac{\theta}{2}}^{(L)}(E') \right)^c \right) + \mathsf{P}\left( B_{\theta}^{(L)}(E) \cap \left( B_{\frac{\theta}{2}}^{(L)}(E'') \right)^c \right) \\
& \leq & 2\ee^{-\alpha_0 (\ell L)^{\beta}}. \nonumber
\end{eqnarray}
It remains to estimate $(a)$. Let $\omega$ be in the event in the probability $(a)$. Let $E'\in (E-\ee^{-\kappa(\ell L)^{\beta}},E+\ee^{-\kappa(\ell L)^{\beta}})$ be an eigenvalue of $\HL$ with a normalized eigenvector $\phi$. As $\omega$ is in the event in the probability $(a)$, we have, using Lemma \ref{lem_estim2},
\begin{equation}\label{eq_thm_wegner_12} 
||\phi(-\ell L)||^2 + ||\phi'(-\ell L)||^2 = ||\phi'(-\ell L)||^2 \leq 2\ee^{-\theta (\ell L)^{\beta}}, 
\end{equation}
and
\begin{equation}\label{eq_thm_wegner_13} 
||\phi(\ell L)||^2 + ||\phi'(\ell L)||^2 = ||\phi'(\ell L)||^2 \leq 2\ee^{-\theta (\ell L)^{\beta}}.
\end{equation}
Now, using Proposition \ref{prop_wegner} with $\varepsilon=\ee^{-\kappa(\ell L)^{\beta}}$, we get :
\begin{equation}\label{eq_thm_wegner_14} 
(a) \leq C\ell L \max\left( \ee^{-\kappa(\ell L)^{\beta}}, 2\sqrt{2} \ee^{-\frac{\theta}{2} (\ell L)^{\beta}} \right)^{\alpha}.
\end{equation}
Putting (\ref{eq_thm_wegner_8}), (\ref{eq_thm_wegner_11}) and (\ref{eq_thm_wegner_14}) in (\ref{eq_thm_wegner_7}), we finally obtain (\ref{eq_thm_wegner}) for a suitable $\xi >0$ and $L$ large enough.
\end{proof}

\section{Localization properties for $\Ho$ and $\Hl$}\label{sec_localization}

In this Section, we will prove Theorem \ref{thm_localization} and its corollary, Theorem \ref{thm_localization_Hl}. It will be the content of Section \ref{sec_proof_localization}. Before that, we will present in Section \ref{sec_MSA}, the requirements needed to perform a multiscale analysis.

\subsection{Requirements of the multiscale analysis}\label{sec_MSA}

\noindent In this Section we present the properties of $\Ho$ needed to use the multiscale analysis. These properties are, for most of them, already detailed in \cite{stollmann} and \cite{DS01}, but we will follow here the notations of \cite{K07}, based upon \cite{GK01}, as we did in the introduction for the definitions of spectral and dynamical localization.

\noindent We start by giving a property that guarantees the existence of a generalized eigenfunction expansion for $\Ho$. If we denote by $\mathcal{H}$ the Hilbert space $L^2(\R)\otimes \C^N$, given $\nu > \frac{1}{4}$, we define the weighted spaces $\mathcal{H}_{\pm}$ by :
$$\mathcal{H}_{\pm}=L^2(\R,<x>^{\pm 4\nu} \dd x)\otimes \C^N,$$
where $<x>$ is as in the introduction, equal to $\sqrt{1+|x|^2}$, for any $x\in \R$. We define on $\mathcal{H}_+ \times \mathcal{H}_-$ the sesquilinear form $<\ ,\ >_{\mathcal{H}_+ ,  \mathcal{H}_-}$ by :
$$\forall (\phi,\psi) \in \mathcal{H}_+ \times \mathcal{H}_- ,\ <\phi,\psi>_{\mathcal{H}_+ ,  \mathcal{H}_-}=\int_{\R} {^t}\phi(x) \overline{\psi(x)} \dd x.$$
We also set $T$ to be the self-adjoint operator on $\mathcal{H}$ given by the multiplication by $<x>^{2\nu}$. We recall that $E_{\omega}(.)$ denotes the spectral projection of $\Ho$ and we present a property of Strong Generalized Eigenfunction Expansion.

\begin{defi}\label{def_SGEE}
Let $I\subset \R$ be an open interval. We say that $\Ho$ has the property (SGEE) on $I$ if, for some $\nu >\frac{1}{4}$,
\begin{itemize}
\item[(i)] for $\mathsf{P}$-almost every $\omega\in \Omega$, the set $\mathcal{D}_+ (\omega)=\{ \phi \in D(\Ho) \cap \mathcal{H}_+ \ |\ \Ho\phi \in \mathcal{H}_+ \}$ is dense in $\mathcal{H}_+$ and is an operator core for $\Ho$,
\item[(ii)] there exists a bounded, continuous function $f$ on $\R$, strictly positive on $\sigma(\Ho)$ such that :
$$\E \left( \left( \mathrm{tr}_{\mathcal{H}} (T^{-1} f(\Ho) E_{\omega}(I) T^{-1})\right)^2\right) <\infty.$$
\end{itemize} 
\end{defi}

\noindent Now, we can give the definition of a generalized eigenfunction and of a generalized eigenvalue.

\begin{defi}\label{def_Gef_Gev}
A measurable function $\psi:\R\to \C^N$ is said to be a generalized eigenfunction of $\Ho$ with generalized eigenvalue $\lambda$ if $\psi \in \mathcal{H}_- \setminus \{0\}$ and :
$$\forall \phi \in \mathcal{D}_+ (\omega),\ <H(\omega)\phi,\psi>_{\mathcal{H}_+ , \mathcal{H}_-} = \overline{\lambda}<\phi,\psi>_{\mathcal{H}_+ , \mathcal{H}_-}.$$
\end{defi}

\noindent We now introduce definitions and notations for the restrictions of $\Ho$ to intervals of $\R$ of finite length. For $x\in \Z$ and $L\geq 1$, we denote by $I_{L}(x)$ the interval $I_{L}(x)=[x-\ell L,x+\ell L]$, centered at $x$ and of length $2\ell L$. As in the introduction, we denote by $\mathbf{1}_{x,L}$ the characteristic function of $I_{L}(x)$ and simply by $\mathbf{1}_x$, the characteristic function of $I_1(x)$. For $L\in 3\N^*$, we also set,
$$\mathbf{1}_{x,L}^{\mathrm{out}}= \mathbf{1}_{x,L}-\mathbf{1}_{x,L-2}\quad \mathrm{and}\quad \mathbf{1}_{x,L}^{\mathrm{in}}= \mathbf{1}_{x,\frac{L}{3}}.$$

\noindent For every $x\in \Z$ and every $L\geq 1$, we denote by $H^{(x,L)}(\omega)$ the restriction of $\Ho$ to $L^2(I_{L}(x))\otimes \C^N$ with Dirichlet boundary conditions, and, for $E\notin \sigma(H^{(x,L)}(\omega))$, by $R^{(x,L)}(E)$ the resolvent of $H^{(x,L)}(\omega)$ at $E$, $R^{(x,L)}(E)=(H^{(x,L)}(\omega)-E)^{-1}$. We also denote by $E_{\omega}^{(x,L)}$ the spectral projection of $H^{(x,L)}(\omega)$. With all these notations, we can state the following Simon-Lieb type inequality property.

\begin{defi}\label{def_SLI}
Let $I\subset \R$ be a compact interval. We say that $\Ho$ has the property (SLI) if there exists a constant $C_I$ such that, given $L,L',L''\in \N$ and $x,y,y'\in \Z$, with $I_{L''}(y) \subset I_{L'-2}(y') \subset I_{L-2}(x)$, for $\mathsf{P}$-almost every $\omega \in \Omega$, if $E\in I$, $E\notin \sigma(H^{(x,L)}(\omega)) \cup \sigma(H^{(y',L')}(\omega))$, we have :
$$||\mathbf{1}_{x,L}^{\mathrm{out}} R^{(x,L)}(E) \mathbf{1}_{y,L''}|| \leq C_{I} ||\mathbf{1}_{y',L'}^{\mathrm{out}} R^{(y',L')}(E) \mathbf{1}_{y,L''}||\, ||\mathbf{1}_{x,L}^{\mathrm{out}} R^{(x,L)}(E) \mathbf{1}_{y',L'}^{\mathrm{out}}||.$$
\end{defi}
 
\noindent The property (SLI) is an estimate of how the finite length resolvents $R^{(x,L)}(E)$ vary in norms when we go from one interval to a larger one containing the first one. It is also called a Geometric Resolvent Inequality in \cite{stollmann}. We now state a property which is an estimate of generalized eigenfunctions in terms of finite length resolvents. It is called an Eigenfunction Decay Inequality.

\begin{defi}\label{def_EDI}
Let $I\subset \R$ be a compact interval. We say that $\Ho$ has the property (EDI) if there exists a constant $\tilde{C}_I$ such that, for $\mathsf{P}$-almost every $\omega \in \Omega$, given a generalized eigenvalue $E\in I$, we have for any $x\in \Z$ and any $L\in \N$ with $E\notin \sigma(H^{(x,L)}(\omega))$,
$$||\mathbf{1}_{x} \psi|| \leq \tilde{C}_I ||\mathbf{1}_{x,L}^{\mathrm{out}} R^{(x,L)}(E) \mathbf{1}_{x}||\, ||\mathbf{1}_{x,L}^{\mathrm{out}} \psi||.$$
\end{defi}

\noindent The next property is an estimate of the average number of eigenvalues of $H^{(x,L)}(\omega)$.

\begin{defi}\label{def_NE}
Let $I\subset \R$ be a compact interval. We say that $\Ho$ has the property (NE) if there exists a finite constant $\hat{C}_I$ such that, for every $x\in \Z$ and $L\in \N$,
$$\E \left( \mathrm{tr}_{\mathcal{H}} (E_{\omega}^{(x,L)}(I))\right) \leq \hat{C}_I \ell L.$$
\end{defi}

\noindent The last property required for the multiscale analysis is of a different nature. It is a probabilistic property of independence of distant intervals. An event $A\in \mathcal{A}$ is said to be \emph{based on $I_{L}(x)$} if it is determined by conditions on $H^{(x,L)}(\omega)$. Given $d_0>0$, we say that $I_{L}(x)$ and $I_{L'}(x')$ are \emph{$d_0$-nonoverlapping} if $d(I_{L}(x),I_{L'}(x'))>d_0$.

\begin{defi}\label{def_IAD}
We say that $\Ho$ has the property (IAD) if there exists $d_0>0$ such that events based on $d_0$-nonoverlapping intervals are independent.
\end{defi}

\noindent Before giving the definition of the multiscale analysis set $\Sigma_{\mathrm{MSA}}$, we need a last definition.

\begin{defi}\label{def_goodbox}
Let $\gamma,E\in \R$ and $\omega \in \Omega$. For $x\in \Z$ and $L\in 3\N^*$, we say that the interval $I_{L}(x)$ is $(\omega,\gamma,E)$-good if $E\notin \sigma(H^{(x,L)}(\omega))$ and $$||\mathbf{1}_{x,L}^{\mathrm{out}} R^{(x,L)}(E) \mathbf{1}_{x,L}^{\mathrm{in}}|| \leq \ee^{-\gamma \ell \frac{L}{3}}.$$  
\end{defi}

\noindent We can now define the multiscale analysis set. We assume that $\Ho$ has the property (IAD).

\begin{defi}\label{def_MSA}
The set $\Sigma_{\mathrm{MSA}}$ for $\Ho$ is the set of $E\in \Sigma$ for which there exists an open interval $I$ such that $E\in I$ and, given any $\zeta$, $0<\zeta <1$, and $\alpha_0 \in (1,\zeta^{-1})$, there is a length scale $L_0\in 6\N$ and a real number $\gamma >0$, so if we set $L_{k+1}=\max\{L\in 6\N\ |\ L\leq L_k^{\alpha_0}\}$ for every $k\in \N$, we have :
$$\mathsf{P}\left(\big\{ \omega \in \Omega\ |\ \forall E'\in I,\ I_{L}(x)\ \mathrm{or}\ I_L(y)\ \mathrm{is}\ (\omega,\gamma,E')-\mathrm{good} \big\} \right) \geq 1-\ee^{-L_k^{\zeta}}.$$
for every $k\in \N$ and $x,y\in \Z$ with $|x-y|>L_k +d_0$.
\end{defi}

\noindent We finish this Section by stating the bootstrap multiscale analysis theorem of \cite[Theorem 3.4]{GK01} for operators involving singular probability measure like $\Ho$.

\begin{thm}[\cite{GK01}, Theorem 3.4]\label{thm_MSA_GK}
Assume that $\Ho$ has the properties (IAD), (SLI), (NE) and verify a Wegner estimate (W) like (\ref{eq_thm_wegner}) on an open interval $I\subset \R$. Given $\gamma >0$, for each $E\in I$, there exists an integer $L_{\gamma}(E)$, bounded on compact subintervals of $I$, such that, if for a given $E_0\in \Sigma \cap I$ we have :
\begin{equation}\label{eq_thm_MSA}
\mathsf{P}\left(\big\{ \omega \in \Omega\ |\ I_{L_0}(0)\ \mathrm{is}\ (\omega,\gamma,E_0)-\mathrm{good} \big\} \right) \geq 1-\ee^{-\delta \ell L},
\end{equation}
for $L_0\in \N$, $L_0>L_{\gamma}(E)$ and $\delta >0$, then $E_0 \in \Sigma_{\mathrm{MSA}}$. 
\end{thm}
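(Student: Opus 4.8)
The final statement is the bootstrap multiscale analysis theorem of Germinet and Klein, so a complete proof would reproduce the argument of \cite{GK01}; here is the plan I would follow. First I would fix $\zeta\in(0,1)$ and $\alpha_0\in(1,\zeta^{-1})$ and build the scales $L_{k+1}=\max\{L\in 6\N\ |\ L\le L_k^{\alpha_0}\}$ as in Definition \ref{def_MSA}. The target is the two-box estimate
\[
\mathsf{P}\big(\{\omega\ |\ \exists E'\in I,\ \text{both }I_{L_k}(x)\text{ and }I_{L_k}(y)\text{ are }(\omega,\gamma,E')\text{-bad}\}\big)\le \ee^{-L_k^{\zeta}}
\]
for all $k$ and all $x,y\in\Z$ with $|x-y|>L_k+d_0$, which is exactly the defining property of $E_0\in\Sigma_{\mathrm{MSA}}$. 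The structural hypotheses would enter in the usual roles: (SLI) propagates exponential decay of the finite-volume resolvents from one scale to the next, (W) controls ``resonant'' boxes, (NE) converts single-energy estimates into ones uniform over $I$, and (IAD) supplies independence of events based on well-separated boxes; the base case at scale $L_0$ is the hypothesis (\ref{eq_thm_MSA}), which is in fact exponentially, hence a fortiori sub-exponentially, small.

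The first part of the scheme is the classical single-energy multiscale induction (as in \cite{stollmann} or \cite{CKM87}). One proves by induction on $k$ that, at a fixed energy $E$, the box $I_{L_k}(x)$ is $(\omega,\gamma,E)$-good with failure probability at most $L_k^{-p}$ for a prescribed large $p$: covering $I_{L_{k+1}}(x)$ by $L_k$-boxes and iterating (SLI) from $\mathbf{1}_{x,L_{k+1}}^{\mathrm{in}}$ out to $\mathbf{1}_{x,L_{k+1}}^{\mathrm{out}}$ shows that the big box fails only if it is resonant at $E$ (probability $\lesssim L_{k+1}^{-p}$ by (W)) or contains two $d_0$-nonoverlapping bad $L_k$-subboxes (probability $\lesssim L_{k+1}^{2}L_k^{-2p}$ by (IAD)), and this polynomial recursion closes once $\alpha_0$, $p$ and $L_0$ are suitably related; the rate lost at each link of the (SLI) chain is recovered because $L_{k+1}\sim L_k^{\alpha_0}$ dwarfs $L_k$. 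This is also where $L_0>L_\gamma(E)$ is used, $L_\gamma(E)$ being chosen to absorb all the thresholds, and since the constants in (SLI), (W) and (NE) depend only on the ambient compact interval, $L_\gamma(E)$ is bounded on compact subintervals of $I$.

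The genuinely \emph{bootstrap} part then upgrades this polynomially-small probability to the sub-exponential $\ee^{-L_k^{\zeta}}$ and the single-energy statement to an $I$-uniform one. The probability amplification is a change-of-scale argument: after routing around bad subboxes via (SLI), an $L_k$-box whose resolvent fails to decay must contain \emph{many} bad subboxes at a smaller scale where the polynomial estimate above is in force; by (IAD) these subbox events are independent, so the probability of ``many simultaneously bad'' is bounded by a binomial tail that is sub-exponentially small in $L_k$ once $p$ and the intermediate scale are calibrated. The passage from a fixed $E'$ to all $E'\in I$ uses (NE): one covers $I$ by about $\ee^{L_k^{\zeta}}$ subintervals of length $\ee^{-L_k^{\zeta}}$, controls the variation of $R^{(x,L_k)}(E')$ across each, and bounds the number of ``bad'' subintervals by the (NE)-estimate on $\mathrm{tr}\,E_\omega^{(x,L_k)}(I)$. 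A final pass of the same kind upgrades the decay rate to the full $\gamma$ appearing in Definition \ref{def_goodbox}, and assembling the resulting scale-$L_k$ estimates over all $k$ gives $E_0\in\Sigma_{\mathrm{MSA}}$.

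The main obstacle is precisely this second part. A plain induction cannot be pushed directly to sub-exponential probabilities: once the bound is written as $\ee^{-L_k^{\zeta}}$ rather than $L_k^{-p}$, the quadratic recursion of the first part no longer closes, because the scales $L_{k+1}\sim L_k^{\alpha_0}$ grow too fast relative to the gain obtained from squaring. One is therefore forced into the amplification mechanism above, and the technical heart of the proof is the careful interlocking of the several auxiliary length scales, together with the order in which the decay rate and the probability are successively improved.
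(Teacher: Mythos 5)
This theorem is not proved in the paper at all: it is quoted verbatim from Germinet--Klein \cite{GK01} (their Theorem 3.4) and used as a black box, so there is no internal proof to compare your attempt against. Your outline is a fair summary of the bootstrap multiscale analysis argument of \cite{GK01} (single-energy induction via (SLI), (W), (IAD), then amplification to sub-exponential probabilities and energy-interval uniformity), and thus takes essentially the same route the paper implicitly relies on by citation.
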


\noindent The assumption (\ref{eq_thm_MSA}) is also known as an Initial Length Scale Estimate (ILSE) in \cite{DSS02} and essentially, it remains to prove such an (ILSE) for our operator $\Ho$ on a valid interval, to prove localization on this interval. It is the main purpose of the next Section.

\subsection{Proof of the localization for $\Ho$ and $\Hl$}\label{sec_proof_localization}

To prove theorems \ref{thm_localization} and \ref{thm_localization_Hl}, we have to establish a link between multiscale analysis and the properties (EL), (SDL) and (SSEHSKD) defined in the introduction. This link is established in the following theorem.

\begin{thm}[\cite{K07}, Theorem 6.1]\label{thm_loc_MSA}
Let $I\subset \R$ be an open interval on which $\Ho$ has the properties (IAD), (SGEE) and (EDI). Then :
$$\Sigma_{\mathrm{MSA}} \cap I \subset \Sigma_{\mathrm{EL}} \cap \Sigma_{\mathrm{SSEHSKD}} \cap I \subset \Sigma_{\mathrm{EL}} \cap \Sigma_{\mathrm{SDL}} \cap I.$$
\end{thm}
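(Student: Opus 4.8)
The statement is the implication that, on an open interval $I$ where $\Ho$ enjoys (IAD), (SGEE) and (EDI), the multiscale-analysis set $\Sigma_{\mathrm{MSA}}$ is contained in $\Sigma_{\mathrm{EL}} \cap \Sigma_{\mathrm{SSEHSKD}}$ and hence, by the inclusion $\Sigma_{\mathrm{SSEHSKD}} \subset \Sigma_{\mathrm{SDL}}$ already noted in the introduction, in $\Sigma_{\mathrm{EL}} \cap \Sigma_{\mathrm{SDL}}$. This is exactly Theorem 6.1 of \cite{K07}, so the plan is to show that the hypotheses of that theorem are precisely the three properties we have assumed, and then to quote it. Concretely, I would first recall that the framework of \cite{GK01,K07} is set up for a fairly general family of operators, and that the conclusions (EL) and (SSEHSKD) are extracted from a successful multiscale analysis via three ingredients: a generalized eigenfunction expansion (to reduce pure point spectrum and eigenfunction decay to decay of generalized eigenfunctions), an eigenfunction decay inequality (to transfer exponential decay of finite-volume resolvents to the generalized eigenfunctions), and the independence-at-a-distance property (needed to run the probabilistic summation over scales). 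These are our (SGEE), (EDI) and (IAD) respectively.

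The second — and only substantive — step is therefore to check that the operator $\Ho$ of \eqref{model_H} (and a fortiori $\Hl$) actually satisfies (SGEE), (EDI) and (IAD) on $I$, so that the abstract theorem applies verbatim. For (IAD): since the $V_\omega^{(n)}$ are i.i.d.\ and $V_\omega^{(n)}$ is supported in $[0,\ell]$, the restriction $H^{(x,L)}(\omega)$ depends only on finitely many $\omega^{(n)}$, and two boxes $I_L(x)$, $I_{L'}(x')$ at distance $>\ell$ involve disjoint families of these variables; hence $d_0=\ell$ works. For (SGEE): the operator $-\frac{\dd^2}{\dd x^2}\otimes I_N$ plus a uniformly bounded potential has, for any $\nu>\frac14$, the weighted core $\mathcal{D}_+(\omega)$ dense (this is standard and uses only self-adjointness on $H^2(\R)\otimes\C^N$ and boundedness of $\sum_n V_\omega^{(n)}(\cdot-\ell n)$), while the trace-class bound in (SGEE)(ii) follows from the polynomial decay of $T^{-1}=<x>^{-2\nu}$ together with the fact that $f(\Ho)E_\omega(I)$ with $f$ a suitable fast-decaying function is locally trace class with controlled norm; this is exactly the kind of estimate recorded in \cite{boumazarmp} for operators of the form \eqref{model_dimd}. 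For (EDI): this is the Combes–Thomas / Caccioppoli-type estimate relating $\|\mathbf{1}_x\psi\|$ for a generalized eigenfunction to $\|\mathbf{1}_{x,L}^{\mathrm{out}}R^{(x,L)}(E)\mathbf{1}_x\|$ and $\|\mathbf{1}_{x,L}^{\mathrm{out}}\psi\|$, and it holds for any Schrödinger operator with locally bounded potential, in particular for $\Ho$; one obtains it by applying the resolvent $R^{(x,L)}(E)$ to the identity $(H^{(x,L)}(\omega)-E)\mathbf{1}_{x,L}\psi = [\text{commutator terms supported in }I_{x,L}^{\mathrm{out}}]$ and estimating the commutator of the multiplication operator with $-\frac{\dd^2}{\dd x^2}\otimes I_N$.

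Having verified these three properties, the proof concludes by citing \cite[Theorem 6.1]{K07}: it gives $\Sigma_{\mathrm{MSA}}\cap I \subset \Sigma_{\mathrm{EL}}\cap\Sigma_{\mathrm{SSEHSKD}}\cap I$, and the last inclusion $\Sigma_{\mathrm{EL}}\cap\Sigma_{\mathrm{SSEHSKD}}\cap I \subset \Sigma_{\mathrm{EL}}\cap\Sigma_{\mathrm{SDL}}\cap I$ is immediate from $\Sigma_{\mathrm{SSEHSKD}}\subset\Sigma_{\mathrm{SDL}}$. I expect the main obstacle to be purely a matter of bookkeeping rather than of mathematical difficulty: one must make sure that the matrix-valued, continuous setting on $L^2(\R)\otimes\C^N$ matches exactly the hypotheses under which \cite{K07,GK01} are stated (they are written for scalar $\C$-valued operators on $L^2(\R^d)$), and that the weighted-space constructions and the trace estimates behave correctly after tensoring with $\C^N$ — but since $\C^N$ is finite dimensional every such estimate only picks up an extra factor of $N$, so no genuine new argument is needed, only a careful transcription. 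The regularity and finite-volume trace bounds needed for (SGEE)(ii) are the ones already established in \cite{boumazarmp}, which is why that reference is the one to lean on here.
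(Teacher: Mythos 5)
Your proposal takes essentially the same route as the paper: the statement is a conditional result quoted verbatim as \cite[Theorem 6.1]{K07}, and the paper gives no proof beyond that citation together with the observation $\Sigma_{\mathrm{SSEHSKD}}\subset\Sigma_{\mathrm{SDL}}$ already made in the introduction, which is exactly your first and final steps. Your ``substantive'' second step --- verifying (IAD), (SGEE) and (EDI) for $\Ho$ --- is not actually required for this statement, since those properties are its hypotheses; in the paper that verification is carried out later, in the proof of Theorem \ref{thm_localization}, by adapting the proof of \cite[Theorem A.1]{GK04} to the matrix-valued setting, much along the lines you sketch.
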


\noindent According to this theorem, to prove theorems \ref{thm_localization} and \ref{thm_localization_Hl}, it only remains to prove an (ILSE) for $\Ho$ to be able to apply theorem \ref{thm_MSA_GK} for every energies on a suitable interval. We can summarize in the following figure, the ingredients of a proof of localization using multiscale analysis.

\begin{equation}\label{fig_localization_plan}
\underbrace{\mathrm{(IAD)}+\mathrm{(SLI)}+\mathrm{(NE)}+\mathrm{(W)}+\mathrm{(ILSE)}}_{\Downarrow}
\end{equation} 
$$\qquad \qquad \qquad \qquad \ \underbrace{\mathrm{(MSA)}+\mathrm{(SGEE)}+\mathrm{(EDI)}}_{\Downarrow} $$
$$\qquad \qquad \qquad \qquad \ \overbrace{\mathrm{(EL)}+\mathrm{(SDL)}+\mathrm{(SSEHSKD)}}$$
\vskip 6mm

\begin{prop}\label{prop_ILSE}
Let $I\subset \R$ be an open interval such that, for every $E\in I$, $G(E)$ is \LpSI. Let $E\in I$. For every $\varepsilon >0$, there exist $\delta >0$ and $L_0 \in \N$ such that, for every $L\geq L_0$, $L\in 3\N^*$,
\begin{equation}\label{eq_prop_ILSE}
\mathsf{P}\left( \big\{ I_L(0)\ \mathrm{is}\ (\omega,\gamma_1(E)-\varepsilon,E)-\mathrm{good} \big\} \right) \geq 1-\ee^{-\delta \ell L}.
\end{equation}
\end{prop}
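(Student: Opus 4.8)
The plan is to bound the norm $\|\mathbf{1}_{0,L}^{\mathrm{out}} R^{(0,L)}(E) \mathbf{1}_{0,L}^{\mathrm{in}}\|$ by translating the resolvent estimate into a statement about solutions of $H^{(L)}_\ell(\omega)u=Eu$, and hence about products of transfer matrices, where the large-deviation estimates of Section~\ref{sec_estimate_trans_mat} apply. First I would express the kernel of $R^{(0,L)}(E)$ in terms of a basis of solutions of the differential system, using the standard one-dimensional formula (Green's function built from Dirichlet solutions from the left and from the right endpoint of $I_L(0)=[-\ell L,\ell L]$). The crucial point is that $\mathbf{1}_{0,L}^{\mathrm{out}}$ and $\mathbf{1}_{0,L}^{\mathrm{in}}$ are supported, respectively, near the boundary $\pm\ell L$ and near the center $0$; so the relevant matrix element of the resolvent involves a product of transfer matrices over roughly $L/3$ consecutive cells, evaluated between a fixed vector (coming from the inner region) and vectors probing the outer region. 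By Proposition~\ref{prop_large_deviation}, applied with $p=1$ to the transfer matrix cocycle $U^{(n)}(E)$ over $n\sim L/3$ steps, such a matrix element is, with probability at least $1-\ee^{-\kappa\ell L/3}$, at least $\ee^{(\gamma_1(E)-\varepsilon')\ell L/3}$ in magnitude; since the resolvent kernel is essentially the inverse of such a quantity (the Wronskian-type denominator), this forces $\|\mathbf{1}_{0,L}^{\mathrm{out}} R^{(0,L)}(E) \mathbf{1}_{0,L}^{\mathrm{in}}\|\leq C\,\ee^{-(\gamma_1(E)-\varepsilon)\ell L/3}$ on that event.

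More precisely, I would proceed as follows. Step 1: set up the free-resolvent/Green's function representation for $R^{(0,L)}(E)$ on $[-\ell L,\ell L]$ with Dirichlet boundary conditions, writing it via solutions $u_-$ (satisfying the left boundary condition) and $u_+$ (satisfying the right boundary condition) and the associated Wronskian $W(E)$; the hypothesis $E\notin\sigma(H^{(0,L)}(\omega))$ guarantees $W(E)\neq 0$. Step 2: observe that $\mathbf{1}_{0,L}^{\mathrm{out}}R^{(0,L)}(E)\mathbf{1}_{0,L}^{\mathrm{in}}$ only sees the kernel for $x$ near $\pm\ell L$ and $y$ near $0$; bound its norm (using Lemma~\ref{lem_estim1} to control the pointwise size of $u_\pm$ and their derivatives in terms of the transfer-matrix norms, which are uniformly bounded for $E$ in a compact set) by $C\,\|U\|_{\text{something}}/|W(E)|$, where the numerator is a product of transfer matrices over $O(1)$ cells and the denominator $W(E)$ is comparable to a matrix element $|(\wedge^1 U^{(n_L)}(E)x,y)|$ with $n_L=\lfloor L/3\rfloor$ and suitable unit vectors $x,y$ coming from the boundary data of $u_-$ and $u_+$. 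Step 3: apply Proposition~\ref{prop_large_deviation} with $p=1$ and with $\varepsilon$ there chosen so that $\gamma_1(E)-\varepsilon_{\text{LDP}}>\gamma_1(E)-\varepsilon$ after absorbing the constants, to get that $|W(E)|\geq \ee^{(\gamma_1(E)-\varepsilon)\ell n_L}$ off an event of probability at most $\ee^{-\kappa\ell n_L}$. Step 4: on the complementary (good) event, combine Steps 2–3 to get $\|\mathbf{1}_{0,L}^{\mathrm{out}} R^{(0,L)}(E) \mathbf{1}_{0,L}^{\mathrm{in}}\|\leq \ee^{-(\gamma_1(E)-\varepsilon)\ell L/3}$ for $L\geq L_0$, which is exactly the statement that $I_L(0)$ is $(\omega,\gamma_1(E)-\varepsilon,E)$-good; set $\delta=\kappa/3$ (or any $\delta<\kappa\lfloor L/3\rfloor/(\ell L)$ valid for large $L$).

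The main obstacle, and the place requiring genuine care rather than bookkeeping, is Step 2: producing a clean upper bound on the resolvent kernel whose denominator is genuinely of the form $|(\wedge^1 U^{(n_L)}(E)x,y)|$ with \emph{deterministic} unit vectors $x,y$, so that Proposition~\ref{prop_large_deviation} applies verbatim. In the matrix-valued setting the "Wronskian" is an $N\times N$ (or $2N\times 2N$) object rather than a scalar, so one must be careful about which matrix element or which combination of transfer-matrix entries actually controls $\|R^{(0,L)}(E)\|$, and about uniformity in the boundary vectors. This is where I would lean on the precise structure worked out in \cite{DSS02} for the scalar case and in \cite{boumazathese}, adapting the estimates on $(\wedge^p U^{(n)}(E)x,y)$ to identify the correct pair $(x,y)\in L_1\times L_1$. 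The remaining inputs — uniform boundedness of $\|T_{\omega^{(n)}}(E)\|$ on compacts, Lemma~\ref{lem_estim1}, and the independence/identical-distribution of the transfer matrices — are routine. Once Step 2 is in place, Steps 1, 3 and 4 are direct applications of the already established large-deviation machinery.
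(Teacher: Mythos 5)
Your overall strategy coincides with the paper's: represent the Green kernel of $\HL$ through matrix solutions $U_\pm$ adapted to the Dirichlet data at $\pm\ell L$, identify a Wronskian denominator, and feed the result into the large deviation machinery of Section \ref{sec_estimate_trans_mat}, with Lemma \ref{lem_estim1} and the i.i.d.\ structure of the transfer matrices as auxiliary inputs. But the quantitative bookkeeping in your Step 2 is wrong, and the error sits exactly at the point you yourself single out as the main obstacle. For $x$ near $\pm\ell L$ and $y$ in the inner third, the kernel is $U_+(x)\,W(U_+,U_-)^{-1}\,{}^tU_-(y)$ (or its mirror image). The denominator here is not a transfer over $n_L\sim L/3$ cells: it is the full-length object $W(U_+,U_-)=-{}^tU_-(\ell L)$ with $U_-(\ell L)=(I_{\mathrm{N}},0)\,T_{-\ell L}^{\ell L}(E)\,{}^t(I_{\mathrm{N}},0)$, a $2L$-cell transfer element taken between \emph{deterministic} vectors. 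And the numerator ${}^tU_-(y)$ is not a product over $O(1)$ cells: it is itself a transfer over up to $4L/3$ cells and grows like $\ee^{(\gamma_1(E)+\varepsilon)\,4\ell L/3}$. If, as in your sketch, you try to realize the denominator as the transfer over the roughly $2L/3$ cells separating the inner region from the boundary, the vectors it acts on are the Cauchy data $(U_-(y),U_-'(y))$, which are random (they depend on the potential on $[-\ell L,y]$); Proposition \ref{prop_large_deviation} requires deterministic vectors, so it does not apply verbatim, and one cannot exclude that this random direction is the contracted one. This is not a matter of ``adapting the estimates'': it forces the different split actually used in the paper.

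Concretely, the paper proceeds by bounding numerator and denominator separately, each against deterministic data: it lower-bounds $||W(U_+,U_-)||$ by $\ee^{2(\gamma_1(E)-\varepsilon)\ell L}$, applying Proposition \ref{prop_large_deviation} with $p=1$ column by column to $T_{-\ell L}^{\ell L}(E)$ (which also answers your worry about the $N\times N$ Wronskian); it upper-bounds ${}^tU_-(y)$ on the inner third by $\ee^{(\gamma_1(E)+\varepsilon)\,4\ell L/3}$ using the two-sided large deviation estimate of Lemma \ref{lem_large_deviation1} --- an ingredient that never appears in your proposal, although without it the estimate cannot close; and it bounds $U_+(x)$ near $\ell L$ by a constant via Lemma \ref{lem_estim1}. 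The decay rate of order $\gamma_1(E)\ell L/3$ then emerges as the difference between the growth exponents, $2\gamma_1(E)-\frac{4}{3}\gamma_1(E)=\frac{2}{3}\gamma_1(E)$ per unit of $\ell L$, not as the inverse of an $L/3$-cell transfer. Finally, the passage from the pointwise kernel bound to the operator bound on $||\mathbf{1}_{0,L}^{\mathrm{out}}R^{(0,L)}(E)\mathbf{1}_{0,L}^{\mathrm{in}}||$ is done via Schur's test, a small step your sketch omits.
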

\vskip 2mm

\begin{proof}
We fix $E\in I$ and assume that $L\in 3\N^*$. We consider $U_+$ and $U_-$ two matrices in $\mathcal{M}_{\mathrm{N}}(\R)$, solutions of $\Ho U_{\pm}=EU_{\pm}$ and such that :
\begin{equation}\label{eq_proof_ILSE_0}
U_+(\ell L)=U_-(-\ell L)=0\quad \mathrm{and}\quad U_{+}^{'}(\ell L)=U_{-}^{'}(-\ell L)=I_{\mathrm{N}}.
\end{equation}
Let $W(U_+,U_-)$ denote the matrix-valued Wronskian of $U_+$ and $U_-$ defined by :
\begin{equation}\label{eq_proof_ILSE_1}
\forall x\in I_{L}(0),\ W(U_+,U_-)(x)={^t}U_{-}^{'}(x)U_+(x) -{^t}U_-(x)U_{+}^{'}(x). 
\end{equation}
From \cite[Proposition III.5.5]{CL90}, $W(U_+,U_-)$ is constant on $I_L(0)$ and it is non-invertible if and only if $E$ is an eigenvalue of $\HL$. We recall that the spectrum of $\HL$ consists on a discrete set of eigenvalues of $\HL$ with only $+\infty$ as accumulation point. Thus, $E\in \sigma(\HL)$ if and only if $W(U_+,U_-)$ is non-invertible in $\mathcal{M}_{\mathrm{N}}(\R)$. By \cite[Proposition III.5.6]{CL90}, the Green kernel of $\HL$ is given by :
\begin{equation}\label{eq_proof_ILSE_2}
\forall E\notin \sigma(\HL),\ G^{(L)}(E,x,y)= \left\lbrace \begin{array}{lcl}
U_-(x)W(U_-,U_+)^{-1}\ {^t}U_+(y) & \mathrm{if} & x\leq y \\
U_+(x)W(U_+,U_-)^{-1}\ {^t}U_-(y) & \mathrm{if} & x>y.
\end{array}\right.
\end{equation}
To estimate the norm of $R^{(0,L)}(E)$, we can estimate the norm of its kernel, the Green kernel  $G^{(L)}(E,x,y)$. We start by estimating the Wronskian. As it is constant, we have, using (\ref{eq_proof_ILSE_1}),
\begin{equation}\label{eq_proof_ILSE_3}
W(U_+,U_-)=W(U_+,U_-)(\ell L)=-{^t}U_-(\ell L),
\end{equation}
and $||W(U_+,U_-)||=||U_-(\ell L)||$. But, if $X,Y\in \mathcal{M}_{\mathrm{2N}\times \mathrm{N}}(\R)$, applying Proposition \ref{prop_large_deviation} for $p=1$, column by column, we have :
\begin{equation}\label{eq_proof_ILSE_4}
\exists n_0 \geq 1,\ \forall n\geq n_0,\ \mathsf{P}\left( ||{^t}Y U^{(n)}(E) X|| \geq \ee^{(\gamma_1(E)-\varepsilon) \ell n} \right) \geq 1-\ee^{-\kappa \ell n},
\end{equation}
with $\varepsilon >0$ and $\kappa>0$ as in the proposition. If $T_{-\ell L}^{\ell L}(E)$ denote the transfer matrix from $-\ell L$ to $\ell L$, we have $U_-(\ell L)=(I_{\mathrm{N}},0)T_{-\ell L}^{\ell L}(E) {^t}(I_{\mathrm{N}},0)$. Thus, applying (\ref{eq_proof_ILSE_4}) for $Y={^t}(I_{\mathrm{N}},0)$ and $X={^t}(I_{\mathrm{N}},0)$, and using the fact that the transfer matrices are \emph{i.i.d.},
\begin{equation}\label{eq_proof_ILSE_5}
\exists L_1 \geq 1,\ \forall L\geq L_1,\ \mathsf{P}\left( ||W(U_+,U_-)|| \geq \ee^{2(\gamma_1(E)-\varepsilon) \ell L} \right) \geq 1-\ee^{-2\kappa \ell L}.
\end{equation}
Let $x\in [\ell L-\ell,\ell L]$ and $y\in [-\ell\frac{L}{3},\ell\frac{L}{3}]$. Then $x>y$ and for $E\notin \sigma(\HL)$, $G^{(L)}(E,x,y)=U_+(x)W(U_+,U_-)^{-1}\ {^t}U_-(y)$. We apply Lemma \ref{lem_estim1} to $U_+$, for $x$ and $y=\ell L$, and using (\ref{eq_proof_ILSE_0}) :
\begin{equation}\label{eq_proof_ILSE_6}
||U_+(x)|| \leq C_1,
\end{equation}
with $C_1$ independent of $\omega$ and $L$.
\vskip 2mm

\noindent To estimate the norm of ${^t}U_-(y)$ is more complicated. We start by writing :
$$||{^t}U_-(y)|| \leq \left|\left| \left( \begin{smallmatrix}
U_{-}(y) \\
U_{-}^{'}(y)
\end{smallmatrix} \right) \right|\right| = \left|\left| T_{y}^{-\ell L}(E)\left( \begin{smallmatrix}
0 \\
I_{\mathrm{N}}
\end{smallmatrix} \right) \right|\right| \leq ||T_{y}^{[y+\ell]-\ell}(E)||\, \left|\left| T_{[y+\ell]-\ell}^{-\ell L}(E)\left( \begin{smallmatrix}
0 \\
I_{\mathrm{N}}
\end{smallmatrix} \right) \right|\right|.$$
By Lemma \ref{lem_estim1} for $y$ and $-\ell L$ and using (\ref{eq_proof_ILSE_0}), we have
\begin{equation}\label{eq_proof_ILSE_7}
||T_{y}^{[y+\ell]-\ell}(E)|| \leq C_2,
\end{equation}
with $C_2$ independent of $\omega$ and $L$. Now, using the \emph{i.i.d.} character of the transfer matrices and (\ref{eq_large_deviation1}) for $p=1$, we get the existence of $L_2\in \N$ such that,
\begin{equation}\label{eq_proof_ILSE_8}
\forall L\geq 3L_2,\ \mathsf{P}\left( \left|\left| T_{[y+\ell]-\ell}^{-\ell L}(E)\left( \begin{smallmatrix}
0 \\
I_{\mathrm{N}}
\end{smallmatrix} \right) \right|\right| \geq \ee^{(\gamma_1(E)+\varepsilon) 4\ell \frac{L}{3} } \right) \geq 1-\ee^{-2\kappa_0 \ell \frac{L}{3}}.
\end{equation}
If $C=\max(C_1,C_2)$, using (\ref{eq_proof_ILSE_5}), (\ref{eq_proof_ILSE_6}), (\ref{eq_proof_ILSE_7}) and (\ref{eq_proof_ILSE_8}), we get, for $L\geq \max(L_1,3L_3)$,
\begin{equation}\label{eq_proof_ILSE_9}
\mathsf{P}\left(E\notin \sigma(\HL)\ \mathrm{and}\ ||G^{(L)}(E,x,y)|| \leq C\ee^{-(\gamma_1(E)-10\varepsilon) \ell \frac{L}{3}} \right) \geq 1-\ee^{2\kappa \ell L}-\ee^{-2\kappa_0 \ell \frac{L}{3}}.
\end{equation}
Now, if we assume that $x\in [-\ell L,-\ell L +\ell]$ and $y\in [-\ell\frac{L}{3},\ell\frac{L}{3}]$, we have $x\leq y$ and for $E\notin \sigma(\HL)$, $G^{(L)}(E,x,y)=U_-(x)W(U_-,U_+)^{-1}\ {^t}U_+(y)$. In a similar way as we proved (\ref{eq_proof_ILSE_9}), we get the same estimate : 
\begin{equation}\label{eq_proof_ILSE_10}
\mathsf{P}\left(E\notin \sigma(\HL)\ \mathrm{and}\ ||G^{(L)}(E,x,y)|| \leq C\ee^{-(\gamma_1(E)-10\varepsilon) \ell \frac{L}{3}} \right) \geq 1-\ee^{2\kappa \ell L}-\ee^{-2\kappa_0 \ell \frac{L}{3}}.
\end{equation}
\vskip 2mm

\noindent We introduce the events $A_{L,\varepsilon}(E)$ and $B_{L,\varepsilon}(E)$ defined by :
\begin{eqnarray}
A_{L,\varepsilon}(E) & = & \Big\{\omega \in \Omega\ \Big|\ E\notin \sigma(\HL)\ \mathrm{and}\ \nonumber \\
 & & \sup_{x\in \R} \int_{\R} ||\mathbf{1}_{0,L}^{\mathrm{out}}(x) G^{(L)}(E,x,y) \mathbf{1}_{0,L}^{\mathrm{in}}(y)|| \dd y  \leq \ee^{-(\gamma_1(E)-\varepsilon) \ell \frac{L}{3}} \Big\} \nonumber 
\end{eqnarray}
and 
\begin{eqnarray}
B_{L,\varepsilon}(E) & = & \Big\{\omega \in \Omega\ \Big|\ E\notin \sigma(\HL)\ \mathrm{and}\
\nonumber \\
& & \sup_{y\in \R} \int_{\R} ||\mathbf{1}_{0,L}^{\mathrm{out}}(x) G^{(L)}(E,x,y) \mathbf{1}_{0,L}^{\mathrm{in}}(y)|| \dd x  \leq \ee^{-(\gamma_1(E)-\varepsilon) \ell \frac{L}{3}} \Big\}. \nonumber 
\end{eqnarray}
Then, from (\ref{eq_proof_ILSE_9}) and (\ref{eq_proof_ILSE_10}), we deduce that, for every $\varepsilon >0$, there exist $\delta >0$ and $L_0\in \N$, $L_0\geq \max(L_1,3L_2)$, such that,
$$\forall L\geq L_0,\ L\in 3\N,\ \mathsf{P}\left( A_{L,\varepsilon}(E) \cap B_{L,\varepsilon}(E) \right) \geq 1-\ee^{-\delta \ell L}.$$
To pass from this estimate on the kernel $G^{(L)}(E,x,y)$ of $R^{(0,L)}(E)$ to the estimate (\ref{eq_prop_ILSE}) on $R^{(0,L)}(E)$, we use Schur's test. It finishes the proof.
\end{proof}

\noindent It is interesting here to remark that the exponential decaying rate of the resolvent, and thus of the eigenfunctions of $\Ho$, is almost $\ell \gamma_1(E)$, the smallest positive Lyapunov exponent times the interaction length $\ell$. We have now all the requirements needed to prove Theorem \ref{thm_localization} and Theorem \ref{thm_localization_Hl}.

\begin{proof}[Proof of Theorem \ref{thm_localization}]
Let $I\subset \R$ be a compact interval, $\Sigma \cap I \neq \emptyset$, and let $\tilde{I}$ be an open interval, $I\subset \tilde{I}$, such that, for every $E\in \tilde{I}$, $G(E)$ is \LpSI. If we look at the proof of \cite[Theorem A.1]{GK04}, we see that the potential only appears through estimates of its absolute value and so, changing the absolute value into a matrix-norm in this proof, we get that $\Ho$ has the properties (SLI), (EDI), (NE) and (SGEE) on $I$. From the form of the potential of $\Ho$ and the assumption on independence of the $V_{\omega}^{(n)}$, $\Ho$ also has the property (IAD). By Theorem \ref{thm_wegner}, $\Ho$ verifies a Wegner estimate (W) on $I$, and by Proposition \ref{prop_ILSE}, it verifies an (ILSE) estimate on $I$. So we can apply Theorem \ref{thm_MSA_GK} for every $E_0 \in I$ to get $I\subset \Sigma_{\mathrm{MSA}}$. Then, applying Theorem \ref{thm_loc_MSA}, we get that  :
$$I\subset \Sigma_{\mathrm{EL}} \cap \Sigma_{\mathrm{SSEHSKD}} \subset \Sigma_{\mathrm{EL}} \cap \Sigma_{\mathrm{SDL}}.$$
It proves Theorem \ref{thm_localization}.
\end{proof}

\begin{proof}[Proof of Theorem \ref{thm_localization_Hl}]
For the point $(i)$, by Proposition \ref{prop_GE_SpN}, for $\ell <\ell_C$ and for every $E\in I(\ell,N)$, $G(E)$ is \LpSI. As $I(\ell,N)$ is a compact interval, we can apply Theorem \ref{thm_localization} to $\Hl$ on every open interval $I\subset I(\ell,N)$ such that $\Sigma \cap I \neq \emptyset$. We already remark in the introduction that such open intervals exist.

\noindent For the point $(ii)$, we use \cite[Proposition 2]{boumazampag} for the existence of a discrete set $\mathcal{S}\subset \R$ such that for every $E\in (2,+\infty)\setminus \mathcal{S}$, $G(E)$ is \LpSI. Once again, if $I\subset (2,+\infty)\setminus \mathcal{S}$ is a compact interval, as $\mathcal{S}$ is discrete, there always exists an open interval $\tilde{I} \subset (2,+\infty)\setminus \mathcal{S}$, $I\subset \tilde{I}$. Therefore, we can apply Theorem \ref{thm_localization} to $H_{1}(\omega)$ for $N=2$ on $I \subset (2,+\infty)\setminus \mathcal{S}$, $I$ compact and such that $\Sigma\cap I \neq \emptyset$.
\end{proof}

%-------------------------------------------------------------%

%-------------------------------------------------------------%

%-------------------------------------------------------------%
\end{document}